\definecolor{StringRed}{rgb}{.637,0.082,0.082}
\definecolor{CommentGreen}{rgb}{0.0,0.55,0.3}
\definecolor{KeywordBlue}{rgb}{0.0,0.3,0.55}
\definecolor{LinkColor}{rgb}{0.55,0.0,0.3}
\definecolor{CiteColor}{rgb}{0.55,0.0,0.3}
\definecolor{HighlightColor}{rgb}{0.0,0.0,0.0}
\definecolor{grey}{rgb}{0.5,0.5,0.5}
\definecolor{red}{rgb}{1,0,0}
\definecolor{darkgreen}{rgb}{0.0,0.7,0.0}
\newlength{\hatchspread}
\newlength{\hatchthickness}
\newlength{\hatchshift}
\newcommand{\hatchcolor}{}
\tikzset{hatchspread/.code={\setlength{\hatchspread}{#1}},
         hatchthickness/.code={\setlength{\hatchthickness}{#1}},
         hatchshift/.code={\setlength{\hatchshift}{#1}},
         hatchcolor/.code={\renewcommand{\hatchcolor}{#1}}}
\tikzset{hatchspread=10pt,
         hatchthickness=4pt,
         hatchshift=0pt,
         hatchcolor=black}
\tikzstyle{issuedStyle}=[pattern=custom north east lines, hatchcolor=colorISS, rounded corners]
\tikzstyle{coveredStyle}=[]
\newcommand{\setBox}[2]{
    \draw[#1] ($(#2)  + (-1.2,-0.4)$) rectangle ++(2.4,0.8);
}
\newcommand{\bigSetBox}[2]{
    \draw[#1] ($(#2)  + (-1.3,-0.5)$) rectangle ++(2.6,1.0);
}
\newcommand{\coveredBox}[1]{
  \setBox{coveredStyle}{#1}
}
\newcommand{\issuedBox}[1]{
  \setBox{issuedStyle}{#1}
}
\newcommand{\issuedCoveredBox}[1]{
  \bigSetBox{coveredStyle}{#1}
  \issuedBox{#1}
}
\newcommand{\nlSetBox}[2]{
    \draw[#1] ($(#2)  + (-0.85,-0.55)$) rectangle ++(1.7,1.1);
}
\newcommand{\nlBigSetBox}[2]{
    \draw[#1] ($(#2)  + (-0.95,-0.65)$) rectangle ++(1.9,1.3);
}
\newcommand{\nlIssuedBox}[1]{
  \nlSetBox{issuedStyle}{#1}
}
\newcommand{\nlCoveredBox}[1]{
  \nlSetBox{coveredStyle}{#1}
}
\newcommand{\nlIssuedCoveredBox}[1]{
  \nlBigSetBox{coveredStyle}{#1}
  \nlIssuedBox{#1}
}
\newcommand{\nlCoveredDoubleBox}[2]{
    \draw[coveredStyle] ($(#1)  + (0.95,0.6)$) rectangle ($(#2)  + (-0.95,-0.6)$);
}
\tikzstyle{extractStyle}=[color=black,rounded corners=3pt,dashed,fill=green!10]
\newcommand{\smallIssuedBoxText}{{\protect\tikz \protect\draw[issuedStyle] (0,0) rectangle ++(0.35,0.35);}}
\newcommand{\smallCoveredBoxText}{{\protect\tikz \protect\draw[coveredStyle] (0,0) rectangle ++(0.35,0.35);}}
\newcommand{\smallXBoxText}{{\protect\tikz \protect\draw[extractStyle] (0,0) rectangle ++(0.35,0.35);}}
  \newcommand{\citeapp}[2]{\cite[\cref{#1}]{appendix}}  
  \newcommand{\citeapp}[2]{\cite[#2]{appendix}}
\Crefname{section}{Section}{Sections}
\Crefname{subsection}{Section}{Sections}
\crefname{figure}{\text{Fig.}}{\text{Figures}}
\Crefname{figure}{\text{Figure}}{\text{Figures}}
\crefname{corollary}{\text{Corollary}}{\text{corollaries}}
\Crefname{corollary}{\text{Corollary}}{\text{Corollaries}}
\crefname{lemma}{\text{Lemma}}{\text{Lemmas}}
\Crefname{lemma}{\text{Lemma}}{\text{Lemmas}}
\crefname{proposition}{\text{Prop.}}{\text{Prop.}}
\Crefname{proposition}{\text{Proposition}}{\text{Propositions}}
\crefname{definition}{\text{Def.}}{\text{Definitions}}
\Crefname{definition}{\text{Definition}}{\text{Definitions}}
\crefname{notation}{\text{Notation}}{\text{Notations}}
\Crefname{notation}{\text{Notation}}{\text{Notations}}
\crefname{theorem}{\text{Theorem}}{\text{Theorems}}
\Crefname{theorem}{\text{Theorem}}{\text{Theorems}}
\crefname{conjecture}{\text{Conj.}}{\text{Conjectures}}
\Crefname{conjecture}{\text{Conjecture}}{\text{Conjectures}}
\newcommand{\textcode}[1]{\texorpdfstring{\texttt{#1}}{#1}}
\newcommand{\kw}[1]{\textbf{\textcode{#1}}}
\newcommand{\ie}{\emph{i.e.,} }
\newcommand{\eg}{\emph{e.g.,} }
\newcommand{\sth}{\emph{s.t.} }
\newcommand{\etal}{\emph{et~al.}}
\newcommand{\inarrC}[1]{\begin{array}{@{}c@{}}#1\end{array}}
\newcommand{\inarr}[1]{\begin{array}{@{}l@{}}#1\end{array}}
\newcommand{\inarrII}[2]{\begin{array}{@{}l@{~~}||@{~~}l@{}}\inarr{#1}&\inarr{#2}\end{array}}
\newcommand{\inarrIII}[3]{\begin{array}{@{}l@{~~}||@{~~}l@{~~}||@{~~}l@{}}\inarr{#1}&\inarr{#2}&\inarr{#3}\end{array}}
\renewcommand{\comment}[1]{\color{teal}{~~\texttt{/\!\!/}\textit{#1}}}
\newcommand{\nocomment}[1]{\color{red!60!black}{~~\texttt{/\!\!/}\textit{#1}}}
\newcommand{\set}[1]{\{{#1}\}}
\newcommand{\fn}{\rightarrow}
\renewcommand{\st}{\; | \;}
\newcommand{\N}{{\mathbb{N}}}
\newcommand{\dom}[1]{\textit{dom}{({#1})}}
\newcommand{\codom}[1]{\textit{codom}{({#1})}}
\newcommand{\tup}[1]{{\langle{#1}\rangle}}
\newcommand{\nin}{\not\in}
\newcommand{\suq}{\subseteq}
\newcommand{\rst}[1]{|_{#1}}
\newcommand{\defeq}{\triangleq}
\newcommand{\seq}{\mathbin{;}}
\newcommand\TidSet{T}
\colorlet{colorPO}{gray!60!black}
\colorlet{colorCF}{red!60!black}
\colorlet{colorECF}{red!60!black}
\colorlet{colorJF}{blue!60!black}
\colorlet{colorRF}{green!60!black}
\colorlet{colorEW}{brown}
\colorlet{colorMO}{orange}
\colorlet{colorFR}{purple}
\colorlet{colorECO}{red!80!black}
\colorlet{colorSYN}{green!40!black}
\colorlet{colorHB}{blue}
\colorlet{colorPPO}{magenta}
\colorlet{colorPB}{olive}
\colorlet{colorSBRF}{olive}
\colorlet{colorRMW}{olive!70!black}
\colorlet{colorRS}{blue}
\colorlet{colorRELEASE}{blue!70!black}
\colorlet{colorSC}{olive!40!black}
\colorlet{colorPSC}{olive!40!black}
\colorlet{colorREL}{olive}
\colorlet{colorCONFLICT}{olive}
\colorlet{colorRACE}{olive}
\colorlet{colorWB}{orange!70!black}
\colorlet{colorSCB}{violet}
\colorlet{colorDETOUR}{teal}
\colorlet{colorDEPS}{violet}
\colorlet{colorFENCE}{olive}
\colorlet{colorCOV}{magenta!20}
\colorlet{colorISS}{blue!10!white}
\colorlet{colorVF}{purple!70!black}
\tikzset{
   every path/.style={>=stealth},
   po/.style={->,color=colorPO,shorten >=-0.5mm,shorten <=-0.5mm},
   sw/.style={->,color=colorSYN,shorten >=-0.5mm,shorten <=-0.5mm},
   sc/.style={->,color=colorSC,dotted,thick,shorten >=-0.5mm,shorten <=-0.5mm},
   rf/.style={->,color=colorRF,dashed,,shorten >=-0.5mm,shorten <=-0.5mm},
   hb/.style={->,color=colorHB,thick,shorten >=-0.5mm,shorten <=-0.5mm},
   mo/.style={->,color=colorMO,dotted,very thick,shorten >=-0.5mm,shorten <=-0.5mm},
   co/.style={->,color=colorMO,dotted,thick,shorten >=-0.5mm,shorten <=-0.5mm},
   no/.style={->,dotted,thick,shorten >=-0.5mm,shorten <=-0.5mm},
   fr/.style={->,color=colorFR,dotted,thick,shorten >=-0.5mm,shorten <=-0.5mm},
   deps/.style={->,color=colorDEPS,dotted,thick,shorten >=-0.5mm,shorten <=-0.5mm},
   ppo/.style={->,color=colorPPO,shorten >=-0.5mm,shorten <=-0.5mm},
   rmw/.style={->,color=colorRMW,thick,shorten >=-0.5mm,shorten <=-0.5mm},
   detour/.style={->,color=colorDETOUR,shorten >=-0.5mm,shorten <=-0.5mm},
   cf/.style={-,snake=zigzag,segment amplitude=1pt,segment length=3pt,colorCF},
   ew/.style={<->,dashed,,shorten >=-0.5mm,shorten <=-0.5mm,color=colorEW},
   jf/.style={->,color=colorJF,dotted,thick,shorten >=-0.5mm,shorten <=-0.5mm},
   vf/.style={->,color=colorVF,dashed,shorten >=-0.5mm,shorten <=-0.5mm},
}
\newcommand{\rlx}{\mathtt{rlx}}
\newcommand{\rel}{{\mathtt{rel}}}
\newcommand{\acq}{{\mathtt{acq}}}
\newcommand{\acqrel}{{\mathtt{acqrel}}}
\newcommand{\sco}{{\mathtt{sc}}}
\newcommand{\full}{{\mathtt{sy}}}
\newcommand{\ld}{{\mathtt{ld}}}
\newcommand{\isync}{{\mathtt{isync}}}
\newcommand{\lwsync}{{\mathtt{lwsync}}}
\newcommand{\sync}{{\mathtt{sync}}}
\newcommand{\isex}{{\mathtt{ex}}}
\newcommand{\rlab}[3]{{\lR}^{#1}({#2},{#3})}
\newcommand{\erlab}[4]{
\ifthenelse{\equal{#2}{}}{{\lR}^{#1}_{#4}}
{\ifthenelse{\equal{#3}{}\and\equal{#1}{}\and\equal{#2}{}}{{\lR}^({#2})}
{\ifthenelse{\equal{#3}{}}{{\lR}^{#1}_{#4}({#2})}
{{\lR}^{#1}_{#4}({#2},{#3})}}}}
\newcommand{\ewlab}[4]{
\ifthenelse{\equal{#2}{}}{{\lW}^{#1}_{#4}}
{\ifthenelse{\equal{#3}{}\and\equal{#1}{}\and\equal{#2}{}}{{\lW}^({#2})}
{\ifthenelse{\equal{#3}{}}{{\lW}^{#1}_{#4}({#2})}
{{\lW}^{#1}_{#4}({#2},{#3})}}}}
\newcommand{\prlab}[2]{{\lR}({#1},{#2})}
\newcommand{\wlab}[3]{{\lW}^{#1}({#2},{#3})}
\newcommand{\pwlab}[2]{{\lW}({#1},{#2})}
\newcommand{\flab}[1]{{\lF}^{#1}}
\newcommand{\lE}{{\mathtt{E}}}
\newcommand{\lEo}{\Init}
\newcommand{\lR}{{\mathtt{R}}}
\newcommand{\lW}{{\mathtt{W}}}
\newcommand{\lA}{{\mathtt{A}}}
\newcommand{\lQ}{{\mathtt{Q}}}
\newcommand{\lL}{{\mathtt{L}}}
\newcommand{\lF}{{\mathtt{F}}}
\newcommand{\lRMWc}{{\mathtt{RMW}}}
\newcommand{\lLAB}{{\mathtt{lab}}}
\newcommand{\lTID}{{\mathtt{tid}}}
\newcommand{\lSN}{{\mathtt{sn}}}
\newcommand{\lTYP}{{\mathtt{typ}}}
\newcommand{\lLOC}{{\mathtt{loc}}}
\newcommand{\lMOD}{{\mathtt{mod}}}
\newcommand{\lVAL}{{\mathtt{val}}}
\newcommand{\lPO}{{\color{colorPO}\mathtt{po}}}
\newcommand{\lPOimm}{{\color{colorPO}\mathtt{po_{imm}}}}
\newcommand{\lCF}{{\color{colorCF}\mathtt{cf}}}
\newcommand{\lCFimm}{{\color{colorCF}\mathtt{cf_{imm}}}}
\newcommand{\lECF}{{\color{colorECF}\mathtt{ecf}}}
\newcommand{\lJF}{{\color{colorJF} \mathtt{jf}}}
\newcommand{\lRF}{{\color{colorRF} \mathtt{rf}}}
\newcommand{\lRMW}{{\color{colorRMW} \mathtt{rmw}}}
\newcommand{\lEW}{{\color{colorEW} \mathtt{ew}}}
\newcommand{\lCO}{{\color{colorMO} \mathtt{co}}}
\newcommand{\lFR}{{\color{colorFR} \mathtt{fr}}}
\newcommand{\lECO}{{\color{colorECO} \mathtt{eco}}}
\newcommand{\lRELEASE}{{\color{colorRELEASE}\mathtt{release}}}
\newcommand{\lSW}{{\color{colorSYN}\mathtt{sw}}}
\newcommand{\lHB}{{\color{colorHB}\mathtt{hb}}}
\newcommand{\lDOB}{{\mathtt{dob}}}
\newcommand{\lBOB}{{\mathtt{bob}}}
\newcommand{\lAOB}{{\mathtt{aob}}}
\newcommand{\lOBS}{{\mathtt{obs}}}
\newcommand{\lVF}{{\color{colorVF}\mathtt{vf}}}
\newcommand{\lSRF}{{\color{colorJF}\mathtt{sjf}}}
\newcommand{\lSCB}{{\color{colorSCB} \mathtt{scb}}}
\newcommand{\lPSC}{{\color{colorPSC} \mathtt{psc}}}
\newcommand{\lPSCB}{\lPSC_{\rm base}}
\newcommand{\lPSCF}{\lPSC_\lF}
\newcommand{\lCTRL}{{{\color{colorDEPS}\mathtt{ctrl}}}}
\newcommand{\lDATA}{{{\color{colorDEPS}\mathtt{data}}}}
\newcommand{\lADDR}{{{\color{colorDEPS}\mathtt{addr}}}}
\newcommand{\lPPO}{{{\color{colorPPO}\mathtt{ppo}}}}
\newcommand{\lRMWDEP}{{{\color{colorDEPS}\mathtt{casdep}}}}
\newcommand{\lmakeE}[1]{#1\mathtt{e}}
\newcommand{\lJFE}{\lmakeE{\lJF}}
\newcommand{\lRFE}{\lmakeE{\lRF}}
\newcommand{\lCOE}{\lmakeE{\lCO}}
\newcommand{\lFRE}{\lmakeE{\lFR}}
\newcommand{\lmakeI}[1]{#1\mathtt{i}}
\newcommand{\lRFI}{\lmakeI{\lRF}}
\newcommand{\lCOI}{\lmakeI{\lCO}}
\newcommand{\lFENCE}{\mathtt{fence}}
\newcommand{\Tid}{\mathsf{Tid}}
\newcommand{\Loc}{\mathsf{Loc}}
\newcommand{\Val}{\mathsf{Val}}
\newcommand{\Lab}{\mathsf{Lab}}
\newcommand{\Init}{\mathsf{Init}}
\newcommand{\tsomodel}{_{\TSO}}
\newcommand{\lHBTSO}{\lHB\tsomodel}
\newcommand{\lEHBTSO}{{\color{colorHB}\mathtt{ehb}}\tsomodel}
\newcommand{\lPPOTSO}{\lPPO\tsomodel}
\newcommand{\lFENCETSO}{\lFENCE\tsomodel}
\newcommand{\lMFENCE}{\mathtt{MFENCE}}
\newcommand{\lIFENCE}{\mathtt{implied\_fence}\tsomodel}
\newcommand{\scmodel}{_{\mathsf{SC}}}
\newcommand{\POWER}{\ensuremath{\mathsf{POWER}}\xspace}
\newcommand{\IMM}{\ensuremath{\mathsf{IMM}}\xspace}
\newcommand{\IMMsc}{\ensuremath{\mathsf{IMM\scmodel}}\xspace}
\newcommand{\ARM}{\ensuremath{\mathsf{ARM}}\xspace}
\newcommand{\ARMs}{\ensuremath{\mathsf{ARMv7}}\xspace}
\newcommand{\ARMe}{\ensuremath{\mathsf{ARMv8}}\xspace}
\newcommand{\TSO}{\ensuremath{\mathsf{TSO}}\xspace}
\newcommand{\weakestmo}{\ensuremath{\mathsf{Weakestmo}}\xspace}
\newcommand{\Intel}{\ensuremath{\mathsf{x86}}\xspace}
\newcommand{\prog}{\mathit{prog}}
\newcommand{\compile}[1]{{(\!|} #1 {|\!)}}
\newcommand{\lDMBSY}{\mathtt{F^{\full}}}
\newcommand{\lDMBLD}{\mathtt{F^{\ld}}}
\newcommand{\Promise}{\ensuremath{\mathsf{PS}}\xspace}
\newcommand{\loc}{x}
\newcommand{\locx}{x}
\newcommand{\locy}{y}
\newcommand{\locz}{z}
\newcommand{\tid}{t}
\newcommand{\simrel}{\mathcal{I}}
\newcommand{\simrelCert}{\mathcal{I}^{\rm \bf cert}}
\newcommand{\pstate}{\sigma}
\newcommand{\readInst }[3]{#2 \;{:=}\;[#3]^{#1}}
\newcommand{\fenceInst}[1]{\kw{fence}^{#1}}
\newcommand{\writeInst}[3]{[#2]^{#1}\;{:=}\;#3}
\newcommand{\incInst}[6]{#3 \;{:=}\;\kw{FADD}_{#6}^{#1#2}({#4},{#5})}
\newcommand{\casInst}[7]{#3 \;{:=}\;\kw{CAS}_{#7}^{#1#2}({#4},{#5},{#6})}
\newcommand{\RC}{\ensuremath{\mathsf{RC11}}\xspace}
\newcounter{mylabelcounter}
\newcommand{\labelAxiom}[2]{%
\hfill{\normalfont\textsc{(#1)}}\refstepcounter{mylabelcounter}
\immediate\write\@auxout{
  \string\newlabel{#2}{{\unexpanded{\normalfont\textsc{#1}}}{\thepage}{{\unexpanded{\normalfont\textsc{#1}}}}{mylabelcounter.\number\value{mylabelcounter}}{}}
}
}
\newcommand{\squishlist}[1][$\bullet$]{
 \begin{list}{#1}
  { \setlength{\itemsep}{0pt}
     \setlength{\parsep}{0pt}
     \setlength{\topsep}{1pt}
     \setlength{\partopsep}{0pt}
     \setlength{\leftmargin}{1.2em}
     \setlength{\labelwidth}{0.5em}
     \setlength{\labelsep}{0.4em} } }
\newcommand{\squishend}{
  \end{list}  }
\newcommand{\IssuedSet}{I}
\newcommand{\CoveredSet}{C}
\newcommand{\determined}{\mathtt{determined}}
\newcommand{\myrightarrow}[1]{
\ifthenelse{\equal{#1}{}}{\xrightarrow{}}{\mathrel{\raisebox{-2pt}{$\xrightarrow{#1}$}}}}
\newcommand{\ese}[3]{e^{#1}_{#2#3}}
\newcommand{\mese}[3]{\ese{#1}{#2}{#3}\colon}
\newcommand{\TC}{TC}
\newcommand\doubleplus{+\kern-1.3ex+\kern0.8ex}
\newcommand{\TCgen}[1]{\ensuremath{\TC_{{\mathsf{#1}}}}\xspace}
\newcommand{\TCa}{\TCgen{a}}
\newcommand{\TCb}{\TCgen{b}}
\newcommand{\TCc}{\TCgen{c}}
\newcommand{\TCd}{\TCgen{d}}
\newcommand{\TCe}{\TCgen{e}}
\newcommand{\TCf}{\TCgen{f}}
\newcommand{\TCinit}[1]{\TC_{\mathrm{init}}({#1})}
\newcommand{\TCfinal}[1]{\TC_{\mathrm{final}}({#1})}
\newcommand{\lVIS}{{\mathtt{Vis}}}
\newcommand{\lCONT}{{\textrm{K}}}
\newcommand{\ea}{\mathtt{s2g}}
\newcommand{\lang}{\ensuremath{\mathsf{L}}\xspace}
\newcommand{\ES}{S}
\newcommand{\ESinit}{S_{\rm init}}
\newcommand{\ESgen}[1]{\ensuremath{\ES_{\mathsf{#1}}}\xspace}
\newcommand{\ESa}{\ESgen{a}}
\newcommand{\ESb}{\ESgen{b}}
\newcommand{\ESc}{\ESgen{c}}
\newcommand{\ESd}{\ESgen{d}}
\newcommand{\ESe}{\ESgen{e}}
\newcommand{\ESf}{\ESgen{f}}
\newcommand{\thread}[1]{{\mathtt{thread}(#1)}}
\newcommand{\fmap}[1]{{\llceil #1 \rrceil}}
\newcommand{\fcomap}[1]{{\llfloor #1 \rrfloor}}
\newcommand{\SX}{X}
\newcommand{\SXgen}[1]{\SX_{\mathsf{#1}}}
\newcommand{\SXinit}{\SXgen{\Init}}
\newcommand{\SXa}{\SXgen{a}}
\newcommand{\SXb}{\SXgen{b}}
\newcommand{\SXc}{\SXgen{c}}
\newcommand{\SXd}{\SXgen{d}}
\newcommand{\SXf}{\SXgen{f}}
\newcommand{\SBr}{Br}
\newcommand{\SBrgen}[1]{\SBr_{\mathsf{#1}}}
\newcommand{\SBrb}{\SBrgen{b}}
\newcommand{\SBrc}{\SBrgen{c}}
\newcommand{\Glb}{G_{\mathtt{LB}}}
\newcommand{\esstepcons}[1]{\xrightarrow{#1}}
\newcommand{\threadstep}[2]{\xrightarrow{#2}_{#1}}
\newcommand{\travstep}[1]{\longrightarrow_{#1}}
\lstdefinelanguage{Coq}{ 
%
mathescape=true,
%
texcl=false, 
%
morekeywords=[1]{Section, Module, End, Require, Import, Export,
  Variable, Variables, Parameter, Parameters, Axiom, Hypothesis,
  Hypotheses, Notation, Local, Tactic, Reserved, Scope, Open, Close,
  Bind, Delimit, Definition, Let, Ltac, Fixpoint, CoFixpoint, Add,
  Morphism, Relation, Implicit, Arguments, Unset, Contextual,
  Strict, Prenex, Implicits, Inductive, CoInductive, Record,
  Structure, Canonical, Coercion, Context, Class, Global, Instance,
  Program, Infix, Theorem, Lemma, Corollary, Proposition, Fact,
  Remark, Example, Proof, Goal, Save, Qed, Defined, Hint, Resolve,
  Rewrite, View, Search, Show, Print, Printing, All, Eval, Check,
  Projections, inside, outside, Def},
%
morekeywords=[2]{forall, exists, exists2, fun, fix, cofix, struct,
  match, with, end, as, in, return, let, if, is, then, else, for, of,
  nosimpl, when},
%
morekeywords=[3]{Type, Prop, Set, true, false, option},
%
morekeywords=[4]{pose, set, move, case, elim, apply, clear, hnf,
  intro, intros, generalize, rename, pattern, after, destruct,
  induction, using, refine, inversion, injection, rewrite, congr,
  unlock, compute, ring, field, fourier, replace, fold, unfold,
  change, cutrewrite, simpl, have, suff, wlog, suffices, without,
  loss, nat_norm, assert, cut, trivial, revert, bool_congr, nat_congr,
  symmetry, transitivity, auto, split, left, right, autorewrite},
%
morekeywords=[5]{by, done, exact, reflexivity, tauto, romega, omega,
  assumption, solve, contradiction, discriminate},
%
morekeywords=[6]{do, last, first, try, idtac, repeat},
%
morecomment=[s]{(*}{*)},
%
showstringspaces=false,
%
morestring=[b]",
morestring=[d],
%
tabsize=3,
%
extendedchars=false,
%
sensitive=true,
%
breaklines=false,
%
%
captionpos=b,
%
columns=[l]flexible,
%
identifierstyle={\ttfamily},
keywordstyle=[1]{\textbf},
keywordstyle=[2]{\textbf},
keywordstyle=[3]{\textbf},
keywordstyle=[4]{\textbf},
keywordstyle=[5]{\textbf},
stringstyle=\ttfamily,
commentstyle={\ttfamily},
%
literate=
    {forall}{{{$\forall\;$}}}1
    {exists}{{$\exists\;$}}1
    {<-}{{$\leftarrow\;$}}1
    {=>}{{$\Rightarrow\;$}}1
    {==}{{\code{==}\;}}1
    {==>}{{\code{==>}\;}}1
    {->}{{$\rightarrow\;$}}1
    {<->}{{$\leftrightarrow\;$}}1
    {<==}{{$\leq\;$}}1
    {\#}{{$^\star$}}1 
    {\\o}{{$\circ\;$}}1 
    {\@}{{$\cdot$}}1 
    {\/\\}{{$\wedge\;$}}1
    {\\\/}{{$\vee\;$}}1
    {++}{{\code{++}}}1
    {\@\@}{{$@$}}1
    {\\mapsto}{{$\mapsto\;$}}1
    {\\hline}{{\rule{\linewidth}{0.5pt}}}1
}[keywords,comments,strings]
\newcommand{\cshift}{+ 0.1}
\newcommand{\dshift}{- 0.1}
\newcommand{\floor}[1]{\left\lfloor #1 \right\rfloor}
\title{Reconciling Event Structures with Modern Multiprocessors}
\author{Evgenii Moiseenko}{St. Petersburg University, Russia and JetBrains Research, Russia}{e.moiseenko@2012.spbu.ru}{}{}
\author{Anton Podkopaev}{National Research University Higher School of Economics, Russia and MPI-SWS, Germany and JetBrains Research, Russia}{podkopaev@mpi-sws.org}{}{}
\author{Ori Lahav}{Tel Aviv University, Israel}{orilahav@tau.ac.il}{}{}
\author{Orestis Melkonian}{University of Edinburgh, UK}{melkon.or@gmail.com}{}{}
\author{Viktor Vafeiadis}{MPI-SWS, Germany}{viktor@mpi-sws.org}{}{}
\authorrunning{E. Moiseenko et al.}
\keywords{Weak Memory Consistency, Event Structures, IMM, Weakestmo.}
\begin{document}

\maketitle              
\begin{abstract}
\weakestmo is a recently proposed memory consistency model
that uses event structures to resolve the infamous ``out-of-thin-air'' problem
and to enable efficient compilation to hardware.
Nevertheless, this latter property---compilation correctness---has not yet been
formally established.

This paper closes this gap by establishing correctness
of the intended compilation schemes from \weakestmo to a wide range
of formal hardware memory models (\Intel, \POWER, \ARMs, \ARMe)
in the Coq proof assistant.
Our proof is the first that establishes correctness of compilation of
an event-structure-based model that forbids ``out-of-thin-air'' behaviors,
as well as the first mechanized compilation proof of a weak memory model
supporting sequentially consistent accesses to such a range of hardware platforms.
Our compilation proof goes via the recent Intermediate Memory Model (\IMM),
which we suitably extend with sequentially consistent accesses.

\end{abstract}

\section{Introduction}
\label{sec:introduction}


A major research problem in concurrency semantics is to
develop a weak memory model that allows load-to-store reordering
(a.k.a.\ \emph{load buffering}, \textsf{LB})
and compiler optimizations (\eg elimination of fake dependencies),
while forbidding ``out-of-thin-air'' behaviors
\cite{Pichon-al:POPL16,Jeffrey-Riely:LICS16,Boehm-Demsky:MSPC14,Lahav-al:PLDI17}.

The problem can be illustrated with the following two programs,
which access locations $x$ and $y$ initialized to $0$.
The annotated outcome $a=b=1$ ought to be allowed for \ref{ex:LB-fake}
because $1 + a * 0$ can be optimized to $1$ and then the instructions
of thread 1 executed out of order.
In contrast, it should be forbidden for \ref{ex:LB-TA},
since no optimizations are applicable.
\begin{center}
\begin{minipage}{.5\linewidth}
{\small
\begin{equation}
\inarrII{
  \readInst{}{a}{x} \comment{1} \\
  \writeInst{}{y}{1 + a * 0} \\
}{\readInst{}{b}{y} \comment{1} \\
  \writeInst{}{x}{b}  \\
}
\tag{LB-fake}\label{ex:LB-fake}
\end{equation}
}
\end{minipage}
\hfill\vline\hfill
\begin{minipage}{.48\linewidth}
{\small
\begin{equation}
\inarrII{
  \readInst{}{a}{x} \nocomment{1} \\
  \writeInst{}{y}{a} \\
}{\readInst{}{b}{y} \nocomment{1} \\
  \writeInst{}{x}{b}  \\
}
\tag{LB-data}\label{ex:LB-TA}
\end{equation}
}
\end{minipage}
\end{center}


Among the proposed models that correctly distinguish between these two programs
is the recent \weakestmo model \cite{Chakraborty-Vafeiadis:POPL19}.
\weakestmo was developed in response to certain limitations of earlier models,
such as the ``promising semantics'' of Kang \etal~\cite{Kang-al:POPL17}, namely that
(\emph{i}) they did not cover the whole range of C/C++ concurrency features and that
(\emph{ii}) they did not support the intended compilation schemes to hardware.

Being flexible in its design, \weakestmo addresses the former point.
It supports all usual features of the C/C++11 model~\cite{Batty-al:POPL11}
and can easily be adapted to support any new concurrency features
that may be added in the future.
It does not, however, fully address the latter point.
Due to the difficulty of establishing correctness of the intended compilation schemes
to hardware architectures that permit load-store reordering (\ie \POWER, \ARMs, \ARMe),
Chakraborty and Vafeiadis~\cite{Chakraborty-Vafeiadis:POPL19}
only establish correctness of suboptimal schemes
that add (unnecessary) explicit fences to prevent load-store reordering.


In this paper, we address this major limitation of the \weakestmo paper.
We establish in Coq correctness of the intended compilation schemes
to a wide range of hardware architectures that includes the major ones:
\Intel-\TSO~\cite{Owens-al:TPHOL09}, \POWER~\cite{Alglave-al:TOPLAS14},
\ARMs~\cite{Alglave-al:TOPLAS14}, \ARMe~\cite{Pulte-al:POPL18}.
The compilation schemes, whose correctness we prove, do not require any fences
or fake dependencies for relaxed accesses.
Because of a technical limitation of our setup (see \cref{sec:related}), however,
compilation of read-modify-write (RMW) accesses to \ARMe uses a load-reserve/store-conditional loop
(similar to that of \ARMs and \POWER) as opposed to the newly introduced \ARMe
instructions for certain kinds of RMWs.


The main challenge in this proof is to reconcile the
different ways in which hardware models and \weakestmo allow load-store reordering.
Unlike most models at the programming language level, hardware models (such as
\ARMe) do not execute instructions in sequence; they instead keep track of
dependencies between instructions and ensure that no dependency cycles ever arise
in a single execution.
In contrast, \weakestmo executes instructions in order, but simultaneously considers
multiple executions to justify an execution where a load reads a value
that indirectly depends upon a later store.
Technically, these multiple executions together form an \emph{event structure},
upon which \weakestmo places various constraints.


\begin{wrapfigure}{r}{0.45\textwidth}\vspace{-10pt}\centering
\begin{tikzpicture}[auto,nd/.style={inner sep=2pt},lab/.style={inner sep=0pt}]
\node[nd] (ph) {\IMMsc};
\node[nd,below right=-1mm and 8mm of ph] (arm7) {\ARMs};
\node[nd,above right=-1mm and 8mm of ph] (pow)  {\POWER};
\node[nd,above=2.5mm of pow ] (tso)  {\Intel-\TSO};
\node[nd,below=2.5mm of arm7] (arm8) {\ARMe};
\node[nd,above left=2.5mm and 6mm of ph] (wmo) {\weakestmo};
\node[nd,below left=2.5mm and 6mm of ph] (rc11) {$\mathsf{C11}$};
\draw[->,out=  0,in=170,very thick] (wmo) edge (ph);
\draw[->,out=  0,in=190] (rc11) edge (ph);
\draw[->,out= 60,in=180] (ph) edge (tso);
\draw[->,out= 45,in=190] (ph) edge (tso);
\draw[->,out= 10,in=180] (ph) edge  (pow);
\draw[->,out= 4,in=190] (ph) edge  (pow);
\draw[->,out= -3,in=185] (ph) edge (arm7);
\draw[->,out= 3,in=175] (ph) edge (arm7);
\draw[->,out=-60,in=180] (ph) edge (arm8);
\end{tikzpicture}
\caption{Results proved in this paper.}
\label{fig:res}
\end{wrapfigure}

The high-level proof structure is shown in \cref{fig:res}.
We reuse \IMM, an \emph{intermediate memory model},
introduced by Podkopaev \etal~\cite{Podkopaev-al:POPL19}
as an abstraction over all major existing hardware memory models.
To support \weakestmo compilation,
we extend \IMM with \emph{sequentially consistent} (SC) accesses
following the \RC model~\cite{Lahav-al:PLDI17}.
As \IMM is very much a hardware-like model (\eg it tracks dependencies),
the main result is compilation from \weakestmo to \IMM (indicated by the bold arrow).
The other arrows in the figure are extensions of previous results
to account for SC accesses,
while double arrows indicate results for two compilation schemes.


The complexity of the proof is also evident from the size of the Coq development.
We have written about 30K lines of Coq definitions and proof scripts on top of
an existing infrastructure of about another 20K lines (defining \IMM, the
aforementioned hardware models and many lemmas about them).
As part of developing the proof, we also had to mechanize the \weakestmo
definition in Coq and to fix some minor deficiencies in the original definition,
which were revealed by our proof effort.


To the best of our knowledge,
our proof is the first proof of correctness of compilation of an event-structure-based memory model.
It is also the first mechanized compilation proof of a weak memory model
supporting sequentially consistent accesses to such a range of hardware architectures.
The latter, although fairly straightforward in our case,
has had a history of wrong compilation correctness arguments (see \cite{Lahav-al:PLDI17} for details).

\subparagraph*{Outline}
We start with an informal overview of \IMM, \weakestmo, and our compilation proof (\cref{sec:overview}).
We then present a fragment of \weakestmo formally (\cref{sec:weakestmo})
and its compilation proof (\cref{sec:wmoproof}).
Subsequently, we extend these results to cover SC accesses (\cref{sec:sc}),
discuss related work (\cref{sec:related}) and conclude (\cref{sec:conclusion}).
The associated proof scripts and supplementary material for our paper 
are publicly available at \url{http://plv.mpi-sws.org/weakestmoToImm/}. 

\section{Overview of the Compilation Correctness Proof}
\label{sec:overview}

To get an idea about the \IMM and \weakestmo memory models, consider a version of
the \ref{ex:LB-fake} and \ref{ex:LB-TA} programs from \cref{sec:introduction}
with no dependency in thread 1:
\begin{equation}
\inarrII{
  \readInst{}{a}{x} \comment{1} \\
  \writeInst{}{y}{1} \\
}{\readInst{}{b}{y} \comment{1} \\
  \writeInst{}{x}{b}  \\
}
\tag{LB}\label{ex:LB}
\end{equation}
As we will see, the annotated outcome is allowed by both \IMM and \weakestmo,
albeit in different ways.
The different treatment of load-store reordering affects the outcomes
of other programs.
For example, \IMM forbids the annotate outcome of \ref{ex:LB-fake}
by treating it exactly as \ref{ex:LB-TA},
whereas \weakestmo allows the outcome by treating \ref{ex:LB-fake} exactly as \ref{ex:LB}.

\subsection{An Informal Introduction to \IMM}
\label{sec:immscex}

\IMM is a \emph{declarative} (also called \emph{axiomatic}) model identifying
a program's semantics with a set of \emph{execution graphs}, or just \emph{executions}.
As an example, \cref{fig:lbWeak1} contains $\Glb$,
an \IMM execution graph of \ref{ex:LB}
corresponding to an execution yielding the annotated behavior.

\begin{figure}[t]
  \begin{subfigure}[b]{.48\textwidth}\centering
$\inarr{\begin{tikzpicture}[yscale=0.4,xscale=1]
  \node (init) at (1,  1.5) {$\Init$};


  \node (i11) at ( 0,  0) {$\erlab{}{x}{1}{}$};
  \node (i12) at ( 0, -2) {$\ewlab{}{y}{1}{}$};

  \node (i21) at ( 2,  0) {$\erlab{}{y}{1}{}$};
  \node (i22) at ( 2, -2) {$\ewlab{}{x}{1}{}$};

  \draw[po] (i11) edge node[left] {\small$\lPO$} (i12);
  \draw[po] (i21) edge node[left] {\small$\lPO$} (i22);
  \draw[ppo,bend left=10] (i21) edge node[right] {\small$\lPPO$} (i22);

  \draw[rf] (i22) edge node[below] {\small$\lRF$} (i11);
  \draw[rf] (i12) edge node[below] {} (i21);

  \draw[po] (init) edge node[left] {\small$\lPO$} (i11);
  \draw[po] (init) edge node[right] {\small$\lPO$} (i21);


\end{tikzpicture}}$
  \caption{$\Glb$: Execution graph of \ref{ex:LB}.}
  \label{fig:lbWeak1}
  \end{subfigure}\hfill
  \begin{subfigure}[b]{.48\textwidth}\centering
$\inarr{\begin{tikzpicture}[yscale=0.4,xscale=1]
  \node (init) at (1,  1.5) {$\Init$};

  \node (i11) at ( 0,  0) {$\erlab{}{x}{1}{}$};
  \node (i12) at ( 0, -2) {$\ewlab{}{y}{1}{}$};

  \node (i21) at ( 2,  0) {$\erlab{}{y}{1}{}$};
  \node (i22) at ( 2, -2) {$\ewlab{}{x}{1}{}$};

  \draw[po] (i11) edge node[right] {\small$\lPO$} (i12);
  \draw[po] (i21) edge node[left] {\small$\lPO$} (i22);
  \draw[ppo,bend right=10] (i11) edge node[left] {\small$\lPPO$} (i12);
  \draw[ppo,bend left=10] (i21) edge node[right] {\small$\lPPO$} (i22);

  \draw[rf] (i22) edge node[below] {\small$\lRF$} (i11);
  \draw[rf] (i12) edge node[below] {} (i21);

  \draw[po] (init) edge node[left] {\small$\lPO$} (i11);
  \draw[po] (init) edge node[right] {\small$\lPO$} (i21);
\end{tikzpicture}}$
  \caption{Execution of \ref{ex:LB-TA} and \ref{ex:LB-fake}.}
  \label{fig:lbWeak2}
  \end{subfigure}
\caption{Executions of \ref{ex:LB} and \ref{ex:LB-TA}/\ref{ex:LB-fake} with outcome $a = b = 1$.}
\label{fig:lbWeak}
\end{figure}

Vertices of execution graphs, called \emph{events},
represent memory accesses either due to the initialization of memory
or to the execution of program instructions.
Each event is labeled with the type of the access
(\eg $\lR$ for reads, $\lW$ for writes),
the location accessed, and the value read or written.
Memory initialization consists of a set of events labeled
$\ewlab{}{\loc}{0}{}$ for each location $\loc$ used in the program;
for conciseness, however, we depict the initialization events
as a single event with label $\Init$.

Edges of execution graphs represent different relations on events.
In \cref{fig:lbWeak}, three different relations are depicted.
The \emph{program order} relation ($\lPO$) totally orders events
originated from the same thread according to their order in the program,
as well as the initialization event(s) before all other events.
The \emph{reads-from} relation ($\lRF$) relates a write event
to the read events that read from it.
Finally, the \emph{preserved program order} ($\lPPO$) is a subset of the
program order relating events that cannot be executed out of order.
Such $\lPPO$ edges arise whenever there is a dependency chain between the
corresponding instructions (\eg a write storing the value read by a prior read).

Because of the syntactic nature of $\lPPO$,
\IMM conflates the executions of \ref{ex:LB-TA} and \ref{ex:LB-fake}
leading to the outcome $a=b=1$ (see \cref{fig:lbWeak2}).
This choice is in line with hardware memory models;
it means, however, that \IMM is not suitable as a memory model for a programming language
(because, as argued in \cref{sec:introduction},
\ref{ex:LB-fake} can be transformed to \ref{ex:LB} by an optimizing compiler).

The executions of a program are constructed in two steps.%
\footnote{For a detailed formal description of the graphs and
their construction process we refer the reader to
\cite[\S 2.2]{Podkopaev-al:POPL19}.}
First, a thread-local semantics determines the sequential executions of each thread,
where the values returned by each read access are chosen non-deterministically
(among the set of \emph{all} possible values),
and the executions of different threads are combined into a single execution.
Then, the execution graphs are filtered by a \emph{consistency predicate},
which determines which executions are allowed (i.e., are \IMM-consistent).
These \IMM-consistent executions form the program's semantics.

\IMM-consistency checks three basic constraints:
\begin{description}
\item[Completeness:] Every read event reads from precisely
one write with the same location and value;
\item[Coherence:] For each location $x$, there is a total ordering
of $x$-related events extending the program order so that each read of $x$
reads from the most recent prior write according to that total order; and
\item[Acyclic dependency:] There is no cycle consisting only of $\lPPO$ and $\lRF$ edges.
\end{description}
The final constraint disallows executions in which an event
recursively depends upon itself, as this pattern can lead to
``out-of-thin-air'' outcomes.
Specifically, the execution in \cref{fig:lbWeak2}, which represents the
annotated behavior of \ref{ex:LB-fake} and \ref{ex:LB-TA}, is \emph{not}
\IMM-consistent because of the $(\lPPO\cup\lRF)$-cycle.
In contrast, $\Glb$ is \IMM-consistent.

\subsection{An Informal Introduction to \weakestmo}
\label{sec:overview-weakestmo}

We move on to \weakestmo,
which also defines the program's semantics as a set of execution graphs.
However, they are constructed differently---extracted from a final \emph{event structure},
which \weakestmo incrementally builds for a program.

An event structure represents multiple executions of a programs in a single graph.
Like execution graphs, event structures contain a set of events and several relations among them.
Like execution graphs, the \emph{program order} ($\lPO$) orders events according to each thread's control flow.
However, unlike execution graphs, $\lPO$ is not necessarily total among the events of a given thread.
Events of the same thread that are not $\lPO$-ordered are said to be in \emph{conflict} ($\lCF$) with one another,
and cannot belong to the same execution.
Such conflict events arise when two read events originate from the same read instruction
(\eg representing executions where the reads return different values).
Moreover, $\lCF$ ``extends downwards'':
events that depend upon conflicting events (\ie have conflicting $\lPO$-predecessors)
are also in conflict with one other.
In pictures, we typically show only the \emph{immediate conflict} edges
(between reads originating from the same instruction)
and omit the conflict edges between events $\lPO$-after immediately conflicting ones.

\begin{figure}[t]
\begin{minipage}{.4\textwidth}
  \begin{subfigure}[b]{\textwidth}\centering
\begin{tikzpicture}[yscale=0.4,xscale=2.2]
  \node (init) at (0, 1.5) {$\Init$};

  \node (i01) at (-0.7,  0) {$\mese{1}{1}{1} \erlab{}{x}{0}{}$};
  \node (i02) at (-0.7, -2) {\phantom{$\ewlab{}{y}{1}{}$}};
  \node (i31) at (0.7,  0) {\phantom{$\mese{2}{1}{} \erlab{}{y}{1}{}$}};

  \draw[jf,bend right=10] (init) edge node[above] {\scriptsize$\lJF$} (i01);
  \draw[po]   (init) edge node[right] {} (i01);
\end{tikzpicture}
  \caption{\ESa}
  \label{esc:lb1}
  \end{subfigure}
  \vspace{1ex}

  \begin{subfigure}[b]{\textwidth}\centering
\begin{tikzpicture}[yscale=0.4,xscale=2.2]
  \node (init) at (0, 1.5) {$\Init$};

  \node (i01) at (-0.7,  0) {$\mese{1}{1}{1} \erlab{}{x}{0}{}$};
  \node (i02) at (-0.7, -2) {$\mese{1}{2}{1} \ewlab{}{y}{1}{}$};
  \node (i31) at (0.7,  0) {\phantom{$\mese{2}{1}{} \erlab{}{y}{1}{}$}};

  \draw[po]   (i01) edge node[right] {} (i02);
  \draw[jf,bend right=10] (init) edge node[above] {\scriptsize$\lJF$} (i01);
  \draw[po]   (init) edge node[right] {} (i01);

  \begin{scope}[on background layer]
    \draw[extractStyle] ($(init)  + (-1.13,0.55)$) rectangle ++(1.45,-5.1);

    \phantom{\node at (-0.4, 1.3) {\large $\SXb$};}
  \end{scope}
\end{tikzpicture}
  \caption{\ESb with execution $\SXb$ selected}
  \label{esc:lb2}
  \end{subfigure}
  \vspace{1ex}

  \begin{subfigure}[b]{\textwidth}\centering
\begin{tikzpicture}[yscale=0.4,xscale=2.2]
  \node (init) at (0, 1.5) {$\Init$};

  \node (i01) at (-0.7,  0) {$\mese{1}{1}{1} \erlab{}{x}{0}{}$};
  \node (i02) at (-0.7, -2) {$\mese{1}{2}{1} \ewlab{}{y}{1}{}$};

  \node (i31) at (0.7,  0) {$\mese{2}{1}{} \erlab{}{y}{1}{}$};

  \draw[po] (i01) edge node[right] {} (i02);

  \draw[jf,bend right=10] (init) edge node[above] {\scriptsize$\lJF$} (i01);
  \draw[jf] (i02) edge node[below] {\scriptsize$\lJF$} (i31);

  \draw[po]   (init) edge node[right] {} (i01);
  \draw[po]   (init) edge node[right] {} (i31);
\end{tikzpicture}
  \caption{\ESc}
  \label{esc:lb3}
  \end{subfigure}
\end{minipage}
\hfill
\hfill
\begin{minipage}{.5\textwidth}
  \begin{subfigure}[b]{\textwidth}\centering
\begin{tikzpicture}[yscale=0.4,xscale=2.2]
  \node (init) at (0, 1.5) {$\Init$};

  \node (i01) at (-1,  0) {$\mese{1}{1}{1} \erlab{}{x}{0}{}$};
  \node (i02) at (-1, -2) {$\mese{1}{2}{1} \ewlab{}{y}{1}{}$};

  \node (i31) at ( 1,  0) {$\mese{2}{1}{} \erlab{}{y}{1}{}$};
  \node (i32) at ( 1, -2) {$\mese{2}{2}{} \ewlab{}{x}{1}{}$};

  \draw[po] (i01) edge node[right] {} (i02);
  \draw[po] (i31) edge node[right] {} (i32);

  \draw[jf,bend right=10] (init) edge node[above] {\scriptsize$\lJF$} (i01);
  \draw[jf] (i02) edge node[below] {\scriptsize$\lJF$} (i31);

  \draw[po]   (init) edge node[right] {} (i01);
  \draw[po]   (init) edge node[right] {} (i31);

  \begin{scope}[on background layer]
    \draw[extractStyle] ($(init)  + (-1.43,0.55)$) rectangle ++(2.86,-5.1);
  \end{scope}
\end{tikzpicture}
  \caption{\ESd with execution $\SXd$ selected}
  \label{esc:lb4}
  \end{subfigure}
  \vspace{1ex}

  \begin{subfigure}[b]{\textwidth}\centering
\begin{tikzpicture}[yscale=0.4,xscale=2.2]
  \node (init) at (0, 1.5) {$\Init$};

  \node (i01) at (-1,  0) {$\mese{1}{1}{1} \erlab{}{x}{0}{}$};
  \node (i02) at (-1, -2) {$\mese{1}{2}{1} \ewlab{}{y}{1}{}$};

  \node (i11) at ( 0,  0) {$\mese{1}{1}{2} \erlab{}{x}{1}{}$};

  \node (i31) at ( 1,  0) {$\mese{2}{1}{} \erlab{}{y}{1}{}$};
  \node (i32) at ( 1, -2) {$\mese{2}{2}{} \ewlab{}{x}{1}{}$};

  \draw[cf] (i01) -- (i11);
  \node at ($.5*(i01) + .5*(i11) - (0, 0.4)$) {\scriptsize$\lCF$};

  \phantom{\draw[ew] (i02) edge node[below] {\scriptsize$\lEW$} (i32);}

  \draw[po] (i01) edge node[right] {} (i02);
  \draw[po] (i31) edge node[right] {} (i32);

  \draw[jf] (i32) edge node[below] {\scriptsize$\lJF$} (i11);
  \draw[jf] (i02) edge node[below] {\scriptsize$\lJF$} (i31);

  \draw[jf,bend right=10] (init) edge node[above] {\scriptsize$\lJF$} (i01);


  \draw[po]   (init) edge node[right] {} (i01);
  \draw[po]   (init) edge node[right] {} (i11);
  \draw[po]   (init) edge node[right] {} (i31);
\end{tikzpicture}
  \caption{\ESe}
  \label{esc:lb5}
  \end{subfigure}
  \vspace{1ex}

  \begin{subfigure}[b]{\textwidth}\centering
\begin{tikzpicture}[yscale=0.4,xscale=2.2]
  \node (init) at (0, 1.5) {$\Init$};

  \node (i01) at (-1,  0) {$\mese{1}{1}{1} \erlab{}{x}{0}{}$};
  \node (i02) at (-1, -2) {$\mese{1}{2}{1} \ewlab{}{y}{1}{}$};

  \node (i11) at ( 0,  0) {$\mese{1}{1}{2} \erlab{}{x}{1}{}$};
  \node (i12) at ( 0, -2) {$\mese{1}{2}{2} \ewlab{}{y}{1}{}$};

  \node (i31) at ( 1,  0) {$\mese{2}{1}{} \erlab{}{y}{1}{}$};
  \node (i32) at ( 1, -2) {$\mese{2}{2}{} \ewlab{}{x}{1}{}$};

  \draw[cf] (i01) -- (i11);
  \node at ($.5*(i01) + .5*(i11) - (0, 0.4)$) {\scriptsize$\lCF$};

  \draw[ew] (i02) edge node[below] {\scriptsize$\lEW$} (i12);

  \draw[po] (i01) edge node[right] {} (i02);
  \draw[po] (i11) edge node[right] {} (i12);
  \draw[po] (i31) edge node[right] {} (i32);

  \draw[jf] (i32) edge node[above] {} (i11);
  \draw[jf] (i02) edge node[left] {} (i31);

  \draw[jf,bend right=10] (init) edge node[above] {\scriptsize$\lJF$} (i01);


  \draw[po]   (init) edge node[right] {} (i01);
  \draw[po]   (init) edge node[right] {} (i11);
  \draw[po]   (init) edge node[right] {} (i31);

  \begin{scope}[on background layer]
    \draw[extractStyle] ($(init)  + (-0.42,0.55)$) rectangle ++(1.84,-5.1);
  \end{scope}
\end{tikzpicture}
  \caption{\ESf with execution $\SXf$ selected}
  \label{esc:lb6}
  \end{subfigure}
\end{minipage}

  \caption{A run of \weakestmo witnessing the annotated outcome of \ref{ex:LB}.}
  \label{fig:esc}
\end{figure}

Event structures are constructed incrementally
starting from an event structure consisting only of the initialization events.
Then, events corresponding to the execution of program instructions are added one at a time.
We start by executing the first instruction of a program's thread.
Then, we may execute the second instruction of the same thread or the first instruction of another thread,
and so on.

As an example, \cref{fig:esc} constructs an event structure for \ref{ex:LB}.
\cref{esc:lb1} depicts the event structure $\ESa$ obtained from the initial event structure by executing $\readInst{}{a}{x}$ in~\ref{ex:LB}'s thread 1\@.
As a result of the instruction execution, a read event $\mese{1}{1}{1} \erlab{}{x}{0}{}$ is added.

Whenever the event added is a read, \weakestmo has to justify the returned value
from an appropriate write event.
In this case, there is only one write to $x$---the initialization write---and so
\ESa has a \emph{justified from} edge, denoted $\lJF$, going to $\ese{1}{1}{1}$ in $\ESa$.
This is a requirement of \weakestmo: each read event in an event structure
has to be justified from exactly one write event with the same value and location.
(This requirement is analogous to the \emph{completeness} requirement in $\IMM$-consistency for execution graphs.)
Since events are added in program order and
read events are always justified from existing events in the event structure,
$\lPO \cup \lJF$ is guaranteed to be acyclic by construction.

The next three steps (\cref{esc:lb2,esc:lb3,esc:lb4}) simply add a new event to the event structure.
Notice that unlike \IMM executions, \weakestmo event structures do not track syntactic dependencies,
\eg $\ESd$ in \cref{esc:lb4} does not contain a $\lPPO$ edge between $\ese{2}{1}{}$ and $\ese{2}{2}{}$.
This is precisely what allows \weakestmo to assign the same behavior to \ref{ex:LB} and \ref{ex:LB-fake}: they have exactly the same event structures.
As a programming-language-level memory model, \weakestmo supports optimizations removing fake dependencies.

The next step (\cref{esc:lb5}) is more interesting because it showcases the key
distinction between event structures and execution graphs, namely that
event structures may contain more than one execution for each thread.
Specifically, the transition from \ESd to \ESe reruns the first instruction of thread 1
and adds a new event $\ese{1}{1}{2}$ justified from a different write event.
We say that this new event \emph{conflicts} ($\lCF$) with $\ese{1}{1}{1}$
because they cannot both occur in a single execution.
Because of conflicts, $\lPO$ in event structures does not totally order all events of a thread;
\eg $\ese{1}{1}{1}$ and $\ese{1}{1}{2}$ are not $\lPO$-ordered in \ESe.
Two events of the same thread are conflicted precisely when they are not $\lPO$-ordered.

The final construction step (\cref{esc:lb6}) demonstrates another \weakestmo feature.
Conflicting write events writing the same value to the same location
(\eg $\ese{1}{2}{1}$ and $\ese{1}{2}{2}$ in $\ESf$)
may be declared \emph{equal writes}, \ie connected by an equivalence relation $\lEW$.%
\footnote{In this paper, we take $\lEW$ to be reflexive,
  whereas it is is irreflexive in Chakraborty and Vafeiadis~\cite{Chakraborty-Vafeiadis:POPL19}.
  Our $\lEW$ is the reflexive closure of the one in \cite{Chakraborty-Vafeiadis:POPL19}.}

The $\lEW$ relation is used to define \weakestmo's version of the reads-from relation, $\lRF$,
which relates a read to all (non-conflicted) writes \emph{equal} to the write justifying the read.
For example, $\ese{2}{1}{}$ reads from both $\ese{1}{2}{1}$ and $\ese{1}{2}{2}$.

The \weakestmo's $\lRF$ relation is used for extraction of program executions.
An execution graph $G$ is \emph{extracted} from an event structure $\ES$ denoted $\ES \rhd G$
if $G$ is a maximal conflict-free subset of $\ES$,
it contains only \emph{visible} events (to be defined in \cref{sec:weakestmo}),
and every read event in $G$ reads from some write in $G$ according to $\ES.\lRF$.
Two execution graphs can be extracted from $\ESf$:
$\set{\Init, \ese{1}{1}{1}, \ese{1}{2}{1}, \ese{2}{1}{}, \ese{2}{2}{}}$ and
$\set{\Init, \ese{1}{1}{2}, \ese{1}{2}{2}, \ese{2}{1}{}, \ese{2}{2}{}}$
representing the outcomes $a = 0 \land b = 1$ and $a = b = 1$ respectively.


\subsection{\weakestmo to \IMM Compilation: High-Level Proof Structure}
\label{sec:overview:proofstructure}

\newcommand{\lbShortExGraphTemplate}{
  \node (init) at (0, 1.8) {$\Init$};

  \node (i11) at (-1.1,  0) {$\ese{1}{1}{ }: \erlab{}{x}{1}{}$};
  \node (i12) at (-1.1, -2) {$\ese{1}{2}{ }: \ewlab{}{y}{1}{}$};

  \node (i21) at (1.1,  0) {$\ese{2}{1}{ }: \erlab{}{y}{1}{}$};
  \node (i22) at (1.1, -2) {$\ese{2}{2}{ }: \ewlab{}{x}{1}{}$};

  \draw[po] (init) edge node[left] {} (i11);
  \draw[po] (init) edge node[left] {} (i21);

  \draw[po] (i11) edge node[left] {} (i12);
  \draw[po] (i21) edge (i22);
  \draw[ppo,bend left=8] (i21) edge node[right] {\small$\lPPO$} (i22);

  \draw[rf] (i22) edge node[below] {\small$\lRF$} (i11);
  \draw[rf] (i12) edge node[below] {} (i21);
}
\newcommand{\lbShortExTemplate}[3]{
\begin{subfigure}[b]{.3\linewidth}
\centering
\begin{tikzpicture}[yscale=0.4,xscale=1]
    \lbShortExGraphTemplate

  \begin{scope}[on background layer]
    \nlIssuedCoveredBox{init};
    #3
  \end{scope}
\end{tikzpicture}
  \caption{#1}
  \label{#2}
\end{subfigure}
}

\begin{figure}[t]
    \lbShortExTemplate{\TCa}{tex:lb1}{\nlIssuedBox{i12};} \hfill
    \lbShortExTemplate{\TCb}{tex:lb2}{
      \nlIssuedBox{i12};
      \nlIssuedBox{i22};
    } \hfill
    \lbShortExTemplate{\TCc}{tex:lb3}{
      \nlIssuedBox{i12};
      \nlIssuedBox{i22};
      \nlCoveredBox{i11};
    } \\
    \rule{0pt}{2ex} \\
    \lbShortExTemplate{\TCd}{tex:lb4}{
      \nlIssuedBox{i12};
      \nlIssuedBox{i22};
      \nlCoveredDoubleBox{i11}{i12};
    } \hfill
    \lbShortExTemplate{\TCe}{tex:lb5}{
      \nlIssuedBox{i12};
      \nlIssuedBox{i22};
      \nlCoveredDoubleBox{i11}{i12};
      \nlCoveredBox{i21};
    } \hfill
    \lbShortExTemplate{\TCf}{tex:lb6}{
      \nlIssuedBox{i12};
      \nlIssuedBox{i22};
      \nlCoveredDoubleBox{i11}{i12};
      \nlCoveredDoubleBox{i21}{i22};
    }
\caption{Traversal configurations for $\Glb$.}
\label{fig:lbTravConfigs}
\end{figure}

In this paper, we assume that \weakestmo is defined for the same assembly language
as \IMM (see \cite[Fig.\ 2]{Podkopaev-al:POPL19}) extended with SC accesses and
refer to this language as \lang.
Having that, we show the correctness of the \emph{identity} mapping
as a compilation scheme from \weakestmo to \IMM
in the following theorem.
\begin{theorem}
  \label{thm:main}
  Let $\prog$ be a program in\/ \lang, and $G$ be an \IMM-consistent execution graph of $\prog$.
  Then there exists an event structure $\ES$ of\/ $\prog$ under \weakestmo such that
  $\ES \rhd G$.
\end{theorem}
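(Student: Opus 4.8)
The plan is to reuse the \emph{traversal} machinery of \IMM~\cite{Podkopaev-al:POPL19}, mirroring the structure of their \IMM-to-promising-semantics proof with \weakestmo event structures in the role of promising-semantics memories and \emph{certification branches} in the role of promise certifications. Since $G$ is \IMM-consistent, there is a finite traversal of it: a sequence $\TCinit{G} = \TC_0 \travConfigStep \TC_1 \travConfigStep \cdots \travConfigStep \TC_n = \TCfinal{G}$ of traversal configurations $\TC_i = \tup{\CoveredSet_i,\IssuedSet_i}$ in which each step either \emph{covers} one more event of $G$ or \emph{issues} one more write (cf.~\cref{fig:lbTravConfigs}). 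I would replay this traversal inside \weakestmo, building the event structure $\ES$ by a few \weakestmo steps per traversal step, so that at the end a conflict-free, visible subgraph of $\ES$ is isomorphic to $G$. The whole argument is packaged as a simulation: we define a relation $\simrel$ between a traversed configuration $(G,\TC)$ and an event structure $\ES$ together with bookkeeping data — a label- and $\lPO$-preserving embedding of $\CoveredSet$ into a distinguished conflict-free ``main'' part of $\ES$; a map sending each write in $\IssuedSet$ to an event of $\ES$ that is $\lEW$-equal to its main-part copy once that copy exists; and, per thread, the reached thread-local state together with \weakestmo views matching the \IMM views attached to $\TC$ — with the standing invariants that $\ES$ is \weakestmo-consistent and that every $\lJF$ edge of $\ES$ originates from an issued write. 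We then show that $\simrel$ holds initially (for the bare event structure of $\Init$ events and $\TCinit{G}$), that it is preserved by every traversal step, and that it implies $\ES\rhd G$ at $\TC_n$.

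For a \textbf{cover step} covering $e\in G$ of thread $t=\lTID(e)$, take the \weakestmo step running $t$'s next instruction on the main part of $\ES$, producing $e'$; if $e$ is a read, \IMM-consistency makes $G.\lRF^{-1}(e)$ issued and hence already present in $\ES$, so $e'$ is justified from it — or, once that write has itself been covered, from its main-part copy, using that $\lEW$ feeds into $\lRF$. If $e'$ duplicates an event already introduced by a certification that ran ahead of the frontier, no new event is added; when $e'$ is a write we add an $\lEW$ edge between the two copies so that reads of other threads justified from the earlier copy become $\lRF$-related to the main-part one. Re-establishing \weakestmo-consistency and the view invariants here is routine, since $e'$ mirrors $e$ and its $\lPO$-predecessors are already covered (a precondition of the traversal step).

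The \textbf{issue step} — issuing a write $w$ of thread $t$ whose $\lPO$-prefix is not yet covered — is the crux. Because \weakestmo adds events only in program order, to bring $w$ into $\ES$ we first extend $\ES$ by an entire certification branch: a thread-$t$ run from $t$'s currently committed state up to and including $w$ in which every read is justified from an already-issued write. Existence of such a branch is exactly the \IMM certification property underlying the traversal strategy — the values along the $\lPPO$-chains feeding $w$ are produced by writes the strategy has already issued, while the remaining reads may take any available value, and it is acyclicity of $\lPPO\cup\lRF$ that makes this choice consistent with the receptive thread-local semantics. Adding the branch is an inner induction maintaining a refined relation $\simrelCert$ that tracks the certification state, its views, and $\lEW$-links of its writes into $\IssuedSet$; whenever one of its reads reads a value differing from the one $G$ eventually covers, the corresponding event becomes $\lCF$-related to the main part, which is precisely what lets the two branches diverge. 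The delicate obligations are (a)~that each inner \weakestmo step preserves \weakestmo-consistency — in particular its coherence- and \emph{visibility}-style conditions — using that the branch reads only issued, already-visible writes, and (b)~that the later cover steps replaying these instructions re-produce writes $\lEW$-equal to the branch's, so that the embedding of $G$ stays coherent.

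Finally, at $\TC_n$ every event of $G$ is covered, so the main part of $\ES$ carries an isomorphic copy $G'$ of $G$. It is conflict-free by construction; each read of $G'$ reads in $\ES.\lRF$ from its in-$G'$ write, since it was justified from an issued write and that write is $\lEW$-equal to the covered copy in $G'$; all of $G'$ is visible, since the main part only ever reads from visible writes; and $G'$ is a \emph{maximal} conflict-free visible subset, being a complete run of $\prog$. Hence $\ES\rhd G'$, and identifying $G'$ with $G$ gives the theorem. I expect the issue step to be the main obstacle: establishing that the certification branch always exists, and that every event structure arising while it is built — and while it is afterwards replayed by cover steps — remains \weakestmo-consistent with the memory, view, and $\lEW$ invariants kept in lock-step with the \IMM configuration.
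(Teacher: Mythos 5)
Your proposal is correct and follows essentially the same route as the paper: a traversal of the \IMM-consistent graph is replayed by a simulation relation between traversal configurations and event structures, with certification branches built for issue steps, $\lEW$ edges tying re-executed writes to their issued copies, and extraction of the final conflict-free visible subset. The only cosmetic difference is that the paper tracks a dynamically re-selected extracted subset $\SX$ (replacing a thread's events by its latest certification branch) and inclusions such as $\fmap{\ES.\lJF}\subseteq G.\lRF^?\seq G.\lHB^?$ rather than your ``main part'' and promise-style views, but the invariants and proof obligations are the same.
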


To prove the theorem, we must show that \weakestmo may construct the needed event structure
in a step by step fashion.
If the \IMM-consistent execution graph $G$ contains no $\lPO \cup \lRF$ cycles,
then the construction is completely straightforward:
$G$ itself is a \weakestmo-consistent event structure (setting $\lJF$ to be just $\lRF$),
and its events can be added in any order extending $\lPO \cup \lRF$.

The construction becomes tricky for \IMM-consistent execution graphs,
such as $\Glb$, that contain $\lPO \cup \lRF$ cycles.
Due to the cycle(s), $G$ cannot be directly constructed as a (conflict-free)
\weakestmo event structure.
We must instead construct a larger event structure $S$ containing multiple executions,
one of which will be the desired graph $G$.
Roughly, for each $\lPO\cup\lRF$ cycle in $G$, we have to construct an immediate conflict
in the event structure.

To generate the event structure $S$,
we rely on a basic property of \IMM-consistent execution graphs
shown by Podkopaev \etal~\cite[\S\S6,7]{Podkopaev-al:POPL19},
namely that execution graphs can be \emph{traversed} in a certain order,
\ie its events can be \emph{issued} and \emph{covered} in that order,
so that in the end all events are covered.
The traversal captures a possible execution order of the program
that yields the given execution.
In that execution order, events are not added according to program order,
but rather according to \emph{preserved program order} ($\lPPO$)
in two steps.
Events are first issued when all their dependencies have been resolved,
and are later covered when all their $\lPO$-prior events have been covered.

In more detail, a traversal of an \IMM-consistent execution graph $G$ is a sequence of traversal steps between \emph{traversal configurations}.
A traversal configuration $\TC$ of an execution graph $G$ is a pair of sets of events,
$\tup{\CoveredSet,\IssuedSet}$, called the \emph{covered} and \emph{issued} set respectively.
As an example, \cref{fig:lbTravConfigs} presents all six traversal configurations
of the execution graph $\Glb$ of \ref{ex:LB} from \cref{fig:lbWeak1}
except for the initial configuration.
The issued set is marked by \smallIssuedBoxText{} and the covered set by \smallCoveredBoxText{}.

A traversal might be seen as an execution of an abstract machine
that can execute write instructions early
but has to execute everything else in order.
The first option corresponds to issuing a write event,
and the second option to covering an event.
The traversal strategy has certain constraints.
To issue a write event, all external reads that it depends upon
must be resolved; \ie they must read from already issued events.
To cover an event, all its $\lPO$-predecessors must also be covered.%
\footnote{For readers familiar with \Promise~\cite{Kang-al:POPL17},
issuing a write event corresponds to promising a message,
and covering an event to normal execution of an instruction.}
For example, in~\cref{fig:lbTravConfigs},
a traversal cannot issue $\mese{2}{2}{} \ewlab{}{\loc}{1}{}$ before issuing $\mese{1}{2}{} \ewlab{}{y}{1}{}$
nor cover $\mese{1}{1}{} \erlab{}{\loc}{1}{}$ before issuing $\mese{2}{2}{} \ewlab{}{\loc}{1}{}$.

According to Podkopaev \etal~\cite[Prop. 6.5]{Podkopaev-al:POPL19},
every \IMM-consistent execution graph $G$ has
a full traversal of the following form:
\begin{equation*}
G \vdash \TCinit{G} \travstep{} \TC_1 \travstep{}
\TC_2 \travstep{} \ldots \travstep{} \TCfinal{G}
\end{equation*}
where the initial configuration, $\TCinit{G} \defeq \tup{G.\lEo, G.\lEo}$,
has issued and covered only $G$'s initial events
and the final configuration, $\TCfinal{G} \defeq \tup{G.\lE, G.\lW}$,
has covered all $G$'s events and issued all its write events.

We construct the event structure $S$ following a full traversal of $G$.
We define a simulation relation, $\simrel(\prog, G, \TC, \ES, \SX)$,
between the program $\prog$,
the current traversal configuration $\TC$ of execution $G$
and the current event structure's state $\tup{\ES,\SX}$,
where $\SX$ is a subset of events corresponding to a particular execution graph extracted from the event structure $\ES$.

Our simulation proof is divided into the following three lemmas,
which state that the initial states are simulated,
that simulation extends along traversal steps,
and that the similation of final states means
that $G$ can be extracted from the generated event structure.

\begin{lemma}[Simulation Start]
  \label{lemma:simstart}
  Let $\prog$ be a program of \lang, and $G$ be an \IMM-consistent execution graph of $\prog$.
  Then $\simrel(\prog, G, \TCinit{G}, \ESinit(\prog), \ESinit(\prog).\lE)$ holds.
\end{lemma}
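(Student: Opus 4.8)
The plan is to unfold the definition of the simulation relation $\simrel$ and verify each of its conjuncts for the trivial initial configuration. Recall that $\TCinit{G} = \tup{G.\lEo, G.\lEo}$ has issued and covered exactly the initialization events, $\ESinit(\prog)$ is the event structure consisting only of the initialization events, and $\ESinit(\prog).\lE = G.\lEo$ is taken as the extracted execution's event set. So on both sides of the simulation we are looking at essentially the same object: the set of initialization events, with no $\lPO$, $\lJF$, $\lCF$, or $\lEW$ edges among them beyond what the initial structures stipulate, and with the memory in its initial state. The work is therefore bookkeeping: instantiate each component of $\simrel(\prog, G, \TC, \ES, \SX)$ with $\TC = \TCinit{G}$, $\ES = \SX = \ESinit(\prog)$ and discharge it.

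First I would check the structural/consistency components: that $\ESinit(\prog)$ is a well-formed \weakestmo event structure (immediate from its definition, since it contains only initialization writes, which are pairwise non-conflicting, pairwise $\lPO$-incomparable except as the model dictates, and need no $\lJF$ since there are no reads), and that $\SX = \ESinit(\prog).\lE$ is a legitimate extracted execution — i.e. it is conflict-free (trivially, as there are no $\lCF$ edges), contains only visible events (initialization events are visible by definition), and every read in it reads from a write (vacuously, no reads). Next I would check that $\SX$, viewed as an execution graph, coincides with the restriction of $G$ to its covered/issued events under $\TCinit{G}$: both are just $G.\lEo$ with the induced (empty) program-order and reads-from, so the correspondence is the identity on initialization events. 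Finally I would check the data components of $\simrel$ — the per-thread program counters are all at their initial positions (no instruction has been executed on either side), the thread-local states match the program's initial continuation, and the memory / timestamp / view data agree with the initial memory (all views point to the initialization timestamps, the promise sets are empty, etc.). Each of these is the defining property of the respective "init" object, so the verification is a direct match.

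I do not expect a genuine obstacle here: \cref{lemma:simstart} is the base case of the simulation and is designed to hold by construction. The only mild subtlety is purely definitional — one must make sure the chosen witness for the extracted-execution component $\SX$ is precisely $\ESinit(\prog).\lE$ (as the statement dictates) and that the correspondence map between event-structure events and execution-graph events is set up as the identity on initialization events, so that the "extracted graph equals $G$ restricted to $\TCinit{G}$" clause holds on the nose rather than merely up to relabeling. Concretely, the proof will be: unfold $\simrel$; for each conjunct, rewrite with the definitions of $\TCinit{G}$, $\ESinit(\prog)$, and the init-states of the thread-local and memory subsystems; and close each goal by reflexivity or by the trivially-true-on-the-empty-set observation. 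In the Coq development this is the kind of lemma that goes through by $\mathtt{unfold}$ followed by $\mathtt{split}$s and $\mathtt{basic\_solver}$-style automation, with at most a few explicit witnesses for existentials.
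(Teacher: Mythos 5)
Your proposal matches the paper's argument: the paper dispatches \cref{lemma:simstart} in one line by observing that $\ESinit.\lPO = \ESinit.\lJF = \ESinit.\lEW = \ESinit.\lCO = \emptyset$, so every conjunct of $\simrel$ holds trivially or vacuously, which is exactly your unfold-and-discharge plan. One small caveat: the memory/timestamp/view/promise-set components you list are not part of this paper's $\simrel$ (that bookkeeping belongs to the \Promise-to-\IMM proof), so those checks simply do not arise here.
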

\begin{lemma}[Weak Simulation Step]
  \label{lemma:simstep}
  If $\simrel(\prog, G, \TC, \ES, \SX)$ and $G \vdash \TC \travstep{} \TC'$ hold,
  then there exist $\ES'$ and $\SX'$ such that
  $\simrel(\prog, G, \TC', \ES', \SX')$ and $\ES \esstepcons{}^* \ES'$ hold.
\end{lemma}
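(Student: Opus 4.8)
The plan is to prove the lemma by case analysis on the traversal step $G \vdash \TC \travstep{} \TC'$, which is either a \emph{cover step} that adds a single event $e$ to the covered set $\CoveredSet$, or an \emph{issue step} that adds a single write $w$ to the issued set $\IssuedSet$. In each case I would exhibit a finite, possibly empty, sequence of event-structure steps $\ES \esstepcons{}^* \ES'$, choose a subset $\SX'$, and then re-establish every conjunct of $\simrel(\prog, G, \TC', \ES', \SX')$. Two observations keep this manageable: event-structure steps only ever add $\lJF$-edges into writes that are already present, so acyclicity of $\lPO \cup \lJF$ --- a basic well-formedness requirement on event structures --- is preserved for free; and since $\SX$ is (by $\simrel$) an extracted execution of $\ES$ whose restriction to covered events matches the traversed prefix of $G$, the extraction conditions on $\SX'$ (maximal conflict-freedom, visibility of events, every read reading from some write via $\ES'.\lRF$) only need to be re-checked for the few newly added events.

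For a \textbf{cover step} with $\TC' = \tup{\CoveredSet \cup \set{e}, \IssuedSet}$, the event $e$ must acquire a representative in the main branch $\SX'$, and I would split on the kind of $e$. If $e$ is a read, all its $\lPO$-predecessors are already covered and hence present in $\SX$, so one event-structure step adds a read event in $e$'s thread immediately $\lPO$-after the image of $e$'s predecessor, justified from the event-structure image of $G.\lRF^{-1}(e)$; this source is issued by a traversal property of \IMM established by Podkopaev \etal~\cite{Podkopaev-al:POPL19}, so it is already in $\ES$ and can be reached through $\lEW$. If $e$ is a write it is already issued and hence has an image $w'$ in $\ES$; if $w'$ is not yet in $\SX$ I add a fresh $\lEW$-equal copy of it to the main branch, and otherwise the step is stuttering (zero event-structure steps). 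If $e$ is a fence a single step suffices. In all sub-cases $\SX'$ extends $\SX$ by at most one event, and re-establishing $\simrel$ reduces to updating the view components and the covered/issued bookkeeping.

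The difficult case is the \textbf{issue step}, with $\TC' = \tup{\CoveredSet, \IssuedSet \cup \set{w}}$ for a write $w$ in some thread $\tid$. Here $\ES'$ must contain a representative of $w$, so I would run thread $\tid$'s local semantics from the covered prefix of $\tid$ up to and including $w$, contributing one event-structure step per executed instruction; these new events generally form a fresh branch of $\tid$ that conflicts with the thread-$\tid$ events currently carried by $\SX$, so that $\SX'$ is taken to be $\SX$ itself (or $\SX$ extended along the new branch, if $\tid$ was not yet started in $\SX$), chosen so that $\SX'$ remains a maximal conflict-free subset of $\ES'$. The heart of the matter is the \emph{certification} argument: using \IMM-consistency of $G$ together with the traversal invariants of Podkopaev \etal~\cite{Podkopaev-al:POPL19} --- in particular that every external read on which $w$ depends reads, in the traversed prefix, from an already-issued write --- one shows that this run of thread $\tid$ can be carried out so that every read it performs is justifiable from a write already present in $\ES$, namely from the $\lEW$-representatives of the relevant issued writes. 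I would then $\lEW$-link the new event $w'$ to any issued same-location, same-value write it does not conflict with, and discharge the remaining obligations: that $\ES'$ is \weakestmo-consistent (beyond $\lPO \cup \lJF$ acyclicity this requires re-checking the well-formedness conditions on $\lJF$ and $\lEW$ and the absence of the forbidden coherence-style cycles, which holds because the new branch conflicts with everything it could race against), and that the $\simrel$-invariants tying $\IssuedSet \cup \set{w}$ to $\ES'$ --- the clause asserting that every issued write has an image, and the view clauses that constrain which values future certifications may use --- are restored.

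I expect the certification sub-argument inside the issue step to be the main obstacle. It is exactly where \IMM-consistency of $G$ and the structural properties of its traversal are genuinely used, and it is what forces $\simrel$ to carry substantial auxiliary data --- per-thread local states, views, and an explicit correspondence between $G$'s events and their event-structure images --- so that the run of thread $\tid$ can be constructed and so that enough structure survives to feed the next step. Finally, the ``weak'' form of the statement, $\esstepcons{}^*$ rather than a single step, is precisely what accommodates both these multi-instruction certification branches and the stuttering cover steps above.
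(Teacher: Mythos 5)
Your overall shape (case analysis on the traversal step, a certification branch for issues, a rerun of the thread's local semantics) is recognisably the right family of argument, but the way you split the work between the two cases contains a genuine error. You treat a cover step as a local operation in which ``$\SX'$ extends $\SX$ by at most one event.'' This fails: by property \ref{simrel:ex-cov-iss} of $\simrel$, $\SX$ must already contain representatives of every issued write and its $\lPO$-prefix, and those representatives were built earlier with read values dictated by the stable justification relation $G.\lSRF_{\TC}$ of the \emph{old} configuration, not by $G$ itself. When the newly covered event is a read whose existing representative carries such a stale value (exactly the situation when $\ese{1}{1}{}$ is covered in \cref{fig:simrel-c}, and in the step from \cref{esc:lb5} to \cref{esc:lb6}), the fresh read with the correct value is in immediate conflict with the old one, and hence with every $\lPO$-later event of that thread currently in $\SX$ --- including the representatives of the still-uncovered issued writes. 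Conflict-freedom plus property \ref{simrel:ex-cov-iss} then force a rebuild of the whole thread suffix as a new branch and the choice $\SX' = \SX \setminus \ES.\thread{\set{\tid}} \cup \SBr$; the paper therefore runs the same certification-branch machinery for \emph{any} step in a thread with uncovered issued writes, not only for issue steps. The symmetric slip appears in your issue case, where you propose taking $\SX'$ to be $\SX$ itself even though the new branch (which contains the only representative of the newly issued write) conflicts with $\SX$'s thread-$\tid$ events: then $\fmap{\SX'}$ would not contain the newly issued write, violating property \ref{simrel:ex-cov-iss} again.

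Two further points. First, your $\lEW$-linking rule is inverted: equal writes must have the same $\ea$-image and be \emph{in conflict} (this is what $\lEW \subseteq [\lW]\seq(\lCF\cap{=_{\lLOC}}\cap{=_{\lVAL}})^?\seq[\lW]$ together with property \ref{simrel:ew-id} says), so the new write must be attached to the $\lEW$-class of the conflicting issued copy of the \emph{same} $G$-event, and this is forced rather than optional --- without it the new branch fails the visibility condition and cannot appear in an extracted execution, as the paper notes for $\ese{1}{3}{1}$ and $\ese{1}{3}{2}$. Second, the certification run is not ``thread $\tid$'s local semantics from the covered prefix'' with its original values: one needs the determined-events and viewfront machinery defining $G.\lSRF_{\TC}$ to pick safe values for the non-determined reads, together with the receptiveness property of the thread semantics, to show that a run with those relabelled reads exists while the $\lPPO$-prefix of the issued write keeps $G$'s labels. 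Your proposal asserts the conclusion of this sub-argument but omits the mechanism that makes it true.
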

\begin{lemma}[Simulation End]
  \label{lemma:simend}
  If $\simrel(\prog, G, \TCfinal{G}, \ES, \SX)$ holds,
  then the execution graph associated with $\SX$ is isomorphic to $G$.
\end{lemma}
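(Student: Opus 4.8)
The plan is to read the isomorphism off the definition of $\simrel$, specialised to the final traversal configuration $\TCfinal{G} = \tup{G.\lE,\, G.\lW}$. Recall that $\simrel(\prog, G, \TC, \ES, \SX)$ bundles, among its invariants, a correspondence (call it $h$, realised by the event maps $\fmap{\cdot}$ / $\fcomap{\cdot}$ of the Coq development) between the \emph{covered} events $\CoveredSet$ of $G$ and the events of the selected execution $\SX$, together with: (a) $h$ is injective and label-preserving, i.e.\ $G.\lLAB(a) = \ES.\lLAB(h(a))$; (b) $h$ is a $\lPO$-isomorphism between $G$ restricted to $\CoveredSet$ and $\ES$ restricted to $\SX$; (c) $h$ matches reads-from, i.e.\ for covered $a,b$ we have $(a,b) \in G.\lRF$ iff $(h(a),h(b))$ lies in $\ES.\lRF$ (the relation derived from $\lJF$ and the reflexive $\lEW$, restricted to non-conflicting writes); (d) $h$ matches the coherence/modification order $\lCO$; and (e) $\SX$ is a legal extracted set of $\ES$ --- maximal conflict-free, visible, and $\lRF$-complete --- so that ``the execution graph associated with $\SX$'' is well-defined.

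The proof then proceeds in a few routine steps. First I would note that since $\TCfinal{G} = \tup{G.\lE,\, G.\lW}$, the covered set is all of $G.\lE$, so $h$ is defined on every event of $G$; combined with the invariant that $h$ is onto $\SX$ (equivalently, every event of $\SX$ has a covered $G$-preimage, which holds because no event of $\SX$ can be ``stale'' once everything is covered), $h$ becomes a bijection $G.\lE \to \SX$. Second, with $\CoveredSet = G.\lE$ the ``for covered events'' side-conditions in (b)--(d) become vacuous, so $h$ preserves and reflects labels, $\lPO$, $\lRF$, and $\lCO$ outright, whence $h^{-1}$ is an isomorphism from the execution graph associated with $\SX$ onto $G$ on the ``plain'' part of the structure. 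Third, for the dependency relations ($\lPPO$, $\lDATA$, $\lADDR$, $\lCTRL$, $\lRMW$, $\ldots$) one observes that those of the graph associated with $\SX$ are exactly the ones induced by the thread-local semantics of $\prog$ along $\lPO$; since $h$ is a $\lPO$- and label-isomorphism and both graphs are generated from the same program $\prog$, these relations are matched too, which completes the isomorphism.

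I do not expect this lemma itself to be the hard part --- among Simulation Start / Step / End it is the lightest, essentially an unfolding. The real work is upstream, in arranging the definition of $\simrel$ so that invariants (c) and (e) hold throughout the construction: the $\lRF$ clause has to bridge $G$'s single reads-from edges and \weakestmo's two-stage justification-plus-$\lEW$ mechanism and, crucially, must ensure that among the writes $\lEW$-equal to the justifying write exactly one lies in $\SX$, so that the extracted reads-from is a function; and the extraction clause must guarantee that $\SX$ picks exactly one representative per $G$-event and no leftover events of abandoned re-executions, which is precisely what upgrades $h$ from an injection to the bijection needed above. Once those invariants are in hand, the argument sketched here is direct.
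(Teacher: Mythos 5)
Your proposal is correct and follows essentially the same route as the paper: specialise $\simrel$ to $\TCfinal{G}=\tup{G.\lE,G.\lW}$ so that the covered set is all of $G.\lE$, observe that the event correspondence becomes a bijection between $\SX$ and $G.\lE$ (injectivity on $\SX$ coming from conflict-freeness together with the $\fcomap{\mathtt{id}}\subseteq \ES.\lCF^?$ clause), and read off preservation of labels, $\lPO$, $\lRF$ (via the justification clause for covered events), and $\lCO$ directly from the simulation invariants. Your extra step about dependency relations is harmless but unnecessary, since the execution graph associated with $\SX$ carries only the components $\lE,\lTID,\lLAB,\lPO,\lRF,\lCO$.
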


The proof of \cref{thm:main} then proceeds by induction on the length
of the traversal $G \vdash \TCinit{G} \travstep{}^* \TCfinal{G}$.
\Cref{lemma:simstart} serves as the base case,
\cref{lemma:simstep} is the induction step
simulating each traversal step with a number of event structure
construction steps, and \cref{lemma:simend} concludes the proof.

The proofs of \cref{lemma:simstart,lemma:simend} are technical
but fairly straightforward.
(We define $\simrel$ in a way that makes these lemmas immediate.)
In contrast, \cref{lemma:simstep} is much more difficult to prove.
As we will see, simulating a traversal step sometimes requires us
to construct a new branch in the event structure,
\ie to add multiple events (see \cref{sec:simstep}).

\subsection{\weakestmo to \IMM Compilation Correctness by Example}
\label{sec:overview:compile}

Before presenting any formal definitions, we conclude this overview section
by showcasing the construction used in the proof of \cref{lemma:simstep}
on execution graph $\Glb$ in \cref{fig:lbWeak1} following the traversal
of \cref{fig:lbTravConfigs}.
We have actually already seen the sequence of event structures constructed
in \cref{fig:esc}.
Note that, even though \cref{fig:lbTravConfigs,fig:esc} have the same number of steps,
there is no one-to-one correspondence between them as we explain below.

Consider the last event structure $\ESf$ from~\cref{fig:esc}.
A subset of its events $\SXf$ marked by \smallXBoxText{},
which we call a \emph{simulated execution},
is a maximal conflict-free subset of $\ESf$ and
all read events in $\SXf$ read from some write in $\SXf$
(\ie are justified from a write deemed ``equal'' to some write in $\SXf$).
Then, by definition, $\SXf$ is extracted from $\ESf$.
Also, an execution graph induced by $\SXf$ is isomorphic to $\Glb$.
That is, construction of $\ESf$ for \ref{ex:LB} shows that in \weakestmo it is possible to observe the same behavior as $\Glb$.
Now, we explain how we construct $\ESf$ and choose $\SXf$.

During the simulation, we maintain the relation $\simrel(\prog, G, \TC, \ES, \SX)$
connecting a program $\prog$, its execution graph $G$,
its traversal configuration $\TC$, an event structure $\ES$, and a subset of its events $\SX$.
Among other properties (presented in \cref{sec:simrel}),
the relation states that all issued and covered events of $\TC$ have exact counterparts in $\SX$,
and that $\SX$ can be extracted from $\ES$.

The initial event structure and $\SXinit$ consist of only initial events.
Then, following issuing of event $\mese{1}{2}{} \ewlab{}{y}{1}{}$ in~\TCa
(see~\cref{tex:lb1}),
we need to add a branch to the event structure that has $\ewlab{}{y}{1}{}$ in it.
Since \weakestmo requires adding events according to program order,
we first need to add a read event
corresponding to `$\readInst{}{a}{\loc}$' of~\ref{ex:LB}'s thread 1.
Each read event in an event structure has to be justified from somewhere.
In this case, the only write event to location $\loc$ is the initial one.
That is, the added read event $\ese{1}{1}{1}$ is justified from it (see~\cref{esc:lb1}).
In the general case, having more than one option, we would choose a `safe' write event
for an added read event to be justified from,
\ie the one which the corresponding branch is `aware' of already and
being justified from which would not break consistency of the event structure.
After that, a write event $\mese{1}{2}{1} \ewlab{}{y}{1}{}$ can be added $\lPO$-after $\ese{1}{1}{1}$ (see~\cref{esc:lb2}),
and $\simrel(\text{\ref{ex:LB}}, \Glb, \TCa, \ESb, \SXb)$ holds for
$\SXb = \set{\Init, \ese{1}{1}{1}, \ese{1}{2}{1}}$.

Next, we need to simulate the second traversal step (see~\cref{tex:lb2}), which issues $\ewlab{}{\loc}{1}{}$.
As with the previous step, we first need to add a read event
related to the first read instruction of~\ref{ex:LB}'s thread 2 (see~\cref{esc:lb3}).
However, unlike the previous step, the added event $\ese{2}{1}{}$ has to get value $1$, since
there is a dependency between instructions in thread 2.
As we mentioned earlier, the traversal strategy guarantees that
$\mese{1}{2}{} \ewlab{}{y}{1}{}$ is issued at the moment of issuing $\mese{2}{2}{} \ewlab{}{\loc}{1}{}$,
so there is the corresponding event in the event structure to justify the read event $\ese{2}{1}{}$ from.
Now, the write event $\mese{2}{2}{} \ewlab{}{y}{1}{}$ representing $\ese{2}{2}{}$ can be added to
the event structure (see~\cref{esc:lb4})
and $\simrel(\text{\ref{ex:LB}}, \Glb, \TCb, \ESd, \SXd)$ holds for
$\SXd = \set{\Init, \ese{1}{1}{1}, \ese{1}{2}{1}, \ese{2}{1}{}, \ese{2}{2}{}}$.

In the third traversal step (see~\cref{tex:lb3}), the read event $\mese{1}{1}{} \erlab{}{\loc}{1}{}$ is covered.
To have a representative event for $\ese{1}{1}{}$ in the event structure, we
add $\ese{1}{1}{2}$ (see~\cref{esc:lb5}).
It is justified from $\ese{2}{2}{}$, which writes the needed value $1$.
Also, $\ese{1}{1}{2}$ represents an alternative to $\ese{1}{1}{1}$ execution of the first instruction of thread 1,
so the events are in conflict.

However, we cannot choose a simulated execution $\SX$ related to $\TCc$ and $\ESe$ by
the simulation relation since $\SX$ has to contain $\ese{1}{1}{2}$ and
a representative for $\mese{1}{2}{} \ewlab{}{y}{1}{}$
(in $\ESe$ it is represented by $\ese{1}{2}{1}$)
while being conflict-free.
Thus, the event structure has to make one other step (see~\cref{esc:lb6})
and add the new event $\ese{1}{2}{2}$ to represent $\mese{1}{2}{} \ewlab{}{y}{1}{}$.
Now, the simulated execution contains everything needed,
$\SXf = \set{\Init, \ese{1}{1}{2}, \ese{1}{2}{2}, \ese{2}{1}{}, \ese{2}{2}{}}$.

Since $\SXf$ has to be extracted from $\ESf$,
every read event in $\SX$ has to be connected via an $\lRF$ edge to
an event in $\SX$.%
\footnote{Actually, it is easy to show that there could be only one such event
  since equal writes are in conflict and $\SX$ is conflict-free.
}
To preserve the requirement, we connect the newly added event
$\ese{1}{2}{2}$ and $\ese{1}{2}{1}$ via an $\lEW$ edge,
\ie marking them to be equal writes.%
\footnote{Note that we could have left $\ese{1}{2}{2}$ without any outgoing
$\lEW$ edges since the choice of equal writes for newly added events in \weakestmo
is non-deterministic. However, that would not preserve the simulation relation.
}
This induces an $\lRF$ edge between $\ese{1}{2}{2}$ and $\ese{2}{1}{}$.
That is, $\simrel(\text{\ref{ex:LB}}, \Glb, \TCc, \ESf, \SXf)$ holds.


To simulate the remaining traversal steps (\cref{tex:lb4,tex:lb5,tex:lb6}),
we do not need to modify $\ESf$ because it already contains counterparts
for the newly covered events and, moreover,
the execution graph associated with $\SXf$ is isomorphic to $\Glb$.
That is, we just need to show that
$\simrel(\text{\ref{ex:LB}}, \Glb, \TCd, \ESf, \SXf)$,
$\simrel(\text{\ref{ex:LB}}, \Glb, \TCe, \ESf, \SXf)$,
and $\simrel(\text{\ref{ex:LB}}, \Glb, \TCf, \ESf, \SXf)$ hold.

\section{Formal Definition of \weakestmo}
\label{sec:weakestmo}

In this section, we introduce the notation used in the rest of the paper
and define the \weakestmo memory model.  For simplicity, we present only a
minimal fragment of \weakestmo containing only relaxed reads and writes.
For the definition of the full \weakestmo model, we refer the readers to
Chakraborty and Vafeiadis~\cite{Chakraborty-Vafeiadis:POPL19} and to our
Coq development~\cite{appendix}.

  \subparagraph*{Notation}

  Given relations $R_1$ and $R_2$, we write $R_1 \seq R_2$ for their sequential composition.
  Given relation $R$, we write $R^?$, $R^+$ and $R^*$ to denote its
  reflexive, transitive and reflexive-transitive closures.
  We write $\mathtt{id}$ to denote the identity relation (\ie $\mathtt{id} \defeq \set{\tup{x, x}}$).
  For a set $A$, we write $[A]$ to denote the identity relation restricted to $A$ (that is, $[A]\defeq\set{\tup{a,a}\st a\in A}$).
  Hence, for instance, we may write $[A] \seq R\seq  [B]$ instead of $R\cap (A\times B)$.
  We also write $[e]$ to denote $[\set{e}]$ if $e$ is not a set.

  Given a function $f\colon A \to B$, we denote by ${=}_f$ the set of $f$-equivalent elements:
  (${=}_f \defeq \set{\tup{a,b} \in A \times A \st f(a)=f(b)}$).
  In addition, given a relation $R$, we denote by $R\rst{= f}$
  the restriction of $R$ to $f$-equivalent elements
  ($R\rst{= f} \defeq R \cap {=}_f$),
  and by $R\rst{\neq f}$ be the restriction of $R$ to non-$f$-equivalent elements
  ($R\rst{\neq f} \defeq R \setminus {=}_f$).

  \subsection{Events, Threads and Labels}

  \emph{Events}, $e\in\mathsf{Event}$, and \emph{thread identifiers}, $\tid\in\Tid$,
  are represented by natural numbers.
  We treat the thread with identifier $0$ as the \emph{initialization} thread.
  We let $x \in \Loc$ to range over \emph{locations},
  and $v \in \Val$ over \emph{values}.

%
%

  A label, $l \in \Lab$, takes one of the following forms:
  \begin{itemize}
    \item $\lR(x, v)$ --- a read of value $v$ from location $x$. 
    \item $\lW(x, v)$ --- a write of value $v$ to location $x$. 
  \end{itemize}
  Given a label $l$ the functions $\lTYP$, $\lLOC$, $\lVAL$ 
  return (when applicable) its type (\ie $\lR$ or $\lW$), 
  location and value correspondingly. 
  When a specific function assigning labels to events is clear from the context,
  we abuse the notations $\lR$ and $\lW$ 
  to denote the sets of all events labelled with the corresponding type.
  We also use subscripts 
  to further restrict this set to a specific location
  (\eg $\lW_{x}$ denotes the set of write events 
  operating on location $x$.) 

  \subsection{Event Structures}

  An \emph{event structure} $\ES$ is a tuple
  $\tup{\lE, \lTID, \lLAB, \lPO, \lJF, \lEW, \lCO}$ where:
  \begin{itemize}

    \item $\lE$ is a set of events, \ie $\lE \suq \mathsf{Event}$.

    \item $\lTID : \lE \fn \Tid$ is a function assigning a thread identifier to every event.
      We treat events with the thread identifier equal to $0$ as
      \emph{initialization events} and denote them as $\lEo$,
      that is $\lEo \defeq \set{e \in \lE \st \lTID(e) = 0}$.

    \item $\lLAB : \lE \fn \Lab$ is a function assigning a label to every event in $\lE$.

    \item $\lPO \subseteq \lE \times \lE$ is a strict partial order on events,
    called \emph{program order}, that tracks their precedence in the control flow of the program.
    Initialization events are $\lPO$-before all other events,
	whereas non-initialization events can only be $\lPO$-before events from the same thread.

    Not all events of a thread are necessarily ordered by $\lPO$.
    We call such $\lPO$-unordered non-initialization events of the same thread \emph{conflicting} events.
    The corresponding binary relation $\lCF$ is defined as follows:
    \begin{equation*}
      \lCF \defeq ([\lE \setminus \lEo] \seq {=_{\lTID}} \seq [\lE \setminus \lEo]) \setminus (\lPO \cup \lPO^{-1})^?
    \end{equation*}


%
    \item $\lJF \subseteq [\lE \cap \lW] \seq  ({=_{\lLOC}} \cap {=_{\lVAL}}) \seq [\lE \cap \lR]$ is the \emph{justified from} relation,
      which relates a write event to the reads it justifies.
      We require that reads are not justified by conflicting writes
      (\ie ${\lJF \cap \lCF = \emptyset}$)
      and $\lJF^{-1}$ be \emph{functional}
      (\ie whenever $\tup{w_1, r}, \tup{w_2, r} \in \lJF$, then $w_1 = w_2$).

	  We also define the notion of \emph{external} justification: $\lJFE \defeq \lJF \setminus \lPO$.
	  A read event is externally justified from a write if the write is
      not $\lPO$-before the read.

    \item $\lEW \subseteq [\lE \cap \lW] \seq (\lCF \cap {=_{\lLOC}} \cap {=_{\lVAL}})^? \seq [\lE \cap \lW]$ is
      an equivalence relation called the \emph{equal-writes} relation.
      Equal writes have the same location and value,
      and (unless identical) are in conflict with one another.

    \item $\lCO \subseteq [\lE \cap \lW]\seq ({=_{\lLOC}} \setminus \lEW) \seq [\lE \cap \lW]$ is
      the \emph{coherence} order, a strict partial order that relates
      non-equal write events with the same location.
      We require that coherence be closed with respect to equal writes
      (\ie $\lEW \seq \lCO \seq \lEW \subseteq \lCO$)
      and total with respect to $\lEW$ on writes to the same location:
      \begin{equation*}
\forall x \in \Loc \ldotp~ \forall w_1,w_2 \in \lW_{x} \ldotp~
         \tup{w_1, w_2} \in \lEW \cup \lCO \cup \lCO^{-1}
      \end{equation*}

  \end{itemize}

  Given an event structure $S$, we use ``dot notation'' to refer to its components (\eg $S.\lE$, $S.\lPO$).
  For a set $A$ of events, we write $S.A$ for the set $A\cap\ES.\lE$ 
  (for instance, $S.\lW_x = \set{e \in \ES.\lE \st \lTYP(S.\lLAB(e)) = \lW \land \lLOC(S.\lLAB(e)) = x}$).
  Further, for $e\in S.\lE$, we write $S.\lTYP(e)$ to retrieve $\lTYP(S.\lLAB(e))$.
  Similar notation is used for the functions $\lLOC$ and $\lVAL$. 
  Given a set of thread identifiers $T$, we write $S.\thread{T}$ to denote
  the set of events belonging to one of the threads in $T$,
  \ie $S.\thread{T} \defeq \set{e \in S.\lE \mid S.\lTID(e) \in T}$.
  When $T=\set{\thread{t}}$ is a singleton, we often write $S.\thread{t}$ instead of $S.\thread{\set{t}}$.

We define the immediate $\lPO$ and $\lCF$ edges of an event structure as follows:
\begin{equation*}
S.\lPOimm \defeq S.\lPO \setminus (S.\lPO \seq S.\lPO)       \qquad\qquad
S.\lCFimm \defeq S.\lCF \cap (S.\lPOimm^{-1} \seq S.\lPOimm)
\end{equation*}
An event $e_1$ is an immediate $\lPO$-predecessor of $e_2$ if
$e_1$ is $\lPO$-before $e_2$ and there is no event $\lPO$-between them.
Two conflicting events are immediately conflicting if they have the same immediate $\lPO$-predecessor.%
\footnote{Our definition of immediate conflicts differs from that of
  \cite{Chakraborty-Vafeiadis:POPL19} and is easier to work with.
  The two definitions are equivalent if the set of initialization events is non-empty.
}

\subsection{Event Structure Construction}

Given a program $\prog$, we construct its event structures operationally
in a way that guarantees completeness (\ie that every read is justified
from some write) and $\lPO \cup \lJF$ acyclicity.
We start with an event structure containing only the initialization events
and add one event at a time following each thread's semantics.

For the thread semantics,
we assume reductions of the form ${\pstate \threadstep{}{e} \pstate'}$
between thread states $\pstate, \pstate' \in \mathtt{ThreadState}$
and labeled by the event $e \in \lE$ generated by that execution step.
Given a thread $t$ and a sequence of events $e_1, \ldots, e_n \in \ES.\thread{t}$
in immediate $\lPO$ succession (\ie $\tup{e_i,e_{i+1}}\in \ES.\lPOimm$ for $1 \leq i < n$)
starting from a first event of thread $t$ (\ie $\dom{\ES.\lPO; [e_1]} \subseteq \lEo$),
we can add an event $e$ $\lPO$-after that sequence of events provided that
there exist thread states $\pstate_1, \ldots, \pstate_n$ and $\pstate'$ such that
$\prog(t) \threadstep{}{e_1} \pstate_1 \threadstep{}{e_2} \pstate_2 \cdots
 \threadstep{}{e_n} \pstate_n \threadstep{}{e} \pstate'$,
where $\prog(t)$ is the initial thread state of thread $t$ of the program $\prog$.
By construction, this means that the newly added event $e$ will be in conflict
with all other events of thread $t$ besides $e_1,\ldots,e_n$.

Further, when the new event $e$ is a read event,
it has to be justified from an existing write event,
so as to ensure completeness and prevent ``out-of-thin-air'' values.
The write event is picked non-deterministically from all non-conflicting writes
with the same location as the new read event.
Similarly, when $e$ is a write event, its position in $\lCO$ order should be chosen.
It can be done by either picking an $\lEW$ equivalence class and including the new write in it,
or by putting the new write immediately after some existing write in $\lCO$ order.
At each step, we also check for \emph{event structure consistency}
(to be defined in \cref{def:es-cons}):
If the event structure obtained after the addition of the new event is inconsistent,
it is discarded.

\subsection{Event Structure Consistency}

To define consistency, we first need a number of auxiliary definitions.
The \emph{happens-before} order $\ES.\lHB$ is a generalization of the program order.
Besides the program order edges, it includes certain \emph{synchronization} edges
(captured by the \emph{synchronizes with} relation, $\ES.\lSW$).
\begin{equation*}
  \ES.\lHB \defeq (\ES.\lPO \cup \ES.\lSW)^+
\end{equation*}
For the fragment covered in this section, there are no synchronization edges
(\ie $\lSW=\emptyset$), and so $\lHB$ and $\lPO$ coincide.
In the full model,%
\footnote{The full model is presented in \cite{Chakraborty-Vafeiadis:POPL19} and also in our Coq development~\cite{appendix}.}
however, certain justification edges (\eg between
release/acquire accesses) contribute to $\lSW$ and hence to $\lHB$.

The \emph{extended conflict} relation $\ES.\lECF$ extends the notion of conflicting events
to account for $\lHB$;
two events are in extended conflict if they happen after conflicting events.
\begin{equation*}
  \ES.\lECF \defeq (\ES.\lHB^{-1})^? \seq \ES.\lCF \seq \ES.\lHB^?
\end{equation*}

As already mentioned in \cref{sec:overview},
the \emph{reads-from} relation, $\ES.\lRF$, of a \weakestmo event structure is derived.
It is defined as an extension of $\ES.\lJF$ to all $\ES.\lEW$-equivalent writes.
\begin{equation*}
  \ES.\lRF \defeq (\ES.\lEW \seq S.\lJF) \setminus \ES.\lCF
\end{equation*}
Note that unlike $\ES.\lJF^{-1}$, the relation $\ES.\lRF^{-1}$ is not functional.
This does not cause any problems, however, since all the writes from whence a read
reads have the same location and value and are in conflict with one another.

The relation $\ES.\lFR$, called \emph{from-read} or \emph{reads-before},
places read events before subsequent writes.
\begin{equation*}
  \ES.\lFR   \defeq \ES.\lRF^{-1} \seq \ES.\lCO
\end{equation*}
The \emph{extended coherence} $S.\lECO$ is a strict partial order
that orders events operating on the same location.
(It is almost total on accesses to a given location, except that it
does not order equal writes nor reads reading from the same write.)
\begin{equation*}
  \ES.\lECO  \defeq (\ES.\lCO \cup \ES.\lRF \cup \ES.\lFR)^+
\end{equation*}
We observe that in our model, $\lECO$ is equal to $\lRF \cup \lCO \seq \lRF^? \cup \lFR \seq \lRF^?$,
similar to the corresponding definitions about execution graphs in the literature.%
\footnote{This equivalence equivalence does not hold in the original \weakestmo model
 \cite{Chakraborty-Vafeiadis:POPL19}.  To make the equivalence hold, we made $\lEW$ transitive,
 and require $\lEW \seq \lCO \seq \lEW \subseteq \lCO$.}


The last ingredient that we need for event structure consistency is the notion of \emph{visible} events,
which will be used to constrain external justifications.
We define it in a few steps.
Let $e$ be some event in $\ES$.
First, consider all write events used to externally justify $e$ or one of its justification ancestors.
The relation $\ES.\lJFE \seq (\ES.\lPO \cup \ES.\lJF)^*$ defines this connection formally.
Among that set of write events restrict attention to those conflicting with $e$, and call that set $M$.
That is, $M \defeq \dom{\ES.\lCF \cap (\ES.\lJFE \seq (\ES.\lPO \cup \ES.\lJF)^*) \seq [e]}$.
Event $e$ is \emph{visible} if all writes in $M$ have an equal write
that is $\lPO$-related with $e$. Formally,%
\footnote{Note, that in \cite{Chakraborty-Vafeiadis:POPL19} the definition of the visible events is slightly more verbose.
          We proved in Coq~\cite{appendix} that our simpler definition is equivalent to the one given there.}
\begin{equation*}
  \ES.\lVIS \defeq \set{e \in \ES.\lE \mid
  \ES.\lCF \cap (\ES.\lJFE \seq (\ES.\lPO \cup \ES.\lJF)^*) \seq [e] \subseteq \ES.\lEW \seq (\ES.\lPO \cup \ES.\lPO^{-1})^?}
\end{equation*}
Intuitively, visible events cannot depend on conflicting events: for every such
justification dependence, there ought to be an equal non-conflicting write.

\emph{Consistency} places a number of additional constraints on event structures.
First, it checks that there is no redundancy in the event structure:
immediate conflicts arise only because of read events justified from
non-equal writes.
Second, it extends the constraints about $\lCF$ to the extended conflict $\lECF$;
namely that no event can conflict with itself or be justified from a conflicting event.
Third, it checks that reads are justified either from events of the
same thread or from visible events of other threads.
Finally, it ensures \emph{coherence},
\ie that executions restricted to accesses on a single location
do not have any weak behaviors.

\begin{definition}
\label{def:es-cons}

An event structure $\ES$ is said to be \emph{consistent} if the following conditions hold.

\begin{itemize}
  \item $\dom{S.\lCFimm} \subseteq S.\lR$
    \labelAxiom{$\lCFimm$-read}{ax:icf-read}

  \item $S.\lJF \seq S.\lCFimm \seq S.\lJF^{-1} \seq S.\lEW$ is irreflexive.
    \labelAxiom{$\lCFimm$-justification}{ax:icf-jf}

  \item $S.\lECF$ is irreflexive.
   \labelAxiom{$\lECF$-irreflexivity}{ax:ecf-irr}

  \item $S.\lJF \cap S.\lECF = \emptyset$
   \labelAxiom{$\lJF$-non-conflict}{ax:jf-necf}

  \item $\dom{S.\lJFE} \subseteq S.\lVIS$
    \labelAxiom{$\lJFE$-visible}{ax:jf-vis}

  \item $S.\lHB \seq S.\lECO^?$ is irreflexive.
    \labelAxiom{coherence}{ax:es-coh}

\end{itemize}
\end{definition}

\subsection{Execution Extraction}
\label{sec:weakestmo:execution}

The last part of \weakestmo is the extraction of executions from an event structure.
An execution is essentially a conflict-free event structure.

\begin{definition}
\label{def:execution_graph}
  An \emph{execution graph} $G$ is a tuple
  $\tup{\lE, \lTID, \lLAB, \lPO, \lRF, \lCO}$
  where its components are defined similarly as in the case of an event structure with the following exceptions:
  \begin{itemize}
    \item $\lPO$ is required to be total on the set of events from the same thread.
    Thus, execution graphs have no conflicting events, \ie $\lCF = \emptyset$.

    \item The $\lRF$ relation is given explicitly instead of being derived.
    Also, there are no $\lJF$ and $\lEW$ relations.

    \item $\lCO$ totally orders write events operating on the same location.
  \end{itemize}
\end{definition}
  All derived relations are defined similarly as for event structures.
%
Next we show how to extract an execution graph from the event structure.
\begin{definition}
  A set of events $\SX$ is called \emph{extracted from $\ES$} if the following conditions are met:
  \begin{itemize}
    \label{def:extracted}

    \item $\SX$ is conflict-free, \ie $[\SX] \seq \ES.\lCF \seq [\SX] = \emptyset$.

    \item $\SX$ is $\ES.\lRF$-complete, \ie $\SX \cap \ES.\lR \subseteq \codom{[\SX] \seq \ES.\lRF}$.

    \item $\SX$ contains only visible events of $\ES$, \ie $\SX \subseteq \ES.\lVIS$.

    \item $\SX$ is $\lHB$-downward-closed, \ie $\dom{\ES.\lHB \seq [\SX]} \subseteq \SX$.

  \end{itemize}
\end{definition}

Given an event structure $\ES$ and extracted subset of its events $\SX$,
it is possible to associate with $\SX$ an execution graph $G$
simply by restricting the corresponding components of $\ES$ to $\SX$:
\[\def\arraystretch{1.3}
\begin{array}{l@{\;}c@{\;}l@{\qquad}l@{\;}c@{\;}l@{\qquad}l@{\;}c@{\;}l}
 G.\lE & = & \SX &
 G.\lTID & = & S.\lTID\rst{\SX} &
 G.\lLAB & = & S.\lLAB\rst{\SX} \\

 G.\lPO & = & [\SX] \seq \ES.\lPO \seq [\SX] & 
 G.\lRF & = & [\SX] \seq \ES.\lRF \seq [\SX] & G.\lCO & = & [\SX] \seq \ES.\lCO \seq [\SX]    \\
\end{array}
\]
We say that such execution graph $G$ is \emph{associated with} $\SX$
and that it is \emph{extracted} from the event structure: $S \rhd G$.


\weakestmo additionally defines another consistency predicate to further
filter out some of the extracted execution graphs.  In the \weakestmo fragment
we consider, this additional consistency predicate is trivial---every extracted
execution satisfies it---and so we do not present it here.
In the full model, execution consistency checks atomicity of read-modify-write
instructions, and sequential consistency for SC accesses.

%
%
%
%
%
%

\section{Compilation Proof for \weakestmo}
\label{sec:wmoproof}

In this section, we outline our correctness proof for the compilation from
\weakestmo to the various hardware models.
As already mentioned, our proof utilizes \IMM~\cite{Podkopaev-al:POPL19}.
In the following, we briefly present \IMM for the fragment of the model
containing only relaxed reads and writes (\cref{sec:imm}), our simulation
relation (\cref{sec:simrel}) for the compilation from \weakestmo to \IMM,
and outline the argument as to why the simulation relation is preserved
(\cref{sec:simstep}). 
Mapping from \IMM to the hardware models has already been proved correct
by Podkopaev \etal~\cite{Podkopaev-al:POPL19}, so we do not present
this part here.
Later, in \cref{sec:sc}, we will extend the \IMM mapping results to cover SC accesses.

As a further motivating example for this section consider yet another variant
of the load buffering program shown in \cref{fig:LBxyz}.
As we will see, its annotated weak behavior is allowed by \IMM and also by
\weakestmo, albeit in a different way.  The argument for constructing the
\weakestmo event structure that exhibits the weak behavior from the given
\IMM execution graph is non-trivial.

\begin{figure}[t]
\hfill$\inarrII{
  \readInst{}{r_1}{\locx} \comment{1} \\[1mm]
  \writeInst{}{\locy}{r_1}            \\[1mm]
  \writeInst{}{\locz}{1}              \\
}{
  \readInst{}{r_2}{\locy} \comment{1} \\[1mm]
  \readInst{}{r_3}{\locz} \comment{1} \\[1mm]
  \writeInst{}{\locx}{r_3}  \\
}$
\hfill\vrule\hfill
$\inarr{\begin{tikzpicture}[scale=0.8, every node/.style={transform shape}]
  \node (init) at (2,  1)  {$\Init$};
  \node (i11)  at (0,  0)   {$\ese{1}{1}{}\colon \erlab{}{\locx}{1}{}$};
  \node (i12)  at (0, -1)   {$\ese{1}{2}{}\colon \ewlab{}{\locy}{1}{}$};
  \node (i13)  at (0, -2)   {$\ese{1}{3}{}\colon \ewlab{}{\locz}{1}{}$};
  \node (i21)  at (4,  0)   {$\ese{2}{1}{}\colon \erlab{}{\locy}{1}{}$};
  \node (i22)  at (4, -1)   {$\ese{2}{2}{}\colon \erlab{}{\locz}{1}{}$};
  \node (i23)  at (4, -2)   {$\ese{2}{3}{}\colon \ewlab{}{\locx}{1}{}$};
  \draw[rf] (i13) edge node[below] {\small$\lRF$} (i22);
  \draw[rf] (i23) edge node[pos=.4,above] {\small\phantom{j}$\lRF$} (i11);
  \draw[rf] (i12) edge node[above] {\small$\lRF$} (i21);
  \draw[ppo,out=230,in=130] (i11) edge node[left ,pos=0.8] {\small$\lPPO$} (i12);
  \draw[ppo,out=310,in=50 ] (i22) edge node[right,pos=0.3] {\small$\lPPO$} (i23);
  \draw[po] (init) edge (i11);
  \draw[po] (init) edge (i21);
  \draw[po] (i11)  edge (i12);
  \draw[po] (i12)  edge (i13);
  \draw[po] (i21)  edge (i22);
  \draw[po] (i22)  edge (i23);
\end{tikzpicture}}$
\caption{A variant of the load-buffering program (left)
and the \IMM graph $G$ corresponding to its annotated weak behavior (right).}
\label{fig:LBxyz}
\end{figure}

\subsection{The Intermediate Memory Model \IMM}
\label{sec:imm}

In order to discuss the proof, we briefly present a simplified version of
the formal \IMM definition, where we have omitted constraints about RMW
accesses and fences.

\begin{definition}
  An \emph{\IMM execution graph} $G$ is an execution graph (\cref{def:execution_graph}) extended with one additional component:
  the \emph{preserved program order} $\lPPO \subseteq [\lR] \seq \lPO \seq [\lW]$.
\end{definition}

Preserved program order edges correspond to syntactic dependencies guaranteed
to be preserved by all major hardware platforms.
For example, the execution graph in \cref{fig:LBxyz} has two $\lPPO$ edges
corresponding to the data dependencies via registers $r_1$ and $r_3$.
(The full \IMM definition \cite{Podkopaev-al:POPL19} distinguishes between
the different types of dependencies---control, data, adress--and includes
them as separate components of execution graphs. In the full model, $\lPPO$
is actually derived from the more basic dependencies.)

\IMM-consistency checks completeness, coherence, and acyclicity:%
\footnote{Again, this is a simplified presentation for a fragment of the model.
We refer the reader to Podkopaev \etal~\cite{Podkopaev-al:POPL19} for the full definition, 
which further distinguishes between
internal and external $\lRF$ edges.}
\begin{definition}
\label{def:model_IMM}
An \IMM execution graph $G$ is \emph{\IMM-consistent} if
\begin{itemize}
\item $\codom{G.\lRF} = G.\lR$, \labelAxiom{completeness}{ax:complete}
\item $G.\lHB \seq G.\lECO^?$ is irreflexive, and \labelAxiom{coherence}{ax:coh}
\item $G.\lRF \cup G.\lPPO$ is acyclic. \labelAxiom{no-thin-air}{ax:nta}
\end{itemize}
\end{definition}

As we can see, the execution graph $G$ of \cref{fig:LBxyz} is \IMM-consistent
because every read of the graph reads from some write event
and, moreover, the \textsc{coherence} and \textsc{no-thin-air} properties hold.

\subsection{Simulation Relation for \weakestmo to \IMM Proof}
\label{sec:simrel}
In this section, we define the simulation relation $\simrel$
\footnote{A refined version of the simulation relation for the full \weakestmo model can be found in \citeapp{app:simrel}{Appendix A}}, 
which is used for the simulation of a traversal of an \IMM-consistent execution graph
by a \weakestmo event structure presented in \cref{sec:overview:proofstructure}.

The way we define $\simrel(\prog, G, \tup{\CoveredSet, \IssuedSet}, \ES, \SX)$ induces a strong connection between events in 
the execution graph $G$ and the event structure $\ES$.
We make this connection explicit with the function $\ea_{G, \ES} : \ES.\lE \fn G.\lE$,
which maps events of the event structure $S$ into the events of the execution graph $G$,
such that $e$ and $\ea_{G, \ES}(e)$ belong to the same thread and have the same $\lPO$-position in the thread.%
\footnote{Here we assume existence and uniqueness of such a function.
  In our Coq development~\cite{appendix}, 
  we have a different representation of execution graph events (but the same for events of event structures),
  which makes the existence and uniqueness questions trivial.

  More specifically, we follow Podkopaev \etal~\cite[\S 2.2]{Podkopaev-al:POPL19}.
  There each non-initializing event $e$ of an execution graph $G$ is encoded as a pair $\tup{t, n}$
  where $t$ is $e$'s thread and $n$ is a serial number of $e$ in thread $t$, \ie a position of $e$ in $G.\lPO$ restricted to events of thread $t$;
  each initializing event is encoded by the corresponding location---$\tup{{\sf init} \; l}$.

  In this representation, the function $\ea_{G, \ES}$ for an event $e$ returns (i) the $e$'s thread
  and a number of non-initial events which $\ES.\lPO$-preceded $e$ if $e$ is non-initialing
  or (ii) its location if it is initializing:
  $$
  \ea_{G, \ES}(e) \defeq
  \begin{cases}
    \tup{\ES.\lTID(e), |\dom{[\ES.\lE \setminus \ES.\lEo]; \ES.\lPO; [e]}|} & \text{for } e \nin \ES.\lEo \\
    \tup{{\sf init} \; \ES.\lLOC(e)} & \text{for } e \in \ES.\lEo
  \end{cases}
  $$
}
Note that $\ea_{G,\ES}$ is defined for all events $e\in \ES.\lE$,
meaning that the event structure $S$ does not contain any redundant events
that do not correspond to events in the \IMM execution graph $G$.
The function $\ea_{G, \ES}$, however, does not have to be injective:
in particular, events $e$ and $e'$ that are in immediate conflict in $\ES$
have the same $\ea_{G, \ES}$-image in $G$.
In the rest of the paper, whenever $G$ and $\ES$ are clear from the context,
we omit the $G, \ES$ subscript from $\ea$.

In the context of a function $\ea$ (for some $G$ and $\ES$), 
we also use $\fmap{\cdot}$ and $\fcomap{\cdot}$ to lift $\ea$ to sets and relations:
\begin{align*}
\text{for } A_\ES \subseteq \ES.\lE        & : 
  \fmap{A_\ES} \defeq \set{\ea(e) \mid e \in A_\ES} \\
\text{for } A_G \subseteq G.\lE        & : 
  \fcomap{A_G} \defeq \set{e \in \ES.\lE \mid \ea(e) \in A_G} \\[5pt]
\text{for } R_\ES \suq \ES.\lE \times \ES.\lE  & : 
  \fmap{R_\ES} \defeq \set{\tup{\ea(e), \ea(e')} \mid \tup{e, e'} \in R_\ES} \\
\text{for } R_G \suq G.\lE \times G.\lE  & : 
  \fcomap{R_G} \defeq \set{\tup{e, e'} \in \ES.\lE \times \ES.\lE \mid \tup{\ea(e), \ea(e')} \in R_G}
\end{align*}
For example, $\fcomap{\CoveredSet}$ denotes a subset of $\ES$'s events 
whose $\ea$-images are covered events in $G$,
and $\fmap{\ES.\lRF}$ denotes a relation on events in $G$
whose $\ea$-preimages in $\ES$ are related by $\ES.\lRF$.

We define the relation $\simrel(\prog, G, \tup{\CoveredSet, \IssuedSet}, \ES, \SX)$ to hold if the following conditions are met:%
\begin{enumerate}
  \item \label{simrel:gexec}
    $G$ is an \IMM-consistent execution of $\prog$.

  \item \label{simrel:sexec}
    $\ES$ is a \weakestmo-consistent event structure of $\prog$.

  \item \label{simrel:ex}
    $\SX$ is an extracted subset of $\ES$.

  \item \label{simrel:ex-cov-iss}
    $\ES$ and $\SX$ corresponds precisely to all covered and issued events and their $\lPO$-predecessors:
    \begin{itemize}
      \item $\fmap{\ES.\lE} = \fmap{\SX} = \CoveredSet \cup \dom{G.\lPO^? \seq [\IssuedSet]}$
    \end{itemize}
    (Note that $\CoveredSet$ is closed under $\lPO$-predecessors,
    so $\dom{G.\lPO^? \seq [\CoveredSet]} = \CoveredSet$.)

    
  \item \label{simrel:lab}
    Each $\ES$ event has the same thread, type, modifier, and location
    as its corresponding $G$ event.
    In addition, covered and issued events in $\SX$ have the same value
    as their corresponding ones in $G$.
    \begin{enumerate}
      \item $\forall e \in \ES.\lE. \; \ES.\set{\lTID, \lTYP, \lLOC, \lMOD}(e) = G.\set{\lTID, \lTYP, \lLOC, \lMOD}(\ea(e))$
      \item $\forall e \in \SX \cap \fcomap{C \cup I} \ldotp~ \ES.\lVAL(e) = G.\lVAL(\ea(e))$
    \end{enumerate}

  \item \label{simrel:po}
    Program order in $\ES$ corresponds to program order in $G$:
    \begin{itemize}
      \item $\fmap{\ES.\lPO} \subseteq G.\lPO$
    \end{itemize}

  \item \label{simrel:cf}
    Identity relation in $G$ corresponds to identity or conflict relation in $\ES$:
    \begin{itemize}
      \item $\fcomap{\mathtt{id}} \subseteq S.\lCF^?$
    \end{itemize}

  \item \label{simrel:jf}
    Reads in $\ES$ are justified by writes that have already been observed
    by the corresponding events in $G$.
    Moreover, covered events in $X$ are justified by a write corresponding
    to that read from the corresponding read in $G$:
    \begin{enumerate}
      \item \label{simrel:jf-obs}
        $\fmap{\ES.\lJF} \subseteq G.\lRF^?\seq G.\lHB^?$
      \item \label{simrel:jf-cov}
        $\fmap{\ES.\lJF \seq [\SX \cap \fcomap{C}]} \subseteq G.\lRF$
    \end{enumerate}

  \item \label{simrel:jfe-iss}
    Every write event justifying some external read event 
    should be $\ES.\lEW$-equal to some issued write event in $\SX$:
    \begin{itemize}
      \item $\dom{\ES.\lJFE} \subseteq \dom{\ES.\lEW \seq [\SX \cap \fcomap{I}]}$
    \end{itemize}

  \item \label{simrel:ew-id}
    Equal writes in $\ES$ correspond to the same write event in $G$:
    \begin{itemize}
      \item $\fmap{\ES.\lEW} \subseteq \mathtt{id}$
    \end{itemize}

  \item \label{simrel:ew-iss}
    Every non-trivial $\ES.\lEW$ equivalence class contains an issued write in $\SX$:
    \begin{itemize}
      \item $\ES.\lEW \subseteq (\ES.\lEW \seq [\SX \cap \fcomap{I}] \seq \ES.\lEW)^?$
    \end{itemize}

  \item \label{simrel:co}
    Coherence edges in $\ES$ correspond to coherence or identity edges in $G$.
    (We will explain in \cref{sec:simstep} why a coherence edge in $\ES$ might
    correspond to an identity edge in $G$.)
    \begin{itemize}
      \item $\fmap{\ES.\lCO} \subseteq G.\lCO^?$
    \end{itemize}
    


\end{enumerate}

\begin{figure}[t]
$\hfill\inarr{\begin{tikzpicture}[scale=0.8, every node/.style={transform shape}]
  \node (init) at (2,  1)   {$\Init$};
  \node (i11)  at (0,  0)   {$\ese{1}{1}{}\colon \erlab{}{\locx}{1}{}$};
  \node (i12)  at (0, -1)   {$\ese{1}{2}{}\colon \ewlab{}{\locy}{1}{}$};
  \node (i13)  at (0, -2)   {$\ese{1}{3}{}\colon \ewlab{}{\locz}{1}{}$};
  \node (i21)  at (4,  0)   {$\ese{2}{1}{}\colon \erlab{}{\locy}{1}{}$};
  \node (i22)  at (4, -1)   {$\ese{2}{2}{}\colon \erlab{}{\locz}{1}{}$};
  \node (i23)  at (4, -2)   {$\ese{2}{3}{}\colon \ewlab{}{\locx}{1}{}$};
  \node (hh) at (2, -3.5) {$\inarrC{\text{The execution graph } G \text{ and} \\\text{its traversal configuration } \TCa}$};
  \begin{scope}[on background layer]
     \issuedCoveredBox{init};
     \issuedBox{i13};
  \end{scope}
  \draw[rf] (i13) edge node[above] {} (i22);
  \draw[rf] (i23) edge node[above] {} (i11);
  \draw[rf] (i12) edge node[above] {} (i21);
  \draw[ppo,out=230,in=130] (i11) edge node[left ,pos=0.8] {\small$\lPPO$} (i12);
  \draw[ppo,out=310,in=50 ] (i22) edge node[right,pos=0.3] {\small$\lPPO$} (i23);
  \draw[po] (init) edge (i11);
  \draw[po] (init) edge (i21);
  \draw[po] (i11)  edge (i12);
  \draw[po] (i12)  edge (i13);
  \draw[po] (i21)  edge (i22);
  \draw[po] (i22)  edge (i23);
\end{tikzpicture}}
\hfill\vrule\hfill
\inarr{\begin{tikzpicture}[scale=0.8, every node/.style={transform shape}]

  \node (init) at (0, 1)   {$\Init$};

  \node (i111) at (-1.5,  0)   {$\ese{1}{1}{1}\colon \erlab{}{\locx}{0}{}$};
  \node (i121) at (-1.5, -1)   {$\ese{1}{2}{1}\colon \ewlab{}{\locy}{0}{}$};
  \node (i131) at (-1.5, -2)   {$\ese{1}{3}{1}\colon \ewlab{}{\locz}{1}{}$};

  \node (i211) at (0.5,  0)   {\phantom{$\ese{2}{1}{1}\colon \erlab{}{\locy}{0}{}$}};
  \node (i221) at (0.5, -1)   {\phantom{$\ese{2}{2}{1}\colon \erlab{}{\locz}{1}{}$}};
  \node (i231) at (0.5, -2)   {\phantom{$\ese{2}{3}{1}\colon \ewlab{}{\locx}{1}{}$}};

  \draw[jf] (init) edge[bend right] node[above]        {\small{$\lJF$}} (i111);

  \draw[po] (init)  edge (i111);
  \draw[po] (i111)  edge (i121);
  \draw[po] (i121)  edge (i131);

  \begin{scope}[on background layer]
    \draw[extractStyle] (-3, 1.5) rectangle (1,-2.5);
  \end{scope}

  \node (hh) at (0, -3.5) {$\inarrC{\text{The event structure } \ESa \text{ and} \\\text{the selected execution } \SXa}$};
\end{tikzpicture}}\hfill$
\caption{%
The execution graph $G$, 
its traversal configuration $\TCa$,
the related event structure $\ESa$,
and the selected execution $\SXa$.
Covered events are marked by
{\protect\tikz \protect\draw[coveredStyle] (0,0) rectangle ++(0.35,0.35);}
and issued ones by 
{\protect\tikz \protect\draw[issuedStyle] (0,0) rectangle ++(0.35,0.35);}.
Events belonging to the selected execution are marked by 
{\protect\tikz \protect\draw[extractStyle] (0,0) rectangle ++(0.35,0.35);}.
}
\label{fig:simrel-a}
\end{figure}

As an example, consider the execution $G$ from \cref{fig:LBxyz},
the traversal configuration $\TCa \defeq \tup{\set{\Init}, \set{\Init, \ese{1}{3}{}}}$,
and the event structure $\ESa$ shown in \cref{fig:simrel-a}.
We will show that $\simrel(\prog, G, \TCa, \ESa, \SXa)$, where $\SXa \defeq \ESa.\lE$, holds.

Take 
$\ea_{G,\ESa} = \set{ \Init \mapsto \Init, 
\ese{1}{1}{1} \mapsto \ese{1}{1}{},
\ese{1}{2}{1} \mapsto \ese{1}{2}{},
\ese{1}{3}{1} \mapsto \ese{1}{3}{} }$.
Given that $\lCF=\lEW=\emptyset$, the consistency constraints hold immediately.
For example, condition \ref{simrel:jf} holds because $\ese{1}{1}{1}$ is
justified by $\Init$, which happens before it.
Finally, note that only $\ese{1}{3}{1}$ and $\ese{1}{3}{}$
are required to have the same value by constraint \ref{simrel:lab},
the other related thread events only need to have the same type and address.

The definition of the simulation relation $\simrel$ renders the proofs of
\cref{lemma:simstart,lemma:simend} straightforward.
Specifically, for \cref{lemma:simstart},
the initial configuration $\TCinit{G}$ containing only the initialization events
is simulated by the initial event structure $\ESinit$
as all the constraints are trivially satisfied
($\ESinit.\lPO = \ESinit.\lJF = \ESinit.\lEW = \ESinit.\lCO = \emptyset$).

For \cref{lemma:simend}, since $\TCfinal{G}$ covers all events of $G$,
property \ref{simrel:lab} implies that
the labels of the events in $\SX$ are equal to the corresponding events of $G$;
property \ref{simrel:po} means that $\lPO$ is the same between them;
property \ref{simrel:jf} means that $\lRF$ is the same between them;
properties \ref{simrel:cf} and \ref{simrel:co} together mean that $\lCO$ is the same.
Therefore, $G$ and the execution corresponding to $\SX$ are isomorphic.

\subsection{Simulation Step Proof Outline}
\label{sec:simstep}

We next outline the proof of \cref{lemma:simstep},
which states that the simulation relation $\simrel$ can be restored after a traversal step.

Suppose that $\simrel(\prog, G, \TC, \ES, \SX)$ holds
for some $\prog$, $G$, $\TC$, $\ES$, and $\SX$, and we need to simulate a traversal step
$\TC \travstep{} \TC'$ that either covers or issues an event of thread $\tid$.
Then we need to produce an event structure $\ES'$ and a subset of its events $\SX'$
such that $\simrel(\prog, G, \TC', \ES', \SX')$ holds.
Whenever thread $t$ has any uncovered issued write events,
\weakestmo might need to take multiple steps from $\ES$ to $\ES'$
so as to add any missing events $\lPO$-before the uncovered issued writes of thread $t$.
Borrowing the terminology of the ``promising semantics''~\cite{Kang-al:POPL17},
we refer to these steps as constructing a certification branch for the issued write(s).

\begin{figure}[t]
$\hfill\inarr{\begin{tikzpicture}[scale=0.8, every node/.style={transform shape}]
  \node (init) at (2,  1)  {$\Init$};
  \node (i11)  at (0,  0)   {$\ese{1}{1}{}\colon \erlab{}{\locx}{1}{}$};
  \node (i12)  at (0, -1)   {$\ese{1}{2}{}\colon \ewlab{}{\locy}{1}{}$};
  \node (i13)  at (0, -2)   {$\ese{1}{3}{}\colon \ewlab{}{\locz}{1}{}$};
  \node (i21)  at (4,  0)   {$\ese{2}{1}{}\colon \erlab{}{\locy}{1}{}$};
  \node (i22)  at (4, -1)   {$\ese{2}{2}{}\colon \erlab{}{\locz}{1}{}$};
  \node (i23)  at (4, -2)   {$\ese{2}{3}{}\colon \ewlab{}{\locx}{1}{}$};
  \node (hh) at (2, -3.5) {$\inarrC{\text{The traversal configuration } \TCb}$};
  \begin{scope}[on background layer]
     \issuedCoveredBox{init};
     \issuedBox{i13};
     \issuedBox{i23};
  \end{scope}
  \draw[rf] (i13) edge node[above] {} (i22);
  \draw[rf] (i23) edge node[above] {} (i11);
  \draw[rf] (i12) edge node[above] {} (i21);
  \draw[vf] (init) edge[bend right=20]  node[above left, pos=0.9] {$\lVF$} (i11);
  \draw[vf] (init) edge[bend left=20]  node[above right, pos=0.9] {$\lVF$} (i21);
  \draw[vf] (i13)  edge[bend right=20] node[above] {$\lVF$} (i22);
  \draw[ppo,out=230,in=130] (i11) edge node[left ,pos=0.8] {\small$\lPPO$} (i12);
  \draw[ppo,out=310,in=50 ] (i22) edge node[right,pos=0.3] {\small$\lPPO$} (i23);
  \draw[po] (init) edge (i11);
  \draw[po] (init) edge (i21);
  \draw[po] (i11)  edge (i12);
  \draw[po] (i12)  edge (i13);
  \draw[po] (i21)  edge (i22);
  \draw[po] (i22)  edge (i23);
\end{tikzpicture}}
\hfill\vrule\hfill
\inarr{\begin{tikzpicture}[scale=0.8, every node/.style={transform shape}]

  \node (init) at (0, 1)   {$\Init$};

  \node (i111)  at (-1.5,  0)   {$\ese{1}{1}{1}\colon \erlab{}{\locx}{0}{}$};
  \node (i121)  at (-1.5, -1)   {$\ese{1}{2}{1}\colon \ewlab{}{\locy}{0}{}$};
  \node (i131)  at (-1.5, -2)   {$\ese{1}{3}{1}\colon \ewlab{}{\locz}{1}{}$};

  \node (i211)  at (1.5,  0)   {$\ese{2}{1}{1}\colon \erlab{}{\locy}{0}{}$};
  \node (i221)  at (1.5, -1)   {$\ese{2}{2}{1}\colon \erlab{}{\locz}{1}{}$};
  \node (i231)  at (1.5, -2)   {$\ese{2}{3}{1}\colon \ewlab{}{\locx}{1}{}$};

  \draw[jf] (init) edge[bend right] node[above]        {\small{$\lJF$}} (i111);
  \draw[jf] (init) edge[bend left ] node[above]        {\small{$\lJF$}} (i211);
  \draw[jf] (i131) edge             node[pos=.5,below] {\small{$\lJF$}} (i221);

  \draw[po] (init)  edge (i111);
  \draw[po] (i111)  edge (i121);
  \draw[po] (i121)  edge (i131);

  \draw[po] (init)  edge (i211);
  \draw[po] (i211)  edge (i221);
  \draw[po] (i221)  edge (i231);

  \begin{scope}[on background layer]
    \draw[extractStyle] (-3, 1.5) rectangle (3,-2.5);
  \end{scope}

  \node (hh) at (0, -3.5) {$\inarrC{\text{The event structure } \ESb \text{ and} \\\text{the selected execution } \SXb}$};
\end{tikzpicture}}\hfill$
\caption{%
The traversal configuration $\TCb$, the related event structure $\ESb$, and the selected execution $\SXb$.}
\label{fig:simrel-b}
\end{figure}

Before we present the construction, let us return to the example of \cref{fig:LBxyz}.
Consider the traversal step from configuration $\TCa$ to
configuration $\TC_b \defeq \tup{\set{\Init}, \set{\Init, \ese{1}{3}{}, \ese{2}{3}{}}}$
by issuing the event $\ese{2}{3}{}$ (see \cref{fig:simrel-b}).
To simulate this step, we need to show that
it is possible to execute instructions of thread 2 
and extend the event structure with a set of events $\SBrb$ matching these instructions.
As we have already seen, the labels of the new events can differ from their counterparts in $G$---%
they only have to agree for the covered and issued events.
In this case, we set $\SBrb=\set{\ese{2}{1}{1},\ese{2}{2}{1},\ese{2}{3}{1}}$,
and adding them to the event structure $\ESa$ gives us event structure $\ESb$
shown in \cref{fig:simrel-b}.

In more detail, we need to build a run of
thread-local semantics ${\prog(2)
\threadstep{}{\ese{2}{1}{1}}
\threadstep{}{\ese{2}{2}{1}}
\threadstep{}{\ese{2}{3}{1}}
\pstate'}$
such that
(1) it contains events corresponding to all the events of thread 2 up to $\ese{2}{3}{}$
(\ie $\ese{2}{1}{}, \ese{2}{2}{}, \ese{2}{3}{}$)
with the same location, type, and thread identifier
and (2) any events corresponding to covered or issued events (\ie $\ese{2}{3}{}$)
should also have the same value as the corresponding event in $G$.

Then, following the run of the thread-local semantics,
we should extend the event structure $\ESa$ to $\ESb$ by adding new events $\SBrb$,
and ensure that the constructed event structure $\ESb$ is consistent (\cref{def:es-cons})
and simulates the configuration $\TCb$.
In particular, it means that:
\begin{itemize}
  \item for each read event in $\SBrb$ we need to pick a justification write event,
        which is either already present in $\ES$ or $\lPO$-preceed the read event;
  \item for each write event in $\SBrb$ we should determine its position in $\lCO$ order of the event structure.
\end{itemize}
Finally, we need to update the selected execution
by replacing all events of thread 2 by the new events $\SBrb$: 
$\SXb \defeq \SXa \setminus \ES.\thread{\set{2}} \cup \SBrb$.


\subsubsection{Justifying the New Read Events}
\label{sec:simstep-read}

In order to determine whence these read events should be justified (and hence what value they should return),
we have adopted the approach of Podkopaev \etal~\cite{Podkopaev-al:POPL19}
for a similar problem with certifying promises in the compilation proof from \Promise to \IMM.
The construction relies on several auxiliary definitions.


First, given an execution $G$ and a traversal configuration $\tup{\CoveredSet, \IssuedSet}$,
we define the set of \emph{determined} events to be those events of $G$
that must have equal counterparts in $\ES$.
In particular, this means that $\ES$ should assign to these events the same label as $G$,
and thus the same reads-from source for the read events.
\begin{equation*}
  G.\determined_{\tup{\CoveredSet, \IssuedSet}}
    \defeq {} \CoveredSet \cup \IssuedSet \cup
              \dom{(G.\lRF \cap G.\lPO)^? \seq G.\lPPO \seq [\IssuedSet]} \cup
              \codom{[\IssuedSet] \seq (G.\lRF \cap G.\lPO)} 
\end{equation*}
Besides covered and issued events,
the set of determined events also contains
the $\lPPO$-prefixes of issued events, since issued events may depend on their values,
as well as any internal reads reading from issued events,
since their values are also determined by the issued events.

For the graph $G$ and traversal configuration $\TCb$,
the set of determined events contains events $\ese{1}{3}{}$, $\ese{2}{2}{}$, and $\ese{2}{3}{}$.
(The events $\ese{1}{3}{}$ and $\ese{2}{3}{}$ are issued,
whereas $\ese{2}{2}{}$ has a $\lPPO$ edge to $\ese{2}{3}{}$.)
In contrast, events $\ese{1}{1}{}$, $\ese{1}{2}{}$, and $\ese{2}{1}{}$ are not determined,
since their corresponding events in $S$ read/write a different value.


Second, we introduce the \emph{viewfront} relation ($\lVF$)
to contain all the writes that have been observed at a certain point in the graph.
That is, the edge $\tup{w, e} \in G.\lVF_{\TC}$ indicates that
the write $w$ either happens before $e$, is read by a covered event happening before $e$,
or is read by a determined read earlier in the same thread as $e$.
\begin{equation*}
  G.\lVF_{\tup{\CoveredSet,\IssuedSet}} \defeq {}
    [G.\lW] \seq (G.\lRF \seq [\CoveredSet])^? \seq G.\lHB^? \cup
    G.\lRF \seq [G.\determined_{\tup{\CoveredSet, \IssuedSet}}] \seq G.\lPO^?
\end{equation*}
\Cref{fig:simrel-b} depicts three $G.\lVF_{\TCb}$ edges.
Since ${G.\lVF_{\TC} \seq G.\lPO \subseteq G.\lVF_{\TC}}$,
the other incoming viewfront edges to thread 2 can be derived.
Note that there is no edge from $\ese{1}{2}{}$ to thread 2,
since $\ese{1}{2}{}$ neither happens before any event in thread 2
nor is read by any determined read.

Finally, we construct the \emph{stable justification} relation ($\lSRF$)
that helps us justify the read events in $\SBrb$ in the event structure:
\begin{equation*}
  G.\lSRF_{\TC} \defeq
    ([G.\lW] \seq (G.\lVF_{\TC} \cap {=}_{G.\lLOC}) \seq [G.\lR])
    \setminus (G.\lCO \seq G.\lVF_{\TC})
\end{equation*}
It relates a read event $r$ to the $\lCO$-last `observed' write event with same location.
Assuming that $G$ is \IMM-consistent, it can be shown that
$G.\lSRF$ agrees with $G.\lRF$ on the set of determined reads.
\begin{equation*}
  G.\lSRF_{\TC} ; [G.\determined_{\TC}] \subseteq G.\lRF
\end{equation*}
For the graph $G$ and traversal configuration $\TCb$ shown in \cref{fig:simrel-b}
the $\lSRF$ relation coincides with the depicted $\lVF$ edges: \ie we have
$\tup{\Init, \ese{1}{1}{}}, \tup{\Init, \ese{2}{1}{}},
 \tup{\ese{1}{3}{}, \ese{2}{2}{}} \in G.\lSRF_{\TCb}$.

Having $\lSRF_{\TCb}$ as a guide for values read by instructions in the certification run,
we construct the steps of the thread-local operational semantics
$\prog(2) \threadstep{}{}^* \pstate'$
using the receptiveness property of the thread's semantics,
which essentially says that given an execution trace $\tau=e_1,\ldots,e_n$ of the thread semantics,
and a subset of events $K\subseteq\set{e_1,\ldots,e_{n-1}}$ along that trace that have no $\lPPO$-successors in the graph,
we arbitrarily change the values of read events in $K$,
and there exist values for the write events in $K$ such that
the updated execution trace is also a trace of the thread semantics.%
\footnote{The formal definition of the receptiveness property is quite elaborate.
  For the detailed definition we refer the reader to the Coq development of \IMM~\cite{IMMrepo}. }

The relation $\lSRF_{\TCb}$ is also used to pick justification writes for the read events in $\SBrb$.
We have proved that each $\lSRF$ edge either starts in some issued event (of the previous traversal configuration)
or it connects two events that are related by $\lPO$:
\begin{equation*}
  G.\lSRF_{\TCb} \subseteq [\IssuedSet_{\mathsf{a}}] \seq G.\lSRF_{\TCb} \cup G.\lPO
\end{equation*}
In the former case, thanks to the property \ref{simrel:ex-cov-iss} of our simulation relation,
we can pick a write event from $\SX_{\mathsf{a}}$ corresponding to the issued write
(\eg for \cref{fig:simrel-b}, it is the event $\ese{1}{3}{1}$, corresponding to the issued write $\ese{1}{3}{}$).
In the latter case, we pick either the initial write or some $\ESb.\lPO$ preceding write belonging to $\SBrb$.

\subsubsection{Ordering the New Write Events}
\label{sec:simstep-write}

In order to pick the $S_b.\lCO$ position of the new write events in the updated event structure,
we generally follow the original $G.\lCO$ order of the \IMM graph.
Because of the conflicting events, however, it is not
always possible to preserve the inclusion between the relations.
This is why we relax the inclusion to $\fmap{\ES.\lCO} \subseteq G.\lCO^?$
in property \ref{simrel:co} of the simulation relation.

\begin{figure}[t]
\hfill$\inarr{\begin{tikzpicture}[scale=0.8, every node/.style={transform shape}]
  \node (init) at (2,  1)  {$\Init$};
  \node (i11)  at (0,  0)   {$\ese{1}{1}{}\colon \erlab{}{\locx}{1}{}$};
  \node (i12)  at (0, -1)   {$\ese{1}{2}{}\colon \ewlab{}{\locy}{1}{}$};
  \node (i13)  at (0, -2)   {$\ese{1}{3}{}\colon \ewlab{}{\locz}{1}{}$};
  \node (i21)  at (4,  0)   {$\ese{2}{1}{}\colon \erlab{}{\locy}{1}{}$};
  \node (i22)  at (4, -1)   {$\ese{2}{2}{}\colon \erlab{}{\locz}{1}{}$};
  \node (i23)  at (4, -2)   {$\ese{2}{3}{}\colon \ewlab{}{\locx}{1}{}$};
  \node (hh) at (2, -3) {$\inarrC{\text{The traversal configuration } \TCc}$};
  \begin{scope}[on background layer]
     \issuedCoveredBox{init};
     \issuedBox{i13};
     \issuedBox{i23};
     \coveredBox{i11};
  \end{scope}
  \draw[rf] (i13) edge node[above] {} (i22);
  \draw[rf] (i23) edge node[above] {} (i11);
  \draw[rf] (i12) edge node[above] {} (i21);
  \draw[ppo,out=230,in=130] (i11) edge node[left ,pos=0.8] {\small$\lPPO$} (i12);
  \draw[ppo,out=310,in=50 ] (i22) edge node[right,pos=0.3] {\small$\lPPO$} (i23);
  \draw[po] (init) edge (i11);
  \draw[po] (init) edge (i21);
  \draw[po] (i11)  edge (i12);
  \draw[po] (i12)  edge (i13);
  \draw[po] (i21)  edge (i22);
  \draw[po] (i22)  edge (i23);
\end{tikzpicture}}
\hfill\vrule\hfill
\inarr{\begin{tikzpicture}[scale=0.8, every node/.style={transform shape}]
  \node (init) at (3, 1)   {$\Init$};

  \node (i111)  at (0,  0)   {$\ese{1}{1}{1}\colon \erlab{}{\locx}{0}{}$};
  \node (i121)  at (0, -1)   {$\ese{1}{2}{1}\colon \ewlab{}{\locy}{0}{}$};
  \node (i131)  at (0, -2)   {$\ese{1}{3}{1}\colon \ewlab{}{\locz}{1}{}$};

  \node (i112)  at (3,  0)   {$\ese{1}{1}{2}\colon \erlab{}{\locx}{1}{}$};
  \node (i122)  at (3, -1)   {$\ese{1}{2}{2}\colon \ewlab{}{\locy}{1}{}$};
  \node (i132)  at (3, -2)   {$\ese{1}{3}{2}\colon \ewlab{}{\locz}{1}{}$};

  \node (i211)  at (6,  0)   {$\ese{2}{1}{1}\colon \erlab{}{\locy}{0}{}$};
  \node (i221)  at (6, -1)   {$\ese{2}{2}{1}\colon \erlab{}{\locz}{1}{}$};
  \node (i231)  at (6, -2)   {$\ese{2}{3}{1}\colon \ewlab{}{\locx}{1}{}$};

  \draw[jf] (init) edge[bend right] node[above]        {} (i111);
  \draw[jf] (init) edge[bend left ] node[above]        {} (i211);
  \draw[jf] (i131) edge             node[pos=.5,below] {} (i221);
  \draw[jf] (i231) edge             node[pos=.5,below] {} (i112);

  \draw[cf] (i111) -- (i112);
  \node at ($.5*(i111) + .5*(i112) - (0, 0.2)$) {\small$\lCF$};

  \draw[co] (i122) edge node[pos=.5,below] {\small$\lCO$} (i121);
  \draw[ew] (i131) edge node[pos=.5,below] {\small$\lEW$} (i132);

  \draw[po] (init)  edge (i111);
  \draw[po] (i111)  edge (i121);
  \draw[po] (i121)  edge (i131);

  \draw[po] (init)  edge (i112);
  \draw[po] (i112)  edge (i122);
  \draw[po] (i122)  edge (i132);

  \draw[po] (init)  edge (i211);
  \draw[po] (i211)  edge (i221);
  \draw[po] (i221)  edge (i231);

  \begin{scope}[on background layer]
    \draw[extractStyle] (2, 1.5) rectangle (7,-2.5);
  \end{scope}

  \node (hh) at (3, -3.5) {$\inarrC{\text{The event structure } \ESc \text{ and} \\\text{the selected execution } \SXc}$};
\end{tikzpicture}}$\hfill
\caption{The traversal configuration $\TCc$, the related event structure $\ESc$, and the selected execution $\SXc$.}
\label{fig:simrel-c}
\end{figure}

To see the problem let us return to the example.
Suppose that the next traversal step covers the read $\ese{1}{1}{}$.
To simulate this step, we build an event structure $\ESc$
(see \cref{fig:simrel-c}).
It contains the new events
$\SBrc \defeq \set{\ese{1}{1}{2}, \ese{1}{2}{2}, \ese{1}{3}{2}}$.

Consider the write events $\ese{1}{2}{1}$ and $\ese{1}{2}{2}$ of the event structure.
Since the events have different labels, we cannot make them $\lEW$-equivalent.
And since $\ESc.\lCO$ should be total among all writes to the same location (with respect to $\ESc.\lEW$),
we must put a $\lCO$ edge between these two events in one direction or another.
Note that events $\ese{1}{2}{1}$ and $\ese{1}{2}{2}$ correspond to the same event $\ese{1}{2}{}$ in the graph,
thus we cannot use the coherence order of the graph $G.\lCO$ to guide our decision.

In fact, the $\lCO$-order between these two events does not matter,
so we could pick either direction.
For the purposes of our proofs, however, we found it more convenient
to always put the new events earlier in the $\lCO$ order
(thus we have $\tup{\ese{1}{2}{2}, \ese{1}{2}{1}} \in \ESc.\lCO$).
Thereby we can show that the $\lCO$ edges of the event structure ending in
the new events, have corresponding edges in the graph:
$\fmap{\ESc.\lCO \seq [\SBrc]} \subseteq G.\lCO$.

Now consider the events $\ese{1}{3}{1}$ and $\ese{1}{3}{2}$.
Since these events have the same label and correspond to the same event in $G$,
we make them $\lEW$-equivalent.
In fact, this choice is necessary for the correctness of our construction.
Otherwise, the new events $\SBrc$ would be deemed invisible,
because of the $\ESc.\lCF \cap (\ESc.\lJFE \seq (\ESc.\lPO \cup \ESc.\lJF)^*)$
path between $\ese{1}{3}{1}$ and $\ese{1}{1}{2}$.
Recall that only the visible events can be used to extract an execution
from the event structure (\cref{def:extracted}).

In general, assuming that $\simrel(\prog, G, \tup{\CoveredSet, \IssuedSet}, \ES, \SX)$ holds,
we attach the new write event $e$ to an $\ES.\lEW$ equivalence class represented
by the write event $w$, \sth (i) $w$ has the same $\ea$ image as $e$, \ie $\ea(w) = \ea(e)$;
(ii) $w$ belongs to $\SX$ and its $\ea$ image is issued, that is $w \in \SX \cap \fcomap{I}$.
If there is no such an event $w$,
we put $e$ $\ES.\lCO$-after events such that
their $\ea$ images are ordered $G.\lCO$-before $\ea(e)$,
and $\ES.\lCO$-before events such that
their $\ea$ images are equal to $\ea(e)$ or ordered $G.\lCO$-after it.
Note that thanks to property \ref{simrel:jfe-iss} of the simulation relation,
that is $\dom{\ES.\lJFE} \subseteq \dom{\ES.\lEW \seq [\SX \cap \fcomap{I}]}$,
our choice of $\lEW$ guarantees that all new events will be visible.


\subsubsection{Construction Overview}

To sum up, to prove \cref{lemma:simstep},
we consider the events of $G.\thread{\set{\tid}}$ where $\tid$ is the thread
of the event issued or covered by the traversal step $\TC \travstep{} \TC'$,
together with the $\lSRF$ relation determining the values of the read events.
At this point, we can show that $\simrel$-conditions for the new configuration
$\TC'$ hold for all events except for those in thread $\tid$.

Because of receptiveness, there exists a sequence of the thread steps
$\prog(\tid) \threadstep{}{}^* \pstate'$ for some thread state $\pstate'$
such that the labels on this sequence match the events $G.\thread{\set{\tid}}$
with the labels determined by $\lSRF$, and include an event with the same label
as the one issued or covered by the traversal step $\TC \travstep{} \TC'$.

We then do an induction on this sequence of steps, and add each event
to the event structure $\ES$ and to its selected subset of events $\SX$
(unless already there), showing along the way that the $\simrel$-conditions
also hold for the updated event structure, selected subset, and the events added.
At the end, when we have considered all the events generated by the step sequence,
we will have generated the event structure $\ES'$ and execution $\SX'$
such that $\simrel(\prog,G,\TC',\ES',\SX')$ holds.

\section{Handling SC Accesses}
\label{sec:sc}

In this section, we briefly describe the changes needed in order to handle the
compilation of \weakestmo's sequentially consistent (SC) accesses.
The purpose of SC accesses is to guarantee sequential consistency
for the simple programming pattern that uses exclusively SC accesses to communicate between threads.
As Lahav \etal~\cite{Lahav-al:PLDI17} showed, however,
their semantics is quite complicated
because they can be freely mixed with non-SC accesses.

We first define an extension of \IMM, which we call \IMMsc.
Its consistency extends that of \IMM with an additional acyclicity requirement
concerning SC accesses, which is taken directly from
\RC-consistency~\cite[Definition 1]{Lahav-al:PLDI17}.
\begin{definition}
\label{def:model_IMM_SC}
An execution graph $G$ is \emph{\IMMsc-consistent} if
it is \IMM-consistent \cite[Definition 3.11]{Podkopaev-al:POPL19}
and $G.\lPSCB \cup G.\lPSCF$ is acyclic, where:%
\footnote{In \IMMsc, event labels include an ``access mode'', where $\sco$ denotes an SC access.
The sets $G.\lE^\sco$ consists of all SC accesses (reads, writes and fences) in $G$,
and $G.\lF^\sco$ consists of all SC fences in $G$.}
    \begin{align*}
      G.\lSCB & \defeq
                G.\lPO \cup
                G.\lPO\rst{\neq G.\lLOC} \seq G.\lHB \seq G.\lPO\rst{\neq G.\lLOC} \cup
                G.\lHB\rst{= \lLOC} \cup
                G.\lCO \cup G.\lFR \\
      G.\lPSCB &\defeq
                 ([G.\lE^\sco] \cup{} [G.\lF^\sco]\seq G.\lHB^?) \seq G.\lSCB \seq ([G.\lE^\sco] \cup{} G.\lHB^?\seq[G.\lF^\sco]) \\
      G.\lPSCF &\defeq [G.\lF^\sco];(G.\lHB \cup G.\lHB;G.\lECO;G.\lHB);[G.\lF^\sco]
    \end{align*}
\end{definition}

The $\lSCB$, $\lPSCB$ and $\lPSCF$ relations were carefully designed
by Lahav \etal~\cite{Lahav-al:PLDI17} (and recently adopted by the C++ standard), 
so that they provide strong enough guarantees for programmers
while being weak enough to support the intended compilation of SC accesses to commodity hardware.
In particular, a previous (simpler) proposal in~\cite{sc-atomics}, which essentially includes $G.\lHB$ between SC accesses in the 
relation required to be acyclic, is too strong for efficient compilation to the POWER architecture.
Indeed, the compilation schemes to POWER do not enforce a strong barrier on $\lHB$-paths between SC accesses, 
but rather on $G.\lPO \seq G.\lHB \seq G.\lPO$-paths between SC accesses.

\begin{remark}
The full \IMM model (\ie including release/acquire accesses and SC fences,
as defined by Podkopaev \etal~\cite{Podkopaev-al:POPL19})
forbids cycles in $\lRFE \cup \lPPO \cup \lBOB \cup \lPSCF$,
where $\lBOB$ is (similar to $\lPPO$) a subset of the program order
that must be preserved due to the presence of a memory fence or release/acquire access.
Since $\lPSCF$ is already included in \IMM's acyclicity constraint,
one may consider the natural option of including $\lPSCB$ in that acyclicity constraint as well.
However, it leads to a model that is too strong, as it forbids the following behavior:
\[
\inarrIII{
  \readInst{\rlx}{a}{x} \comment{2} \\
  \writeInst{\sco}{y}{1}
}{
  \writeInst{\sco}{y}{2}
}{
  \readInst{\rlx}{b}{y} \comment{2} \\
  \writeInst{\rlx}{x}{b} \\
}
\quad\vrule\;
\inarr{\begin{tikzpicture}[yscale=0.8,xscale=1]
  \node (i11) at (0,  0) {$\erlab{\rlx}{x}{2}{}$};
  \node (i12) at (0, -2) {$\ewlab{\sco}{y}{1}{}$};

  \node (i21) at (2,  0) {$\ewlab{\sco}{y}{2}{}$};

  \node (i31) at (4,  0) {$\erlab{\rlx}{y}{2}{}$};
  \node (i32) at (4, -2) {$\ewlab{\rlx}{x}{2}{}$};

  \draw[po] (i11) edge node[left] {\small$\lBOB$} (i12);

  \draw[mo,bend left =15] (i12) edge node[above,pos=.4] {\small$\lCOE$} (i21);
  \draw[sc,bend right=15] (i12) edge node[below] {\small$\lPSCB$} (i21);

  \draw[rf] (i21) edge node[below] {\small$\lRFE$} (i31);
  \draw[ppo] (i31) edge node[right] {\small$\lPPO$} (i32);

  \draw[rf] (i32) edge node[above,pos=.2] {\small$\lRFE$} (i11);
\end{tikzpicture}}
\]
This behavior is allowed by \POWER (using any of the two intended compilation schemes
for SC accesses; see \cref{sec:power}).
\end{remark}

Adapting the compilation from \weakestmo to \IMMsc to cover SC accesses
is straightforward because the full definition of \weakestmo~\cite{Chakraborty-Vafeiadis:POPL19}
does not have any additional constraints about SC accesses at the level of event structures.
It only has an SC constraint at the level of extracted executions
which is actually the same as in \RC, which we took as is for \IMMsc.

\subsection{Compiling \texorpdfstring{\IMMsc}{IMM-SC} to Hardware}

In this section, we establish describe the extension of the results of~\cite{Podkopaev-al:POPL19}
to support SC accesses with their intended compilation schemes to the different architectures.

As was done in~\cite{Podkopaev-al:POPL19}, 
since \IMMsc and the models of hardware we consider are all defined in the same declarative framework
(using execution graphs), we formulate our results
on the level of execution graphs.
Thus, we actually consider the mapping of \IMMsc execution graphs to target architecture execution graphs
that is induced by compilation of \IMMsc programs to machine programs.
Hence, roughly speaking, for each architecture $\alpha\in\set{\TSO,\POWER,\ARMs,\ARMe}$, 
our (mechanized) result takes the following form:
\begin{quote}
If the $\alpha$-execution-graph $G_\alpha$ corresponds to the $\IMMsc$-execution-graph $G$,
then $\alpha$-consistency of $G_\alpha$ implies $\IMMsc$-consistency of $G$.
\end{quote}
Since the mapping from \weakestmo to \IMMsc (on the program level) is the \emph{identity mapping}
(\cref{thm:main}), we obtain as a corollary the correctness of the compilation from \weakestmo
to each architecture $\alpha$ that we consider.
The exact notions of correspondence between $G_\alpha$ and $G$
are presented in~\citeapp{app:imm_to_arm,,app:imm_to_tso,,app:imm_to_power}{Appendicies B, C and D}.

The mapping of $\IMMsc$ to each architecture
follows the intended compilation scheme of C/C++11
\cite{www:mappings,Lahav-al:PLDI17},
and extends the corresponding mappings of $\IMM$ from Podkopaev \etal~\cite{Podkopaev-al:POPL19}
with the mapping of SC reads and writes.
Next, we schematically present these extensions.

\subsubsection{\TSO}

There are two alternative sound mappings of SC accesses to \Intel-\TSO:
\begin{center}
$\begin{array}{@{}l@{}c@{{}\defeq{}}l@{\qquad\qquad}l@{}c@{{}\defeq{}}l@{}}
\multicolumn{3}{@{}c@{\qquad\qquad}}{
\text{{Fence after SC writes}}} & 
\multicolumn{3}{@{}c@{}}{
\text{{Fence before SC reads}}} \\
\hline
\compile{\lR^\sco} && \texttt{mov} &
\compile{\lR^\sco} && \texttt{mfence;mov} \\ 
\compile{\lW^\sco} && \texttt{mov;mfence} &
\compile{\lW^\sco} && \texttt{mov} \\ 
\compile{\lRMWc^\sco} && \texttt{(lock) xchg} &
\compile{\lRMWc^\sco} && \texttt{(lock) xchg} \\
\end{array}$
\end{center}
The first, which is implemented in mainstream compilers, inserts an $\texttt{mfence}$
after every SC write; whereas the second inserts an $\texttt{mfence}$ before every SC read.
Importantly, one should \emph{globally} apply one of the two mappings
to ensure the existence of an $\texttt{mfence}$ between every SC write and following SC read.

\subsubsection{\POWER}
\label{sec:power}

There are two alternative sound mappings of SC accesses to \POWER:
\begin{center}
$\begin{array}{@{}l@{}c@{{}\defeq{}}l@{\qquad\qquad}l@{}c@{{}\defeq{}}l@{}}
\multicolumn{3}{@{}c@{\qquad\qquad}}{
\text{{ Leading \texttt{sync}}}} & 
\multicolumn{3}{@{}c@{}}{
\text{{ Trailing \texttt{sync}}}} \\
\hline
\compile{\lR^\sco} && \texttt{sync;}\compile{\lR^\acq} &
\compile{\lR^\sco} &&  \texttt{ld;sync} \\ 
\compile{\lW^\sco} && \texttt{sync;st} &
\compile{\lW^\sco} &&  \compile{\lW^\rel}\texttt{;sync} \\ 
\compile{\lRMWc^\sco} && \texttt{sync;} \compile{\lRMWc^\acq} &
\compile{\lRMWc^\sco} &&  \compile{\lRMWc^\rel}\texttt{;sync}
\end{array}$
\end{center}
The first scheme inserts a $\texttt{sync}$ before every SC access,
while the second inserts an $\texttt{sync}$ after every SC access.
Importantly, one should \emph{globally} apply one of the two mappings
to ensure the existence of a $\texttt{sync}$ between every two SC accesses.

Observing that $\texttt{sync}$ is the result of mapping an SC-fence to \POWER,
we can reuse the existing proof for the mapping of \IMM to \POWER.
To handle the leading $\texttt{sync}$ (respectively, trailing $\texttt{sync}$) 
scheme we introduce a preceding step,
in which we prove that splitting in the whole execution graph
each SC access to a pair of an SC fence followed (preceded) by a release/acquire access
is a sound transformation under \IMMsc.
That is, this global execution graph transformation cannot make an
inconsistent execution consistent:

\begin{theorem}
  \label{thm:sc-access-to-fence}
Let $G$ be an execution graph such that
$$[\lR^\sco \cup \lW^\sco]\seq (G.\lPO' \cup G.\lPO'\seq G.\lHB\seq G.\lPO')\seq [\lR^\sco \cup \lW^\sco] \subseteq G.\lHB\seq [\lF^\sco]\seq G.\lHB,$$
where $G.\lPO' \defeq G.\lPO \setminus G.\lRMW$.
Let $G'$ be the execution graph obtained from $G$ by weakening the access modes
of SC write and read events to release and acquire modes respectively.
Then, \IMMsc-consistency of $G$ follows from \IMM-consistency of $G'$.
\end{theorem}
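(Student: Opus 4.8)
The plan is to split $\IMMsc$-consistency of $G$ into its two ingredients --- ordinary $\IMM$-consistency and acyclicity of $G.\lPSCB \cup G.\lPSCF$ --- and to derive each of them from $\IMM$-consistency of $G'$. For the first ingredient I would observe that $G$ and $G'$ have the same events, the same $\lRF$ and the same $\lCO$, and that weakening an SC read to $\acq$ and an SC write to $\rel$ changes none of the relations occurring in the $\IMM$-consistency axioms: since $\sco$ lies above $\rel$ and $\acq$ in the access-mode lattice, every set of the form ``accesses of mode at least $\rel$ (resp.\ $\acq$)'' is unaffected, so $\lSW$, $\lHB$, $\lPPO$ and $\lBOB$ are literally the same in $G$ and $G'$. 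Hence $\IMM$-consistency of $G'$ gives $\IMM$-consistency of $G$ for free --- in particular $G$ is \textsc{coherent} --- and it remains only to establish the SC-acyclicity condition.

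The next step is a simple observation about $G'$: it has no SC accesses at all, so $G'.\lE^\sco$ consists solely of SC fences, which access no location. In the expansion of $G'.\lPSCB = [\lF^\sco]\seq\lHB^?\seq\lSCB\seq\lHB^?\seq[\lF^\sco]$, the $\lPO$, $\lPO\rst{\neq\lLOC}\seq\lHB\seq\lPO\rst{\neq\lLOC}$ and $\lHB\rst{=\lLOC}$ summands of $\lSCB$ are all included in $\lHB$, so the whole $\lHB^?\seq(\cdot)\seq\lHB^?$ around them collapses into $\lHB$; and for the $\lCO$ and $\lFR$ summands, which land in $\lECO$, the outer $\lHB^?$'s must be $\lHB$ (not identity) because $\lECO$ has only memory-access endpoints. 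Therefore $G'.\lPSCB \subseteq [\lF^\sco]\seq(\lHB \cup \lHB\seq\lECO\seq\lHB)\seq[\lF^\sco] = G'.\lPSCF$. Since the full \IMM model keeps $\lRFE \cup \lPPO \cup \lBOB \cup \lPSCF$ acyclic, $G'.\lPSCF$ is acyclic, so $G'$ is in fact $\IMMsc$-consistent; and it suffices to show that any cycle in $G.\lPSCB \cup G.\lPSCF$ gives rise to a cycle in $G'.\lPSCF$, contradicting $\IMM$-consistency of $G'$.

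To build that cycle I would use the hypothesis as the device that trades an ``SC access'' obligation for an ``SC fence'' obligation: whenever SC accesses $a,b$ satisfy $a\,(G.\lPO' \cup G.\lPO' \seq G.\lHB \seq G.\lPO')\,b$, it supplies an SC fence $f$ with $a \lHB f \lHB b$ in $G$, hence --- $\lHB$ and the set of SC fences being unchanged --- in $G'$. Starting from a cycle through $G.\lE^\sco$, I would first contract every maximal run of ``coherence-type'' $\lPSCB$ edges (those whose single $\lSCB$ step is $\lCO$, $\lFR$ or $\lHB\rst{=\lLOC}$), together with any RMW-internal $\lPO$ steps, into a single $\lHB$-or-$\lECO$ fragment, using transitivity of $\lECO$ and of $\lHB$ and coherence of $G$ to absorb same-location $\lHB$ steps into $\lECO$ wherever they are not already there. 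The edges that remain are the ``$\lHB$-type'' $\lPSCB$ edges --- the $\lPO$ and $\lPO\rst{\neq\lLOC}\seq\lHB\seq\lPO\rst{\neq\lLOC}$ cases, each of which the hypothesis brackets by an SC fence --- and the genuine $\lPSCF$ edges. Replacing each SC-access endpoint by a bracketing fence and checking that the bracketing fences of consecutive edges chain up --- $f \lHB (\text{contracted fragment}) \lHB f'$, the fragment being $\lHB$ or $\lHB\seq\lECO\seq\lHB$ --- turns the whole cycle into a cycle in $G'.\lPSCF$.

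The main obstacle is the bookkeeping in this last step. A $\lPSCB$ edge is not a bare $\lSCB$ step but one $\lSCB$ step anchored either at an SC access or, through an $\lHB^?$ prefix/suffix, at an SC fence, so the reduction needs a sizeable case analysis (the five shapes of $\lSCB$ against the four anchoring possibilities), and one must verify carefully that the fences introduced for neighbouring edges really compose in $G'.\lPSCF$ --- in particular that a contracted coherence run flanked by two $\lHB$-type edges reduces to a single $\lHB\seq\lECO\seq\lHB$ (or $\lHB$) connector, with the delicate subcase of two SC reads reading the same write, where the same-location $\lHB$ step is not in $\lECO$ and one falls back directly on the hypothesis (or on plain $\lHB$-transitivity). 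This is routine but intricate, which is why the paper discharges it in Coq.
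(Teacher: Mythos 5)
The paper itself only states \cref{thm:sc-access-to-fence} and discharges it in Coq, so there is no printed proof to compare against; but your decomposition is clearly the intended one, and it is essentially correct. The first part (that weakening $\sco$ accesses to $\rel$/$\acq$ leaves $\lHB$, $\lPPO$, $\lBOB$, $\lECO$ and $\lF^\sco$ --- hence every relation occurring in the \IMM{} axioms and in $\lPSCF$ --- literally unchanged, so \IMM-consistency transfers and $G.\lPSCF=G'.\lPSCF$ is acyclic) is right. Your handling of the two genuinely delicate points in the cycle conversion is also right: RMW-internal $\lPO$ steps between SC accesses are exactly what the hypothesis excludes via $\lPO'$, and they must instead be absorbed as $\lRMW\subseteq\lFR\subseteq\lECO$ (which needs completeness and sc-per-loc of $G$, already available from the first part); and same-location $\lHB$ steps embed into $\lECO$ by coherence except for two reads of the same write, which you correctly bypass using the fact that such reads have the same $\lFR$-successors, or absorb into plain $\lHB$.

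The one concrete gap is the degenerate case in which the cycle in $G.\lPSCB\cup G.\lPSCF$ contains \emph{no} $\lHB$-type $\lPSCB$ edge and no $\lPSCF$ edge, i.e.\ it consists entirely of coherence-type steps ($\lCO$, $\lFR$, $\lHB\rst{=\lLOC}$, RMW-$\lPO$) between SC accesses. Your contraction then leaves nothing to chain, and the claim that ``any cycle in $G.\lPSCB\cup G.\lPSCF$ gives rise to a cycle in $G'.\lPSCF$'' fails: no SC fence ever enters the picture. This case must be closed separately, by observing that all such steps preserve the location, so the whole cycle lives in $\lHB\rst{=\lLOC}\cup\lECO$ at a single location and contradicts \textsc{coherence} of $G$ directly. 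So the correct reduction target is ``a cycle in $G.\lPSCF$ \emph{or} a violation of coherence''; with that amendment, and granting the (admittedly sizeable) case analysis you sketch for chaining the bracketing fences, the argument goes through.
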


Having this theorem, we can think about mapping of \IMMsc to \POWER as if it consists 
of three steps. We establish the correctness of each of them separately.

\begin{enumerate}
\item At the \IMMsc level, we globally split each SC-access to an SC-fence and release/acquire access.
Correctness of this step follows by \cref{thm:sc-access-to-fence}.

\item We map \IMM to \POWER, whose correctness follows by the existing results
of~\cite{Podkopaev-al:POPL19}, since we do not have SC accesses at this stage.

\item We remove any redundant fences introduced by the previous step.
Indeed, following the leading $\texttt{sync}$ scheme, we will obtain $\texttt{sync;lwsync;st}$ for an SC write.
The $\texttt{lwsync}$ is redundant here since $\texttt{sync}$ provides stronger guarantees than $\texttt{lwsync}$
and can be removed.
Similarly, following the trailing $\texttt{sync}$ scheme, 
we will obtain $\texttt{ld;cmp;bc;isync;sync}$ for an SC read.
Again, the $\texttt{sync}$ makes other synchronization instructions redundant.

\end{enumerate}

\subsubsection{\ARMs}

The \ARMs model \cite{Alglave-al:TOPLAS14} is very similar to the \POWER model
with the main difference being that it has a weaker preserved program order than \POWER.
However, Podkopaev \etal~\cite{Podkopaev-al:POPL19} proved \IMM to \POWER compilation correctness
without relying on \POWER's preserved program order explicitly,
but assuming the weaker version of \ARMs's order.
Thus, their proof also establishes correctness of compilation from \IMM to \ARMs.

Extending the proof to cover SC accesses follows the same scheme discussed for \POWER,
since two intended mappings of SC accesses for \ARMs are the same except for
replacing \POWER's $\texttt{sync}$ fence with \ARMs's $\texttt{dmb}$:
\begin{center}
$\begin{array}{@{}l@{}c@{{}\defeq{}}l@{\qquad\qquad}l@{}c@{{}\defeq{}}l@{}}
\multicolumn{3}{@{}c@{\qquad\qquad}}{
\text{{ Leading \texttt{dmb}}}} & 
\multicolumn{3}{@{}c@{}}{
\text{{ Trailing \texttt{dmb}}}} \\
\hline
\compile{\lR^\sco} && \texttt{dmb;}\compile{\lR^\acq} &
\compile{\lR^\sco} &&  \texttt{ldr;dmb} \\ 
\compile{\lW^\sco} && \texttt{dmb;str} &
\compile{\lW^\sco} &&  \compile{\lW^\rel}\texttt{;dmb} \\ 
\compile{\lRMWc^\sco} && \texttt{dmb;} \compile{\lRMWc^\acq} &
\compile{\lRMWc^\sco} &&  \compile{\lRMWc^\rel}\texttt{;dmb}
\end{array}$
\end{center}

\subsubsection{\ARMe}
\label{sec:sc-arm8}

Since \ARMe has added dedicated instructions to support C/C++-style
SC accesses, we have established the correctness of a mapping employing
these new instructions:
\begin{center}
$\begin{array}{@{}l@{}c@{{}\defeq{}}l@{}}
\compile{\lR^\sco} && \texttt{LDAR} \\
\compile{\lW^\sco} && \texttt{STLR} \\
\compile{\texttt{FADD}^\sco} && \texttt{L:LDAXR;STLXR;BC L} \\
\compile{\texttt{CAS}^\sco}  && \texttt{L:LDAXR;CMP;BC Le;STLXR;BC L;Le:} \\
\end{array}$
\end{center}

We note that in this mapping, we follow Podkopaev \etal~\cite{Podkopaev-al:POPL19}
and compile RMW operations to loops with load-linked and store-conditional
instructions (\texttt{LDX}/\texttt{STX}).
An alternative mapping for RMWs would be to use single hardware instructions,
such as \texttt{LDADD} and \texttt{CAS}, that directly implement the required
functionality.
Unfortunately, however, due to a limitation of the current \IMM setup and
unclarity about the exact semantics of the \texttt{CAS} instruction,
we are not able to prove the correctness of the alternative mapping employing
these instructions.
The problem is that \IMM assumes that every $\lPO$-edge from a RMW instruction
is preserved, which holds for the mapping of \texttt{CAS} using the
aforementioned loop, but not necessarily using the single instruction.
 
%
%

\section{Related Work}
\label{sec:related}

While there are several memory model definitions both for hardware
architectures~\cite{Alglave-al:TOPLAS14,Flur-al:POPL16,Owens-al:TPHOL09,Pulte-al:POPL18,Pulte-al:PLDI19}
and programming languages~\cite{Batty-al:POPL11,Bender-Palsberg:OOPSLA19,Jeffrey-Riely:LICS16,Manson-al:POPL05,Pichon-al:POPL16,Podkopaev-al:CoRR16}
in the literature, there are relatively few compilation correctness results
\cite{Chakraborty-Vafeiadis:POPL19,Dolan-al:PLDI18,Kang-al:POPL17,Lahav-al:PLDI17,Podkopaev-al:POPL19,compcerttso}.

Most of these compilation results do not tackle any of the problems caused by
$\lPO\cup\lRF$ cycles, which are the main cause of complexity in establishing
correctness of compilation mappings to hardware architectures.
A number of papers (\eg \cite{Chakraborty-Vafeiadis:POPL19,Kang-al:POPL17,compcerttso})
consider only hardware models that forbid such cycles,
such as x86-TSO~\cite{Owens-al:TPHOL09} and ``strong POWER''~\cite{Lahav-Vafeiadis:FM16},
while others (\eg \cite{Dolan-al:PLDI18})
consider compilation schemes that introduce fences and/or dependencies
so as to prevent $\lPO\cup\lRF$ cycles.
The only compilation results where there is some non-trivial interplay of dependencies
are by Lahav \etal~\cite{Lahav-al:PLDI17} and by Podkopaev \etal~\cite{Podkopaev-al:POPL19}.

The former paper~\cite{Lahav-al:PLDI17} defines the \RC model (repaired C11),
and establishes a number of results about it, most of which are not related to
compilation.  The only relevant result is its pencil-and-paper correctness proof
of a compilation scheme from \RC to POWER
that adds a fence between relaxed reads and subsequent relaxed writes,
but not between non-atomic accesses.
As such, the only $\lPO\cup\lRF$ cycles possible under the compilation scheme
involve a racy non-atomic access.
Since non-atomic races have undefined semantics in RC11,
whenever there is such a cycle,
the proof appeals to receptiveness to construct a different acyclic execution
exhibiting the race.

The latter paper~\cite{Podkopaev-al:POPL19} introduced \IMM and used it to
establish correctness of compilation from the ``promising semantics'' (\Promise) \cite{Kang-al:POPL17}
to the usual hardware models.
As already mentioned, \IMM's definition catered precisely for the needs of
the \Promise compilation proof, and so did not include important features
such as sequentially consistent (SC) accesses.
Our compilation proof shares some infrastructure with that proof---namely, the
definition of \IMM and traversals---but also has substantial differences
because \Promise is quite different from \weakestmo.
The main challenges in the \Promise proof were (1) to encode the various orders
of the \IMM execution graphs with the timestamps of the \Promise machine,
and (2) to construct the certification runs for each outstanding promise.
In contrast, the main technical challenge in the \weakestmo compilation proof
is that event structures represent several possible executions of the program
together, and that \weakestmo consistency includes constraints that
correlate these executions, allowing one execution to affect the consistency of
another.


\section{Conclusion}
\label{sec:conclusion}

In this paper, we presented the first correctness proof of mapping
from the \weakestmo memory model to a number of hardware architectures.
As a way to show correctness of \weakestmo compilation to hardware,
we employed \IMM~\cite{Podkopaev-al:POPL19}, which we extended with SC
accesses, from which compilation to hardware follows.

Although relying on \IMM modularizes the compilation proof
and makes it easy to extend to multiple architectures, it does have one limitation.
As was discussed in \cref{sec:sc-arm8},
\IMM enforces ordering between RMW events and subsequent memory accesses,
while one desirable alternative compilation mapping of RMWs to \ARMe does not
enforce this ordering, which means that we cannot prove soundness of that mapping via
the current definition of \IMM.
We are investigating whether one can weaken the corresponding \IMM constraint,
so that we can establish correctness of the alternative \ARMe mapping as well.

Another way to establish correctness of this alternative mapping to \ARMe
may be to use the recently developed Promising-ARM model~\cite{Pulte-al:PLDI19}.
Indeed, since Promising-ARM is closely related to \Promise~\cite{Kang-al:POPL17},
it should be relatively easy to prove the correctness of compilation from \Promise to Promising-ARM.
Establishing compilation correctness of \weakestmo to Promising-ARM, however,
would remain unresolved because \weakestmo and \Promise are incomparable \cite{Chakraborty-Vafeiadis:POPL19}.
Moreover, a direct compilation proof would probably also be quite difficult
because of the rather different styles in which these models are defined.


 \bigskip \noindent\textbf{Acknowledgments.}
  Evgenii Moiseenko and Anton Podkopaev were supported by RFBR (grant number 18-01-00380).
  Ori Lahav was supported by the Israel Science Foundation
  (grant number 5166651), by Len Blavatnik and the Blavatnik Family foundation,
  and by the Alon Young Faculty Fellowship.

\bibliographystyle{plainurl}
\bibliography{main}

\begin{thebibliography}{10}

\bibitem{Alglave-al:TOPLAS14}
Jade Alglave, Luc Maranget, and Michael Tautschnig.
\newblock Herding cats: Modelling, simulation, testing, and data mining for
  weak memory.
\newblock {\em ACM Trans. Program. Lang. Syst.}, 36(2):7:1--7:74, July 2014.
\newblock URL: \url{http://doi.acm.org/10.1145/2627752}, \href
  {https://doi.org/10.1145/2627752} {\path{doi:10.1145/2627752}}.

\bibitem{sc-atomics}
Mark Batty, Alastair~F. Donaldson, and John Wickerson.
\newblock Overhauling {SC} atomics in {C11} and {OpenCL}.
\newblock In {\em POPL 2016}, pages 634--648. {ACM}, 2016.

\bibitem{Batty-al:POPL11}
Mark Batty, Scott Owens, Susmit Sarkar, Peter Sewell, and Tjark Weber.
\newblock Mathematizing {C++} concurrency.
\newblock In {\em POPL 2011}, pages 55--66, New York, 2011. ACM.
\newblock \href {https://doi.org/10.1145/1925844.1926394}
  {\path{doi:10.1145/1925844.1926394}}.

\bibitem{Bender-Palsberg:OOPSLA19}
John Bender and Jens Palsberg.
\newblock A formalization of java's concurrent access modes.
\newblock {\em Proc. ACM Program. Lang.}, 3(OOPSLA):142:1--142:28, October
  2019.
\newblock URL: \url{http://doi.acm.org/10.1145/3360568}, \href
  {https://doi.org/10.1145/3360568} {\path{doi:10.1145/3360568}}.

\bibitem{Boehm-Demsky:MSPC14}
Hans-J. Boehm and Brian Demsky.
\newblock Outlawing ghosts: Avoiding out-of-thin-air results.
\newblock In {\em MSPC 2014}, pages 7:1--7:6. ACM, 2014.
\newblock \href {https://doi.org/10.1145/2618128.2618134}
  {\path{doi:10.1145/2618128.2618134}}.

\bibitem{Chakraborty-Vafeiadis:POPL19}
Soham Chakraborty and Viktor Vafeiadis.
\newblock Grounding thin-air reads with event structures.
\newblock {\em {Proc. ACM Program. Lang.}}, 3({POPL}):70:1--70:27, 2019.
\newblock \href {https://doi.org/10.1145/3290383} {\path{doi:10.1145/3290383}}.

\bibitem{IMMrepo}
The {Coq} development of {IMM}, available at
  \url{http://github.com/weakmemory/imm}, 2019.

\bibitem{ARMv82model}
Will Deacon.
\newblock The {ARMv8} application level memory model, 2017.
\newblock URL:
  \url{https://github.com/herd/herdtools7/blob/master/herd/libdir/aarch64.cat}.

\bibitem{Dolan-al:PLDI18}
Stephen Dolan, KC~Sivaramakrishnan, and Anil Madhavapeddy.
\newblock Bounding data races in space and time.
\newblock In {\em PLDI 2018}, pages 242--255, New York, 2018. ACM.
\newblock URL: \url{http://doi.acm.org/10.1145/3192366.3192421}, \href
  {https://doi.org/10.1145/3192366.3192421}
  {\path{doi:10.1145/3192366.3192421}}.

\bibitem{Flur-al:POPL16}
Shaked Flur, Kathryn~E. Gray, Christopher Pulte, Susmit Sarkar, Ali Sezgin, Luc
  Maranget, Will Deacon, and Peter Sewell.
\newblock Modelling the {ARMv8} architecture, operationally: Concurrency and
  {ISA}.
\newblock In {\em POPL 2016}, pages 608--621, New York, 2016. ACM.
\newblock URL: \url{http://doi.acm.org/10.1145/2837614.2837615}, \href
  {https://doi.org/10.1145/2837614.2837615}
  {\path{doi:10.1145/2837614.2837615}}.

\bibitem{Jeffrey-Riely:LICS16}
Alan Jeffrey and James Riely.
\newblock On thin air reads towards an event structures model of relaxed
  memory.
\newblock In {\em LICS 2016}, pages 759--767, New York, 2016. ACM.
\newblock URL: \url{http://doi.acm.org/10.1145/2933575.2934536}, \href
  {https://doi.org/10.1145/2933575.2934536}
  {\path{doi:10.1145/2933575.2934536}}.

\bibitem{Kang-al:POPL17}
Jeehoon Kang, Chung-Kil Hur, Ori Lahav, Viktor Vafeiadis, and Derek Dreyer.
\newblock A promising semantics for relaxed-memory concurrency.
\newblock In {\em POPL 2017}, pages 175--189, New York, 2017. {ACM}.
\newblock \href {https://doi.org/10.1145/3009837.3009850}
  {\path{doi:10.1145/3009837.3009850}}.

\bibitem{Lahav-Vafeiadis:FM16}
Ori Lahav and Viktor Vafeiadis.
\newblock Explaining relaxed memory models with program transformations.
\newblock In {\em FM 2016}. Springer, 2016.
\newblock \href {https://doi.org/10.1007/978-3-319-48989-6_29}
  {\path{doi:10.1007/978-3-319-48989-6_29}}.

\bibitem{Lahav-al:PLDI17}
Ori Lahav, Viktor Vafeiadis, Jeehoon Kang, Chung-Kil Hur, and Derek Dreyer.
\newblock Repairing sequential consistency in {C/C++11}.
\newblock In {\em PLDI 2017}, pages 618--632, New York, 2017. ACM.
\newblock \href {https://doi.org/10.1145/3062341.3062352}
  {\path{doi:10.1145/3062341.3062352}}.

\bibitem{Manson-al:POPL05}
Jeremy Manson, William Pugh, and Sarita~V. Adve.
\newblock The {Java} memory model.
\newblock In {\em POPL 2005}, pages 378--391, New York, 2005. ACM.
\newblock \href {https://doi.org/10.1145/1040305.1040336}
  {\path{doi:10.1145/1040305.1040336}}.

\bibitem{www:mappings}
{C/C++11} mappings to processors, 2016.
\newblock URL: \url{http://www.cl.cam.ac.uk/~pes20/cpp/cpp0xmappings.html}.

\bibitem{appendix}
Evgenii Moiseenko, Anton Podkopaev, Ori Lahav, Orestis Melkonian, and Viktor
  Vafeiadis.
\newblock Coq proof scripts and supplementary material for this paper,
  available at \url{http://plv.mpi-sws.org/weakestmoToImm/}, 2020.

\bibitem{Owens-al:TPHOL09}
Scott Owens, Susmit Sarkar, and Peter Sewell.
\newblock A better {x86} memory model: {x86-TSO}.
\newblock In {\em TPHOLs 2009}, volume 5674 of {\em LNCS}, pages 391--407,
  Heidelberg, 2009. Springer.

\bibitem{Pichon-al:POPL16}
Jean Pichon-Pharabod and Peter Sewell.
\newblock A concurrency semantics for relaxed atomics that permits optimisation
  and avoids thin-air executions.
\newblock In {\em POPL 2016}, pages 622--633, New York, 2016. {ACM}.
\newblock \href {https://doi.org/10.1145/2837614.2837616}
  {\path{doi:10.1145/2837614.2837616}}.

\bibitem{Podkopaev-al:POPL19}
Anton Podkopaev, Ori Lahav, and Viktor Vafeiadis.
\newblock Bridging the gap between programming languages and hardware weak
  memory models.
\newblock {\em {Proc. ACM Program. Lang.}}, 3({POPL}):69:1--69:31, 2019.
\newblock \href {https://doi.org/10.1145/3290382} {\path{doi:10.1145/3290382}}.

\bibitem{Podkopaev-al:CoRR16}
Anton Podkopaev, Ilya Sergey, and Aleksandar Nanevski.
\newblock Operational aspects of {C/C++} concurrency.
\newblock {\em CoRR}, abs/1606.01400, 2016.
\newblock URL: \url{http://arxiv.org/abs/1606.01400}.

\bibitem{Pulte-al:POPL18}
Christopher Pulte, Shaked Flur, Will Deacon, Jon French, Susmit Sarkar, and
  Peter Sewell.
\newblock Simplifying {ARM} concurrency: multicopy-atomic axiomatic and
  operational models for {ARMv8}.
\newblock {\em {Proc. ACM Program. Lang.}}, 2({POPL}):19:1--19:29, 2018.
\newblock \href {https://doi.org/10.1145/3158107} {\path{doi:10.1145/3158107}}.

\bibitem{Pulte-al:PLDI19}
Christopher Pulte, Jean Pichon-Pharabod, Jeehoon Kang, Sung-Hwan Lee, and
  Chung-Kil Hur.
\newblock Promising-{ARM/RISC-V}: a simpler and faster operational concurrency
  model.
\newblock In {\em Proceedings of the 40th ACM SIGPLAN Conference on Programming
  Language Design and Implementation}, PLDI 2019, pages 1--15, New York, NY,
  USA, 2019. ACM.
\newblock URL: \url{http://doi.acm.org/10.1145/3314221.3314624}, \href
  {https://doi.org/10.1145/3314221.3314624}
  {\path{doi:10.1145/3314221.3314624}}.

\bibitem{c11comp}
Viktor Vafeiadis, Thibaut Balabonski, Soham Chakraborty, Robin Morisset, and
  Francesco Zappa~Nardelli.
\newblock Common compiler optimisations are invalid in the {C11} memory model
  and what we can do about it.
\newblock In {\em POPL 2015}, pages 209--220, New York, 2015. ACM.
\newblock \href {https://doi.org/10.1145/2676726.2676995}
  {\path{doi:10.1145/2676726.2676995}}.

\bibitem{compcerttso}
Jaroslav \v{S}ev\v{c}\'{\i}k, Viktor Vafeiadis, Francesco {Zappa Nardelli},
  Suresh Jagannathan, and Peter Sewell.
\newblock {CompCertTSO}: A verified compiler for relaxed-memory concurrency.
\newblock {\em J. ACM}, 60(3):22, 2013.
\newblock \href {https://doi.org/10.1145/2487241.2487248}
  {\path{doi:10.1145/2487241.2487248}}.

\end{thebibliography}

\iftoggle{fullpaper}{%
  \newpage
  \appendix
  \section{Simulation Relation for the complete \weakestmo model}
\label{app:simrel}

Here we present the simulation relation $\simrel_\TidSet(\prog, G, \TC, \ES, \SX)$ and 
the auxiliary relation $\simrelCert(\prog, G, \tup{\CoveredSet, \IssuedSet}, \tup{\CoveredSet', \IssuedSet'}, \ES, \SX, \tid, \SBr, \pstate, \pstate')$
for the complete \weakestmo memory model. 
In addition to the relaxed accesses the full versions of the relations handle 
fences, read-modify-write pairs, release, acquire and sequentially consistent accesses. 

We define the relation $\simrel_\TidSet(\prog, G, \tup{\CoveredSet, \IssuedSet}, \ES, \SX)$ to hold if the following conditions are met:%

\begin{enumerate}
  \item \label{full-simrel:gexec}
    $G$ is an $\IMMsc$-consistent execution of $\prog$.

  \item \label{full-simrel:sexec}
    $\ES$ is a \weakestmo-consistent event structure of $\prog$.

  \item \label{full-simrel:ex}
    $\SX$ is an extracted subset of $\ES$.

  \item \label{full-simrel:ex-cov-iss}
    The $\ea$-image of $\SX$ is equal to the union of the covered and issued events and the events
    which $\lPO$-precede the issued ones:
    \begin{itemize}
      \item $\fmap{\SX} = \CoveredSet \cup \dom{G.\lPO^? \seq [\IssuedSet]}$
    \end{itemize}

  \item \label{full-simrel:tid-cov-iss}
    The $\ea$-image of the event from the thread $t \in \TidSet$ lies in $\CoveredSet \cup \dom{G.\lPO^? \seq [\IssuedSet]}$.
    \begin{itemize}
      \item $\fmap{\ES.\thread{\TidSet}} \subseteq \CoveredSet \cup \dom{G.\lPO^? \seq [\IssuedSet]}$
    \end{itemize}
    
  \item \label{full-simrel:lab}
    The $\ea$-image of $\ES$'s event has the same thread, type, modifier, and location.
    Additionally, the $\ea$-image of $\SX$'s event which is covered or issued has the same value:
    \begin{enumerate}
      \item $\forall e \in \ES.\lE. \; \ES.\set{\lTID, \lTYP, \lLOC, \lMOD}(e) = G.\set{\lTID, \lTYP, \lLOC, \lMOD}(\ea(e))$
      \item $\forall e \in \SX \cap \fcomap{C \cup I} \ldotp~ \ES.\lVAL(e) = G.\lVAL(\ea(e))$
    \end{enumerate}

  \item \label{full-simrel:po}
    The $\ea$-image of $\ES.\lPO$ is a subset of the $G.\lPO$ relation:
    \begin{itemize}
      \item $\fmap{\ES.\lPO} \subseteq G.\lPO$
    \end{itemize}

  \item \label{full-simrel:cf}
    Identity relation in $G$ corresponds to identity or conflict relation in $\ES$:
    \begin{itemize}
      \item $\fcomap{\mathtt{id}} \subseteq S.\lCF^?$
    \end{itemize}

  \item \label{full-simrel:jf}
    The $\ea$-image of a justification edge is included in paths in $G$ representing observation of the corresponding thread.  
    The $\ea$-image of a justification edge is in $G.\lRF$ if the edge ends either in domain of $\ES.\lRMW$, an acquire access, or followed by an acquire fence.
    Moreover, the $\ea$-image of $\ES.\lJF$ ending in $\SX$ matches the simulation reads-from relation:
    \begin{enumerate}
      \item \label{full-simrel:jf-obs}
        $\fmap{\ES.\lJF} \subseteq G.\lRF^?\seq (G.\lHB\seq [G.\lF^\sco])^? \seq G.\lPSCF^? \seq G.\lHB^?$
      \item \label{full-simrel:jf-rmw}
        $\fmap{\ES.\lJF \seq \ES.\lRMW} \subseteq G.\lRF \seq G.\lRMW$
      \item \label{full-simrel:jf-acq}
        $\fmap{\ES.\lJF \seq (\ES.\lPO \seq [\ES.\lF])^? \seq [\ES.\lE^{^{\sqsupseteq\acq}}]}
          \subseteq G.\lRF \seq (G.\lPO \seq [\ES.\lF])^? \seq [G.\lE^{^{\sqsupseteq\acq}}]$
      \item \label{full-simrel:jf-sim-rf}
        $\fmap{\ES.\lJF \seq [\SX]} \subseteq G.\lSRF(\TC)$
    \end{enumerate}
    Using the last property it is possible to derive that $\fmap{\ES.\lJF \seq [\SX \cap \fcomap{C}]} \subseteq G.\lRF$.

  \item \label{full-simrel:jfe-iss}
    Each write event in $\ES$ which justifies some read event externally
    should be $\ES.\lEW$-equal to a write event in $\SX$ whose $\ea$-image is issued:
    \begin{itemize}
      \item $\dom{\ES.\lJFE} \subseteq \dom{\ES.\lEW \seq [\SX \cap \fcomap{I}]}$
    \end{itemize}

  \item \label{full-simrel:ew-id}
    The $\ea$-image of $\ES.\lEW$ is a subset of the identity relation:
    \begin{itemize}
      \item $\fmap{\ES.\lEW} \subseteq \mathtt{id}$
    \end{itemize}

  \item \label{full-simrel:ew-iss}
    Let $w$ and $w'$ be different events in one $\ES.\lEW$ equivalence class.
    Then, there is $w''$ in this equivalence class s.t. $w''$ is in $\SX$ and $\ea(w'')$ is issued:
    \begin{itemize}
      \item $\ES.\lEW \subseteq (\ES.\lEW \seq [\SX \cap \fcomap{I}] \seq \ES.\lEW)^?$
    \end{itemize}

  \item \label{full-simrel:co}
    The $\ea$-image of $\ES.\lCO$ lies in the reflexive closure of $G.\lCO$.
    Additionally, $\ea$-images of $\ES.\lCO$-edges ending in $\SX \cap \ES.\thread{\TidSet}$
    lay in $G.\lCO$:
    \begin{enumerate}
      \item \label{full-simrel:co-weak}
        $\fmap{\ES.\lCO} \subseteq G.\lCO^?$
      \item \label{full-simrel:co-ex}
        $\fmap{\ES.\lCO \seq [\SX \cap \ES.\thread{\TidSet}} \subseteq G.\lCO$
    \end{enumerate}

  \item \label{full-simrel:rmw}
    The $\ea$-image of $\ES.\lRMW$ is in $G.\lRMW$.
    Vice versa, $G.\lRMW$ ending in the covered set is in the $\ea$-image of $\ES.\lRMW$ ending in $\SX$.
    \begin{enumerate}
      \item $\fmap{\ES.\lRMW} \subseteq G.\lRMW$
      \item $G.\lRMW \seq [\CoveredSet] \subseteq \fmap{\ES.\lRMW \seq [\SX]}$
    \end{enumerate}

  \item \label{full-simrel:release}
    Let $e$, $w$, and $w'$ be events in $\ES$ s.t.
    (i) $\tup{e,w}$ is an $\ES.\lRELEASE$ edge,
    (ii) $w$ and $w'$ is in the same $\ES.\lEW$ equivalence class,
    (iii) $w'$ is in $\SX$, and (iv) $\ea(w')$ is issued.
    Then $e$ is in $\SX$:
    \begin{itemize}
      \item $\dom{\ES.\lRELEASE \seq \ES.\lEW \seq [\SX \cap \fcomap{I}]} \subseteq \SX$
    \end{itemize}
    This property is needed to show that $\dom{\ES.\lHB \setminus \ES.\lPO}$ is included in $\SX$.

  \item Let $r$, $r'$, $w$, and $w'$ be events in $\ES$ s.t.
    (i) $r$ and $r'$ are in immediate conflict and justified from $w$ and $w'$ respectively,
    and (ii) $r'$ is in $\SX$ and its thread is in $\TidSet$.
    Then $\ea(w)$ is $G.\lCO$-less than $\ea(w')$:
    \begin{itemize}
      \item \label{full-simrel:icf}
        $\fmap{\ES.\lJF \seq \ES.\lCFimm \seq
         [\SX \cap \ES.\thread{\TidSet}] \seq \ES.\lJF^{-1}} \subseteq G.\lCO$
    \end{itemize}
    This property is needed to prove \ref{ax:icf-jf} on the simulation step.

  \item \label{full-simrel:state-cov}
    For all $\tid \in \TidSet$ there exists $\pstate$ 
    \sth ${\ES.\lCONT_{\CoveredSet}(\tid) \threadstep{\tid}{}^* \pstate}$ and 
    the thread-local execution graph $\sigma.G$ is equivalent modulo $\lRF$ and $\lCO$ components
    to the restriction of $G$ to the thread $\tid$.

\end{enumerate}

In addition to $\simrel$ we also define a version of the simulation realtion 
which holds during the construction of a certification branch $\simrelCert$. 

We define the relation $\simrelCert(\prog, G, \tup{\CoveredSet, \IssuedSet}, \tup{\CoveredSet', \IssuedSet'}, \ES, \SX, \tid, \SBr, \pstate, \pstate')$ to hold if the following conditions are met:%

\begin{enumerate}

  \item \label{full-simrelcert:simrel}
    $\simrel_{\TidSet \setminus \{ \tid \}}(\prog, G, \tup{\CoveredSet, \IssuedSet}, \ES, \SX)$ holds.

  \item \label{full-simrelcert:trav-step}
    $G \vdash \tup{\CoveredSet, \IssuedSet} \travstep{t} \tup{\CoveredSet', \IssuedSet'}$ holds.

  \item \label{full-simrelcert:thread-steps}
    $\pstate$ and $\pstate'$ are thread states \sth 
    $\pstate'$ is reachable from $\pstate$,
    $\pstate$ corresponds to the $\ES.\lPO$-last event in $\SBr$
    and the partial execution graph of $\pstate'$ contains
    covered and issued events up to the $G.\lPO$-last issued write in the thread $\tid$:
    \begin{enumerate}
      \item \label{full-simrelcert:thread-steps-reach}
        $\pstate \threadstep{\tid}{}^* \pstate'$
      \item \label{full-simrelcert:thread-g}
        $\pstate.G.\lE = \fmap{\SBr}$
      \item \label{full-simrelcert:thread-cov-iss}
        $\pstate'.G.\lE = G.\thread{\tid} \cap (\CoveredSet' \cup \dom{G.\lPO^? \seq [\IssuedSet']})$
    \end{enumerate}

  \item \label{full-simrelcert:br} 
    The set $\SBr$ consists of the events from the thread $\tid$
    and covered prefixes of $\SBr$ and $\SX$ restricted to thread $\tid$ coincide:
    \begin{enumerate}
      \item $\SBr \subseteq S.\thread{\tid}$ 
      \item $\SBr \cap \fcomap{\CoveredSet} = \SX \cap S.\thread{\tid} \cap \fcomap{\CoveredSet}$
    \end{enumerate}

  \item \label{full-simrelcert:lab}
    The partial execution graph of $\pstate'$ assigns 
    same thread identifier, type, location and mode as the full execution graph $G$ does.
    Additionally, it assigns the same value as $G$ to determined events.
    \begin{enumerate}
      \item 
        $\forall e \in \pstate'.G.\lE \ldotp~ \pstate'.G.\set{\lTID, \lTYP, \lLOC, \lMOD}(e) = G.\set{\lTID, \lTYP, \lLOC, \lMOD}(e)$
      \item 
        $\forall e \in \pstate'.G.\lE \cap G.\determined(\tup{\CoveredSet', \IssuedSet'}) \ldotp~
            \pstate'.G.\lVAL(e) = G.\lVAL(e)$
    \end{enumerate}
    
  \item \label{full-simrelcert:jf}
    The $\ea$-image of the $\lJF$ relation ending in $\SBr$ is included in $G.\lSRF(\tup{\CoveredSet', \IssuedSet'})$:
    \begin{itemize}
      \item $\fmap{\ES.\lJF \seq [\SBr]} \subseteq G.\lSRF(\tup{\CoveredSet', \IssuedSet'})$ 
    \end{itemize}

  \item \label{full-simrelcert:ew}
    For every issued event from $\SBr$ there exists an $\ES.\lEW$-equivalent in $\SX$.
    And, symmetrically, every issued event from $\SX$ within the processed part of the certification branch 
    has an $\ES.\lEW$-equivalent in $\SBr$.
    \begin{enumerate}
      \item $\SBr \cap \fcomap{\IssuedSet} \subseteq \dom{\ES.\lEW \seq [\SX]}$ 
      \item $\SX \cap \fcomap{\IssuedSet \cap \pstate.G.\lE} \subseteq \dom{\ES.\lEW \seq [\SBr]}$ 
    \end{enumerate}

  \item \label{full-simrelcert:co}
    The $\ea$-image of $\ES.\lCO$ ending in $\SBr$ lies in $G.\lCO$
    The $\ea$-image of $\ES.\lCO$ ending in $\SX \cap S.\thread{t}$ 
    and not in the processed part of the certification branch lies in $G.\lCO$.
    \begin{enumerate}
      \item $\fmap{\ES.\lCO \seq [\SBr]} \subseteq G.\lCO$
      \item $\fmap{\ES.\lCO \seq [\SX \cap S.\thread{t} \setminus \fcomap{\pstate.G.\lE}]} \subseteq G.\lCO$ 
    \end{enumerate}

  \item \label{full-simrelcert:rmw}
    Each $G.\lRMW$ edge ending in the processed part of the certification branch
    is the $\ea$-image of some $\ES.\lRMW$ edge ending in $\SBr$.
    \begin{itemize}
      \item $G.\lRMW \seq [C' \cap \pstate.G.\lE] \subseteq \fmap{\ES.\lRMW \seq [\SBr]}$
    \end{itemize}

  \item \label{full-simrelcert:icf}
    Suppose $w$, $w'$, $r$, and $r'$ are $\ES$'s events \sth
    (i) $r$ and $r'$ are justified from $w$ and $w'$ respectively,
    and (ii) $r$ and $r'$ are in immediate conflict and belong to thread $t$.
    Then $\ea(w')$ is $G.\lCO$-greater than $\ea(w)$ if either
    $r'$ is in $\SBr$:
    \begin{itemize}
      \item $\fmap{\ES.\lJF \seq \ES.\lCFimm \seq [\SBr] \seq \ES.\lJF^{-1}} \subseteq G.\lCO$ 
    \end{itemize}
    or $r$ is not in $\SBr$ and $r'$ is in $\SX \cap S.\thread{t}$:
    \begin{itemize}
      \item $\fmap{\ES.\lJF \seq [\ES.\lE \setminus \SBr] \seq \ES.\lCFimm \seq [\SX \cap S.\thread{t}] \seq \ES.\lJF^{-1}}
         \subseteq G.\lCO$ 
    \end{itemize}

\end{enumerate}

  \section{From \IMMsc to \ARMe}
\label{app:imm_to_arm}

\begin{figure}[t]
\small
\centering
$\begin{array}{@{}r@{}c@{}l@{\qquad \qquad \qquad \;\;}r@{}c@{{}\approx{}}l@{}}
\compile{\readInst{\rlx}{r}{e}} & {}\approx{} &  ``\texttt{ldr}" & 
\compile{\writeInst{\rlx}{e_1}{e_2}} &&  ``\texttt{str}" \\
\compile{\readInst{\sqsupseteq\acq}{r}{e}} & {}\approx{} & ``\texttt{ldar}"  &
\compile{\writeInst{\sqsupseteq\rel}{e_1}{e_2}} && ``\texttt{stlr}" \\ 
\compile{\fenceInst{\acq}}  & {}\approx{} & ``\texttt{dmb.ld}" &
\compile{\fenceInst{\neq\acq}} && ``\texttt{dmb.sy}" \\
\compile{\incInst{o}{}{r}{e_1}{e_2}{}} & {}\approx{} &
\multicolumn{4}{@{}l@{}}{
  ``\textsf{L:}" \doubleplus {\rm ld}(o) \doubleplus
  {\rm st}(o) \doubleplus ``\texttt{bc  }\textsf{L}"
}\\ 
\compile{\casInst{o}{}{r}{e}{e_\lR}{e_\lW}{}} & {}\approx{} &
\multicolumn{4}{@{}l@{}}{  
  ``\textsf{L:}" \doubleplus {\rm ld}(o) \doubleplus ``\texttt{cmp;bc } \textsf{Le}
  \texttt{;}" \doubleplus {\rm st}(o) \doubleplus
  ``\text{bc  }\textsf{L}\texttt{;}\textsf{Le:}"
} 
\end{array}$ \\
$\begin{array}{@{}l@{}l@{\quad \qquad}l@{}}
{\rm ld}(o) & \defeq o \sqsupseteq \acq \; ? \; ``\texttt{ldaxr;}" \; : \; ``\texttt{ldxr;}" &
{\rm st}(o) \defeq o \sqsupseteq \rel \; ? \; ``\texttt{stlxr.;}" \; : \; ``\texttt{stxr.;}" \\
\end{array}$
\caption{Compilation scheme from \IMMsc to ARMv8.}
\label{fig:compArm}
\end{figure}

The intended mapping of \IMM to \ARMe is presented schematically in \cref{fig:compArm} and follows~\cite{www:mappings}.
Note that acquire and SC loads are compiled to the same instruction (\texttt{ldar})
as well as release and SC stores (\texttt{stlr}). In \ARM assembly RMWs are represented as pairs of instructions---%
\emph{exclusive} load (\texttt{ldxr}) followed by \emph{exclusive} store (\texttt{stxr}),
and these instructions are also have their stronger (SC) counterparts---\texttt{ldaxr} and \texttt{stlxr}.

We use \ARMe declarative model~\cite{ARMv82model} (see also \cite{Pulte-al:POPL18}).%
\footnote{We only describe the fragment of the model that is needed for mapping of \IMMsc,
thus excluding \texttt{isb} fences.}
Its labels are given by:
\begin{itemize}
\item \ARM read label: $\erlab{o_\lR}{\loc}{v}{}$ where $\loc\in\Loc$, $v\in\Val$, $o_\lR\in\set{\rlx,\lQ,\lA}$, and $\rlx \sqsubset \lQ \sqsubset \lA$.
\item \ARM write label: $\ewlab{o_\lW}{\loc}{v}{}$ where $\loc\in\Loc$, $v\in\Val$, $o_\lW\in\set{\rlx,\lL}$, and $\rlx \sqsubset \lL$.
\item \ARM fence label: $\flab{o_\lF}$ where $o_\lF\in\set{\ld,\full}$ and $\ld \sqsubset \full$.
\end{itemize}
In turn, \ARM's execution graphs are defined as \IMMsc's ones,
except for the CAS dependency, $\lRMWDEP$, which is not present in \ARM executions.


The definition of \ARMe-consistency requires the following derived relations
(see~\cite{Pulte-al:POPL18} for further explanations and details):

\begin{align*}
\lOBS &\defeq  \lRFE \cup \lFRE \cup \lCOE  \tag{\emph{observed-by}} \\
\lDOB &\defeq \inarr{(\lADDR \cup \lDATA); \lRFI^? \cup 
	(\lCTRL \cup \lDATA); [\lW]; \lCOI^? \cup 
	\lADDR; \lPO; [\lW] } \tag{\emph{dependency-ordered-before}} \\
\lAOB &\defeq \lRMW \cup [\lW^\isex]; \lRFI; [\lR^{\sqsupseteq\lQ}] \tag{\emph{atomic-ordered-before}} \\
\lBOB &\defeq \inarr{\lPO; [\lDMBSY]; \lPO  \cup 
	{}[\lR]; \lPO; [\lDMBLD]; \lPO  \cup 
	{}[\lR^{\sqsupseteq\lQ}]; \lPO \cup 
	\lPO; [\lW^\lL]; \lCOI^? \cup 
        {}[\lW^\lL]; \lPO; [\lR^\lA]
        \tag{\emph{barrier-ordered-before}} }
\end{align*}

\begin{definition}
\label{def:arm}
An \ARMe execution graph $G_a$ is called \ARMe-consistent if the following hold:
\begin{itemize}
\item $\codom{G_a.\lRF} = G_a.\lR$. 
\item For every location $\loc\in\Loc$, $G_a.\lCO$ totally orders $G_a.\ewlab{}{\loc}{}{}$. 
\item $G_a.\lPO\rst{\lLOC} \cup G_a.\lRF \cup G_a.\lFR \cup G_a.\lCO$ is acyclic.
 \labelAxiom{sc-per-loc}{ax:A-sc_per_loc_arm}
\item $G_a.\lRMW \cap (G_a.\lFRE; G_a.\lCOE)=\emptyset$. 
\item $G_a.\lOBS \cup G_a.\lDOB \cup G_a.\lAOB \cup G_a.\lBOB$ is acyclic. \labelAxiom{external}{ax:external}
\end{itemize}
\end{definition}

We interpret the intended compilation on execution graphs:

\begin{definition}
\label{def:cmp_exec_arm}
Let $G$ be an \IMM execution graph.
An \ARM execution graph $G_a$ \emph{corresponds} to $G$ if the following hold:
\begin{itemize}
\item 
$G_a.\lE = G.\lE$ and $G_a.\lPO = G.\lPO$
\item  
$G_a.\lLAB = \set{ e \mapsto \compile{G.\lLAB(e)} \st e\in G.\lE}$ where:
$$\begin{array}{r@{\;}l@{\qquad\quad}r@{\;}l}
\compile{\erlab{\rlx}{x}{v}{s}} & \defeq \rlab{\rlx}{x}{v}  &  \compile{\ewlab{\rlx}{x}{v}{}} & \defeq \wlab{\rlx}{x}{v} \\
\compile{\erlab{\acq}{x}{v}{s}} & \defeq \rlab{\lQ}{x}{v}  & \compile{\ewlab{\sqsupseteq\rel}{x}{v}{}} & \defeq \wlab{\lL}{x}{v} \\
\compile{\erlab{\sco}{x}{v}{s}} & \defeq \rlab{\lA}{x}{v} \\
\compile{\flab{\acq}} & \defeq \flab{\ld}& \compile{\flab{\rel}} =\compile{\flab{\acqrel}} = \compile{\flab{\sco}} & \defeq \flab{\full}
\end{array}$$
\item
$G.\lRMW = G_a.\lRMW$, $G.\lDATA = G_a.\lDATA$, and $G.\lADDR = G_a.\lADDR$
\\ (the compilation does not change RMW pairs and data/address dependencies)
\item
$G.\lCTRL \suq G_a.\lCTRL$
\\ (the compilation only adds control dependencies)
\item
$[G.\erlab{}{}{}{\isex}] \mathbin{;}G.\lPO \suq G_a.\lCTRL \cup G_a.\lRMW \cap G_a.\lDATA $
\\ (exclusive reads entail a control dependency to any future event, 
except for their immediate exclusive write successor if arose from an atomic increment)
\item
$G.\lRMWDEP \mathbin{;} G.\lPO \suq G_a.\lCTRL$
\\ (CAS dependency to an exclusive read entails a control dependency to any future event)
\end{itemize}
\end{definition}

We state our theorem that ensures \IMMsc-consistency if the corresponding \ARMe
execution graph is \ARMe-consistent.

\begin{theorem}
  \label{thm:imm-to-arm}
Let $G$ be an \IMM execution graph with whole serial numbers ($\lSN[G.\lE] \suq \N$),
and let $G_a$ be an \ARMe execution graph that corresponds to $G$.
Then, \ARMe-consistency of $G_a$ implies \IMMsc-consistency of $G$. 
\end{theorem}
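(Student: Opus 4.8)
The plan is to unfold \IMMsc-consistency of $G$ (\cref{def:model_IMM_SC}) into its two conjuncts --- ordinary \IMM-consistency and acyclicity of $G.\lPSCB \cup G.\lPSCF$ --- and to discharge the first by appealing to prior work, concentrating the new argument on the second. For plain \IMM-consistency I would invoke the compilation theorem from \IMM to \ARMe of Podkopaev \etal~\cite{Podkopaev-al:POPL19}: the correspondence of \cref{def:cmp_exec_arm} only strengthens access modes and only adds control dependencies --- exactly the hypotheses that theorem consumes --- so \ARMe-consistency of $G_a$ already implies \IMM-consistency of $G$. Moreover, since the remark after \cref{def:model_IMM_SC} notes that the \emph{full} \IMM acyclicity constraint already contains $G.\lPSCF$, that existing proof, along the way, exhibits for every edge of $G.\lRFE \cup G.\lPPO \cup G.\lBOB \cup G.\lPSCF$ a witnessing path inside the \ARMe ``external'' relation $\mathsf{ob} \defeq (G_a.\lOBS \cup G_a.\lDOB \cup G_a.\lAOB \cup G_a.\lBOB)^+$ (here $G.\lBOB$ is \IMM's barrier-ordered-before, not to be confused with $G_a.\lBOB$). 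So the whole new content reduces to one lemma: every edge of $G.\lPSCB$ also lies in $\mathsf{ob}$ once same-location ``internal'' fragments are accounted for via \ARMe's \textsc{sc-per-loc} axiom (\cref{def:arm}); granting that, a hypothetical cycle in $G.\lPSCB \cup G.\lPSCF$ collapses into an $\mathsf{ob}$-cycle, contradicting \ARMe-consistency.

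To prove that lemma I would exploit that the scheme of \cref{fig:compArm} realises SC reads as \texttt{ldar} (i.e.\ \ARMe $\lA$-reads), SC writes as \texttt{stlr} (\ARMe $\lL$-writes), and SC fences as \texttt{dmb.sy} (\ARMe $\full$-fences), each of which is maximally ordered by the $\lBOB$ clauses of \cref{def:arm}. Unfolding $G.\lPSCB = ([G.\lE^\sco] \cup [G.\lF^\sco];G.\lHB^?);G.\lSCB;([G.\lE^\sco] \cup G.\lHB^?;[G.\lF^\sco])$, I would push the outermost $\lPO$-steps occurring inside $G.\lSCB$ --- recall $G.\lSCB$ comprises $G.\lPO$, $G.\lPO\rst{\neq G.\lLOC};G.\lHB;G.\lPO\rst{\neq G.\lLOC}$, $G.\lHB\rst{=\lLOC}$, $G.\lCO$ and $G.\lFR$ --- into the SC endpoints: a $\lPO$-edge \emph{out of} an $\lA$-read lies in $\lBOB$ via $[\lR^{\sqsupseteq\lQ}];\lPO$; a $\lPO$-edge \emph{into} an $\lL$-write lies in $\lBOB$ via $\lPO;[\lW^\lL]$; a $\lPO$-edge from an $\lL$-write reaching an SC access lies in $\lBOB$ via $[\lW^\lL];\lPO;[\lR^\lA]$ or again $\lPO;[\lW^\lL]$; and a $\lPO$-step flanked by segments reaching \texttt{dmb.sy} fences lies in $\lBOB$ via $\lPO;[\lDMBSY];\lPO$. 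The $G.\lCO$ and $G.\lFR$ constituents split into external parts ($G_a.\lCOE, G_a.\lFRE \subseteq G_a.\lOBS$) and internal parts, the latter absorbed by \textsc{sc-per-loc} so that same-location $G.\lHB\rst{=\lLOC}$, $G_a.\lCOI$ and $G_a.\lFRI$ fragments never participate in an $\mathsf{ob}$-cycle. The genuinely long case is the middle $G.\lHB$, for which I would need a structural sub-lemma translating $G.\lHB$ --- hence \IMM's synchronisation $G.\lSW$ (both release-sequence-based and fence-based) --- into $\mathsf{ob}$: matching \IMM release writes/fences with $[\lW^\lL]$/$[\lDMBSY]$, acquire reads/fences with $[\lR^{\sqsupseteq\lQ}]$/$[\lDMBLD]$, release sequences with $G.\lPO\rst{\lLOC}$ and $G.\lRFI;G.\lRMW$ (handled by $G_a.\lDOB$, $G_a.\lAOB$ and \textsc{sc-per-loc}), and $G.\lRFE$ with $G_a.\lOBS$ --- which is essentially already present in the \IMM-to-\ARMe proof and can be reused rather than redone.

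The step I expect to be the main obstacle is exactly this reconciliation of the two translations: ensuring that the $G.\lPSCB$-edge translation composes \emph{compatibly} with the existing translation of $G.\lRFE \cup G.\lPPO \cup G.\lBOB \cup G.\lPSCF$, so that a cycle in the \emph{entire} \IMMsc acyclic relation --- which mixes $\lPSCB$, $\lPSCF$ and the \IMM components, since they share $\lHB$-, $\lCO$- and $\lFR$-edges --- maps to a single bona fide $\mathsf{ob}$-cycle rather than to a path that ``falls out'' of $\mathsf{ob}$ at the seams. Concretely, I would factor everything through a uniform lemma of the shape $[G.\lE^\sco \cup G.\lF^\sco];G.\lHB \subseteq \mathsf{ob};G_a.\lBOB^?$ and $G.\lHB;[G.\lE^\sco \cup G.\lF^\sco] \subseteq G_a.\lBOB^?;\mathsf{ob}$ (plus analogues threading through $G.\lECO$), proven once, from which both the $\lPSCB$ and $\lPSCF$ cases drop out. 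Two smaller points to keep in view: the hypothesis $\lSN[G.\lE] \subseteq \N$ serves only the well-foundedness requirements inherited from the Podkopaev \etal\ development and is carried through unchanged; and one must check, from \cref{def:cmp_exec_arm}, that $G_a$ inherits $G$'s $\lRF$ and $\lCO$, so that $\mathsf{ob}$ computed on $G_a$ genuinely contains the images of $G$'s coherence and reads-from edges.
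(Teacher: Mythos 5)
Your proposal follows the same overall strategy as the paper's proof: discharge plain \IMM-consistency by appealing to the existing \IMM-to-\ARMe result of Podkopaev \etal, and reduce the new content to embedding $G.\lPSCB \cup G.\lPSCF$ into \ARMe's external-ordering relation. The difference is in how the internal (same-thread, same-location) coherence edges and the dangling fence halves are handled --- exactly the ``seams'' you flag as the main obstacle. You propose to include $\lPSCB$ directly into $\mathsf{ob} = (G_a.\lOBS \cup G_a.\lDOB \cup G_a.\lAOB \cup G_a.\lBOB)^+$ and to argue that internal $\lCO$/$\lFR$ fragments are ``absorbed'' by \textsc{sc-per-loc} and ``never participate in an $\mathsf{ob}$-cycle''; as stated this absorption step is not yet a proof, since one must still exclude cycles that alternate between genuine $\mathsf{ob}$ edges and internal ones. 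The paper resolves precisely this by a two-step reorganization: it first proves acyclicity of a strictly \emph{extended} relation $G_a.\lOBS' \cup G_a.\lDOB \cup G_a.\lAOB \cup G_a.\lBOB'$, where $\lOBS' \defeq \lRFE \cup \lFR \cup \lCO$ (so internal $\lCO$ and $\lFR$ are admitted outright) and $\lBOB'$ adds the half-fence edges $[\lR];\lPO;[\lF^{\ld}]$, $\lPO;[\lF^{\full}]$ and $[\lF^{\sqsupseteq\ld}];\lPO$; it then concludes with a single inclusion of $\lPSCB \cup \lPSCF$ into the transitive closure of that relation. This is essentially the uniform lemma you sketch at the end, promoted to the primary structure of the argument rather than a patch; adopting it up front eliminates both the absorption hand-wave and your compatibility worry about composing the $\lPSCB$ translation with the pre-existing translation of the \IMM components.
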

\begin{proof}[Outline]
  \IMM-consistency of $G$ follows from~\cite[Theorem 4.5]{Podkopaev-al:POPL19}.
  That is, we only need to show that acyclicity of $G.\lPSCB \cup G.\lPSCF$ holds.
  We start by showing that
  $G_a.\lOBS' \cup G_a.\lDOB \cup G_a.\lAOB \cup G_a.\lBOB'$ is acyclic, where
  \[\begin{array}{@{}l@{}l@{}}
    \lOBS' & {} \defeq \lRFE \cup \lFR \cup \lCO \\
    \lBOB' & {} \defeq \lBOB \cup [\lR]; \lPO; [\lF^{\ld}] \cup \lPO; [\lF^{\full}] \cup [\lF^{\sqsupseteq\ld}]; \lPO
  \end{array}\]
  Then, we finish the proof by showing that $G_a.\lPSCB \cup G_a.\lPSCF$ is included in 
  $(G_a.\lOBS' \cup G_a.\lDOB \cup G_a.\lAOB \cup G_a.\lBOB')^+$. \qedhere
\end{proof}

  \section{From \IMMsc to \TSO}
\label{app:imm_to_tso}

\begin{figure}[t]
\small
\centering
$\begin{array}{@{}r@{{}\approx{}}l@{}r@{{}\approx{}}l@{}}
\compile{\readInst{\neq\sco}{r}{e}} &  ``\texttt{mov}" &
\compile{\fenceInst{\neq\sco}}  & ``" \\
\compile{\writeInst{\neq\sco}{e_1}{e_2}} & ``\texttt{mov}" &
\compile{\fenceInst{\sco}} & ``\texttt{mfence}" \\
\quad \compile{\incInst{o}{}{r}{e_1}{e_2}{}} &
  ``\texttt{(lock) xadd}" \\
\quad \compile{\casInst{o}{}{r}{e}{e_\lR}{e_\lW}{}} &
  ``\texttt{(lock) cmpxchg}" \\
\hline
\textbf{Alt. 1:} \quad
\compile{\readInst{\sco}{r}{e}} & \multicolumn{1}{@{}l|@{}}{``\texttt{mov}"} &
\textbf{Alt. 2:} \quad
\compile{\readInst{\sco}{r}{e}} &``\texttt{mfence;mov}" \\

\compile{\writeInst{\sco}{e_1}{e_2}} &
  \multicolumn{1}{@{}l|@{}}{``\texttt{mov;mfence}"} &
\compile{\writeInst{\sco}{e_1}{e_2}} &``\texttt{mov}" \\ 
\end{array}$ \\
\caption{Compilation scheme from \IMMsc to \TSO.} 
\label{fig:compTSO}
\end{figure}

The intended mapping of \IMMsc to \TSO is presented schematically in \cref{fig:compTSO}.
There are two possible alternatives for compiling SC accesses
(see the bottom of \cref{fig:compTSO}): to compile an SC store to a store followed by a fence
or to compile an SC load to a load preceded by a fence.
Both of the schemes guarantee that in compiled code
there is a fence between every store and load instructions originated from SC accesses.
Regarding compilation schemes of SC accesses,
our proof of the compilation correctness from \IMMsc to \TSO depends only on this property.
That is, in this section, we concentrate only on the compilation alternative which
compiles SC stores using fences.

As a model of the \TSO architecture,
we use a declarative model from~\cite{Alglave-al:TOPLAS14}.
Its labels are given by:
\begin{itemize}
\item \TSO read label: $\erlab{}{\loc}{v}{}$ where $\loc\in\Loc$ and $v\in\Val$.
\item \TSO write label: $\ewlab{}{\loc}{v}{}$ where $\loc\in\Loc$ and $v\in\Val$.
\item \TSO fence label: $\lMFENCE$.
\end{itemize}
In turn, \TSO's execution graphs are defined as \IMMsc's ones.
Below, we interpret the compilation on execution graphs.

\begin{definition}
\label{def:cmp_exec_tso}
Let $G$ be an \IMM execution graph with whole identifiers ($G.\lE \suq \N$).
A \TSO execution graph $G_t$ \emph{corresponds} to $G$ if the following hold:
\begin{itemize}
\item 
$G_t.\lE = G.\lE \setminus G.\lF^{\neq\sco} \cup \set{n \cshift \st n \in G.\ewlab{\sco}{}{}{}}$ 
\\ (non-SC fences are removed)
\item $G_t.\lTID(e) = G.\lTID(\floor{e} \cshift)$ for all $e$ in $G_t$
\item $G_t.\lPO = {}$ \\
  $[G_t.\lE] \seq (G.\lPO \cup
  \set{\tup{a, n \cshift} \mid \tup{a, n} \in G.\lPO^?} \cup
  \set{\tup{n \cshift, a} \mid \tup{n, a} \in G.\lPO}) \seq [G_t.\lE]
$
\\ (new events are added after SC writes)
\item  
$G_t.\lLAB = \set{e \mapsto \compile{G.\lLAB(e)} \st e\in G.\lE\setminus G.\lF^{\neq\sco}} \cup
  \set{ e\mapsto  \lMFENCE \st e\in G_t.\lE \setminus G.\lE}$ where:
$$\begin{array}{r@{\;}l@{\qquad\quad}r@{\;}l@{\qquad\quad}r@{\;}l}
\compile{\erlab{o_\lR}{x}{v}{s}}     & \defeq \prlab{x}{v} &
\compile{\ewlab{o_\lW}{x}{v}{}} & \defeq \pwlab{x}{v} &
\compile{\flab{\sco}}  &\defeq \lMFENCE \\
\end{array}$$
\item  
$G.\lRMW = G_t.\lRMW$, $G.\lDATA = G_t.\lDATA$, and $G.\lADDR = G_t.\lADDR$
\\ (the compilation does not change RMW pairs and data/address dependencies)
\item
$G.\lCTRL; [G.\lE \setminus G.\lF^{\neq\sco}] \suq G_t.\lCTRL$
\\ (the compilation only adds control dependencies)
\end{itemize}
\end{definition}

The following derived relations are used to define the \TSO-consistency predicate.
\begin{align*}
\lPPOTSO &\defeq [\lR \cup \lW] ; \lPO ; [\lR \cup \lW] \setminus [\lW]; \lPO; [\lR] \\
 \lFENCETSO &\defeq [\lR \cup \lW];\lPO; [\lMFENCE]; \lPO;[\lR \cup \lW] \\
 \lIFENCE  & \defeq [\lW]; \lPO; [\dom{\lRMW}] \cup
                    [\codom{\lRMW}]; \lPO; [\lR] \\
 \lHBTSO   & \defeq \lPPOTSO \cup \lFENCETSO \cup \lIFENCE \cup \lRFE
   \cup \lCO \cup \lFR
\end{align*}

\begin{definition}
\label{def:tso}
$G$ is called \emph{\TSO-consistent} if the following hold:
\begin{itemize}
\item $\codom{G.\lRF} = G.\lR$. \labelAxiom{$\lRF$-completeness}{ax:comp}
\item For every location $\loc\in\Loc$, $G.\lCO$ totally orders $G.\ewlab{}{\loc}{}{}$. \labelAxiom{$\lCO$-totality}{ax:total}
\item $\lPO\rst{\lLOC} \cup \lRF \cup \lFR \cup \lCO$ is acyclic.
 \labelAxiom{sc-per-loc}{ax:P-sc_per_loc}
\item $G.\lRMW \cap (G.\lFRE\mathbin{;}G.\lCOE) = \emptyset$.  \labelAxiom{atomicity}{ax:at}
\item $G.\lHBTSO$ is acyclic. \labelAxiom{tso-no-thin-air}{ax:tnta}
\end{itemize}
\end{definition}

Next, we state our theorem that ensures \IMMsc-consistency if the corresponding \TSO
execution graph is \TSO-consistent.

\begin{theorem}
  \label{thm:imm-to-tso}
Let $G$ be an \IMMsc execution graph with whole identifiers ($G.\lE \suq \N$),
and let $G_t$ be an \TSO execution graph that corresponds to $G$.
Then, \TSO-consistency of $G_t$ implies \IMMsc-consistency of $G$. 
\end{theorem}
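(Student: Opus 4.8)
The plan is to follow exactly the two-step strategy used in the proof of \cref{thm:imm-to-arm}. By \cref{def:model_IMM_SC}, \IMMsc-consistency of $G$ is \IMM-consistency of $G$ together with acyclicity of $G.\lPSCB \cup G.\lPSCF$, so it suffices to establish these two facts separately, and only the second requires new work. Throughout we fix the compilation alternative of \cref{def:cmp_exec_tso} that adds a trailing $\lMFENCE$ after each SC write (the dual alternative, which inserts an $\lMFENCE$ before each SC read, is handled symmetrically); the argument only uses that, since the scheme is applied globally, there is an $\lMFENCE$ between every SC write and every SC access that follows it in $G_t.\lPO$.

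For \IMM-consistency of $G$ I would reduce to the existing \IMM-to-\TSO result of Podkopaev \etal~\cite{Podkopaev-al:POPL19}. Let $G'$ be obtained from $G$ by weakening every SC read to mode $\acq$ and every SC write to mode $\rel$. Since an SC write already counts as a release and an SC read as an acquire in the definitions of $\lSW$ and $\lHB$, and since $\lPPO$, $\lBOB$ and $\lECO$ are insensitive to this particular weakening, $G$ and $G'$ have the same derived relations, so \IMM-consistency of $G$ is equivalent to \IMM-consistency of $G'$. Now let $G_t'$ be $G_t$ with the $\lMFENCE$ events inserted after SC writes deleted: removing these fences deletes only the $\lFENCETSO$ edges through them and creates no new $\lPPOTSO$, $\lIFENCE$ or $\lHBTSO$ edges (the only newly direct $\lPO$-edges out of the predecessor of a deleted fence go into reads, and write-to-read edges are excluded from $\lPPOTSO$), hence \TSO-consistency of $G_t$ implies \TSO-consistency of $G_t'$. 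As $G'$ has no SC accesses and $G_t'$ corresponds to its \IMM-to-\TSO compilation, \cite{Podkopaev-al:POPL19} yields \IMM-consistency of $G'$, and therefore of $G$. (The hypothesis $G.\lE \suq \N$ is used, as in \cref{def:cmp_exec_tso}, only to ensure that the inserted fence events, carrying the fractional identifiers $n\cshift$, do not collide with existing events.)

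The remaining task is acyclicity of $G.\lPSCB \cup G.\lPSCF$. I would transport this relation along the correspondence of \cref{def:cmp_exec_tso}: writing $\phi$ for the injection of $G$'s events into $G_t$'s (the identity on non-SC-writes, and sending an SC write $w$ to the pair consisting of $w$ and its trailing $\lMFENCE$), I would prove
\[
  \phi(G.\lPSCB \cup G.\lPSCF) \subseteq (G_t.\lHBTSO \cup G_t.\lPO\rst{\lLOC} \cup G_t.\lRF)^{+},
\]
so that a cycle in $G.\lPSCB \cup G.\lPSCF$ would give a cycle in the right-hand relation, whose acyclicity follows from \TSO-consistency of $G_t$ (the two acyclicity axioms of \cref{def:tso} together rule out cycles in $G_t.\lHBTSO \cup G_t.\lPO\rst{\lLOC}$, since $\lHBTSO$ already contains $\lCO \cup \lFR \cup \lRFE$). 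The inclusion is checked summand by summand of $G.\lSCB$: $G.\lCO$ and $G.\lFR$ land directly in $G_t.\lHBTSO$; $G.\lHB\rst{=\lLOC}$ lands in $(G_t.\lPO\rst{\lLOC} \cup G_t.\lRF \cup G_t.\lCO \cup G_t.\lFR)^{+}$; and for the $G.\lPO$ and $G.\lPO\rst{\neq\lLOC}\seq G.\lHB\seq G.\lPO\rst{\neq\lLOC}$ summands the $\lMFENCE$ following the source SC write (guaranteed by the global application of the scheme between any SC write and any later SC access) turns the corresponding \TSO fragment into a $\lFENCETSO$ edge, hence into $G_t.\lHBTSO$; the $[\lF^\sco]\seq\lHB^?$ prefixes/suffixes in $\lPSCB$, and the whole of $\lPSCF = [\lF^\sco];(\lHB \cup \lHB;\lECO;\lHB);[\lF^\sco]$, are absorbed the same way, noting that every $\lHB$-path starting or ending at an SC fence does so with a $\lPO$-edge, so $\phi$ sends it to an $\lMFENCE$-anchored $\lFENCETSO$ fragment.

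The main obstacle I anticipate is the $G.\lPO\rst{\neq\lLOC}\seq G.\lHB\seq G.\lPO\rst{\neq\lLOC}$ summand of $\lSCB$: one must argue that the single $\lMFENCE$ at the boundary is genuinely enough to order the whole $\lPO\seq\lHB\seq\lPO$ chain, and that the middle $G.\lHB$ — built in the full model from $\lSW$ edges, which on \TSO correspond to $\lRFE$ followed by $\lPO$ — is captured by $(G_t.\lHBTSO)^{+}$ without needing any additional barrier along the way. This is precisely the point, highlighted after \cref{def:model_IMM_SC}, that makes $\lPSCB$ (rather than plain $\lHB$ between SC accesses) the right notion, and getting the edge-tracking bookkeeping right through $\phi$ — which is not injective on events once an SC write is doubled — is the delicate part. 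A secondary subtlety, already flagged above, is the same-location summands of $\lSCB$ forcing the target to be the union of \TSO's coherence and no-thin-air relations rather than $G_t.\lHBTSO$ alone, which in turn requires the small combined-acyclicity lemma for \TSO.
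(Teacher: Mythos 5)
Your overall architecture matches the paper's: reduce \IMMsc-consistency to (i) \IMM-consistency, handled by appeal to the existing \IMM-to-\TSO result, and (ii) acyclicity of $G.\lPSCB \cup G.\lPSCF$, handled by transporting that relation into an extended \TSO ordering whose acyclicity follows from \TSO-consistency, using the key global property that an $\lMFENCE$ separates every SC write from every later SC read. Your route to (i) via weakening SC accesses to release/acquire and deleting the fences is a legitimate variant (the paper does (i) directly by ``standard relational techniques,'' though it uses exactly your transformation style for \POWER), and your preservation argument for deleting fences is sound.

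The gap is in (ii), specifically in your choice of target relation. The relation $(G_t.\lHBTSO \cup G_t.\lPO\rst{\lLOC} \cup G_t.\lRF)^{+}$ contains \emph{no} edges whose source or target is an $\lMFENCE$ event: every component of $\lHBTSO$ is anchored at reads and writes ($\lFENCETSO$ passes \emph{through} fences but does not start or end at them). Consequently your displayed edge-wise inclusion $\phi(G.\lPSCB \cup G.\lPSCF) \subseteq (\cdots)^{+}$ is false for all of $\lPSCF = [\lF^\sco]\seq(\lHB \cup \lHB\seq\lECO\seq\lHB)\seq[\lF^\sco]$ and for the $[\lF^\sco]\seq G.\lHB^?$-anchored summands of $\lPSCB$: an SC fence in $G$ maps to an $\lMFENCE$ in $G_t$, and the adjacent $\lHB$ fragment cannot be expressed as a path in your target relation at all (the ``absorb into a $\lFENCETSO$ fragment'' move composes an edge \emph{into} the fence with an edge \emph{out of} it, which is no longer an edge-wise simulation and does not rescue, e.g., two $\lPO$-adjacent SC fences). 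This is precisely why the paper's proof introduces $G_t.\lEHBTSO \defeq G_t.\lHBTSO \cup \lPO\seq[G_t.\lMFENCE] \cup [G_t.\lMFENCE]\seq\lPO$, proves \emph{that} relation acyclic, and only then establishes $G.\lPSCB \cup G.\lPSCF \subseteq G_t.\lEHBTSO^{+}$. A second, smaller error: your claim that $G.\lHB\rst{=\lLOC}$ lands in $(G_t.\lPO\rst{\lLOC} \cup G_t.\lRF \cup G_t.\lCO \cup G_t.\lFR)^{+}$ is unsound, because $\lHB\rst{=\lLOC}$ restricts only the \emph{endpoints} of an $\lHB$-path to a common location, not the intermediate events; the correct argument must exploit the $[\lE^\sco]$/fence anchoring supplied by $\lPSCB$ together with the inserted $\lMFENCE$s. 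Finally, the combined acyclicity of $\lHBTSO$ with the same-location write-to-read $\lPO$ edges, which you correctly flag as a needed lemma, is itself nontrivial; in the paper's formulation this work is concentrated in the single acyclicity proof for $\lEHBTSO$.
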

\begin{proof}[Outline]
  Since $G_t$ corresponds to $G$, we know that
  $$
  [G.\lW^\sco]; G.\lPO; [G.\lR^\sco] \subseteq G_t.\lPO; [G_t.\lMFENCE]; G_t.\lPO
  $$
  as the aforementioned property of the compilation scheme.
  We show that
  $$
  G_t.\lEHBTSO \defeq G_t.\lHBTSO \cup [G_t.\lMFENCE]; G_t.\lPO \cup [G_t.\lMFENCE]; G_t.\lPO
  $$
  is acyclic. Then, we show that $G.\lPSCB \cup G.\lPSCF$ is included in $G_t.\lEHBTSO^+$.
  It means that acyclicity of $G.\lPSCB \cup G.\lPSCF$ holds, and it leaves us to prove that
  $G$ is \IMM-consistent. That is done by standard relational techniques
  (see \cite{IMMrepo}).
  \qedhere


\end{proof}

  \newpage
\section{From \IMMsc to \POWER}
\label{app:imm_to_power}

\begin{figure}[t]
\small
\centering
$\begin{array}{@{}r@{}c@{{}\approx{}}l@{}r@{}c@{{}\approx{}}l@{}}
\compile{\readInst{\rlx}{r}{e}} &&  ``\texttt{ld}" & 
\compile{\writeInst{\rlx}{e_1}{e_2}} &&  ``\texttt{st}" \\
\compile{\readInst{\acq}{r}{e}} && ``\texttt{ld;cmp;bc;isync}"  &
\compile{\writeInst{\rel}{e_1}{e_2}} && ``\texttt{lwsync;st}" \\ 
\compile{\fenceInst{\neq \sco}}  && ``\texttt{lwsync}" &
\compile{\fenceInst{\sco}} && ``\texttt{sync}" \\
\compile{\incInst{o}{}{r}{e_1}{e_2}{}} &&
\multicolumn{4}{@{}l@{}}{ 
  {\rm wmod}(o) \doubleplus
  ``\textsf{L:}\texttt{lwarx;stwcx.;bc  }\textsf{L}"
  \doubleplus {\rm rmod}(o)
}
\\ 
\compile{\casInst{o}{}{r}{e}{e_\lR}{e_\lW}{}} &&
\multicolumn{4}{@{}l@{}}{
  {\rm wmod}(o) \doubleplus
  ``\textsf{L:}\texttt{lwarx;cmp;bc } \textsf{Le}\texttt{;stwcx.;bc  }\textsf{L}\texttt{;}\textsf{Le:}"
  \doubleplus {\rm rmod}(o)
} \\
\multicolumn{6}{@{}l@{}}{
{\rm wmod}(o) \defeq o \sqsupseteq \rel \; ? \; ``\texttt{lwsync;}" \; : \; ``" \qquad \qquad \qquad \qquad
{\rm rmod}(o) \defeq o \sqsupseteq \acq \; ? \; ``\texttt{;isync}" \; : \; ``" 
} \\
\hline
\textbf{Leading }\texttt{sync}\textbf{:} &
\multicolumn{2}{@{}l|@{}}{} &
\quad
\textbf{Trailing }\texttt{sync}\textbf{:} \\
\compile{\readInst{\sco}{r}{e}} &&
\multicolumn{1}{@{}l|@{}}{``\texttt{sync;ld;cmp;bc;isync}"} &
\compile{\readInst{\sco}{r}{e}} && ``\texttt{ld;sync}" \\

\compile{\writeInst{\sco}{e_1}{e_2}} &&
\multicolumn{1}{@{}l|@{}}{``\texttt{sync;st}"} &
\compile{\writeInst{\sco}{e_1}{e_2}} &&
``\texttt{lwsync;st;sync}" \\ 
\end{array}$
\caption{Compilation scheme from \IMM to \POWER.}
\label{fig:compPower}
\end{figure}


Here we use the same mapping of \IMM to \POWER (see~\cref{fig:compPower}) as
in~\cite{Podkopaev-al:POPL19} for all instructions except for SC accesses.
For the latter, there are two standard compilations schemes~\cite{www:mappings}
presented in the bottom of~\cref{fig:compPower}:
with leading and trailing $\texttt{sync}$ fences.

The next definition presents the correspondence between \IMM execution graphs and
their mapped \POWER ones following the leading compilation scheme
in \cref{fig:compPower} with
elimination of the aforementioned redundancy of SC write compilation.

\begin{definition}
\label{def:cmp_exec_power}
Let $G$ be an \IMM execution graph with whole identifiers 
($G.\lE \suq \N$).
A \POWER execution graph $G_p$ \emph{corresponds} to $G$ if the following hold:
\begin{itemize}
\item
  $\begin{array}[t]{@{}l@{}l@{}l@{}}
G_p.\lE = G.\lE & {} \cup
            \{n \cshift \st n \in {} & (G.\erlab{\sqsupseteq \acq}{}{}{}
                 \setminus \dom{G.\lRMW}) \cup {} \\
            & & \codom{[G.\erlab{\sqsupseteq \acq}{}{}{}]\mathbin{;}G.\lRMW} \} \\
 & {} \cup \{n \dshift \st
                   n \in {} & (G.\lE^{\sqsupseteq \rel} \setminus \dom{G.\lRMW}) \cup {} \\
            & & \dom{G.\lRMW\mathbin{;}[G.\lW^{\sqsupseteq \rel}]}\}
\end{array}$ \\
(new events are added after acquire reads and acquire RMW pairs and
    before SC accesses and SC RMW pairs)
\item $G_p.\lTID(e) = G.\lTID(\floor{e\cshift})$ for all $e$ in $G_p$
\item
$\begin{array}[t]{@{}l@{}l@{}l@{}}
G_p.\lPO = G.\lPO \cup (&(& G_p.\lE \times G_p.\lE) \cap {} \\
 &(& \set{\tup{a, n \dshift} \mid \tup{a, n} \in G.\lPO} \cup {} \\
 && \set{\tup{n \dshift, a} \mid \tup{n, a} \in G.\lPO^?} \cup {} \\
 && \set{\tup{a, n \cshift} \mid \tup{a, n} \in G.\lPO^?} \cup {} \\
 && \set{\tup{n \cshift, a} \mid \tup{n, a} \in G.\lPO}))
\end{array}$
\item  
$\begin{array}[t]{@{}l@{}l@{}}
G_p.\lLAB = &
  \set{e \mapsto \compile{G.\lLAB(e)} \st e\in G.\lE} \cup {} \\
  & \begin{array}[t]{@{}l@{}l@{}}
    \{ n \cshift \mapsto  \flab{\isync} & {} \st {}
          n \cshift \in G_p.\lE \land n \in \N \} \cup {} \\
    \{ n \dshift \mapsto  \flab{\lwsync} & {} \st {}
      \begin{array}[t]{@{}l@{}}
      n \dshift \in G_p.\lE \land n \in \N \land {} \\
      n \nin G.\lE^\sco \cup \dom{G.\lRMW\seq[G.\lW^\sco]} \}
      \cup {}
      \end{array}
     \\
      \{ n \dshift \mapsto  \flab{\sync} & {} \st {}
      \begin{array}[t]{@{}l@{}}
      n \dshift \in G_p.\lE \land n \in \N \land {} \\
      n \in G.\lE^\sco \cup \dom{G.\lRMW\seq[G.\lW^\sco]} \}
      \end{array} \\
   \end{array} \\
\end{array}$ \\
  where:
$$\begin{array}{r@{\;}l@{\qquad\quad}r@{\;}l}
\compile{\erlab{o_\lR}{x}{v}{s}}         & \defeq \prlab{x}{v}                     
& \compile{\flab{\acq}} = \compile{\flab{\rel}} = \compile{\flab{\acqrel}} &\defeq \flab{\lwsync} \\
\compile{\ewlab{o_\lW}{x}{v}{}} & \defeq \pwlab{x}{v}  & 
\compile{\flab{\sco}}  &\defeq \flab{\sync}
\end{array}$$

\item  
$G.\lRMW = G_p.\lRMW$, $G.\lDATA = G_p.\lDATA$, and $G.\lADDR = G_p.\lADDR$
\\ (the compilation does not change RMW pairs and data/address dependencies)

\item
$G.\lCTRL \suq G_p.\lCTRL$
\\ (the compilation only adds control dependencies)

\item 
$[G.\erlab{\sqsupseteq \acq}{}{}{}] \mathbin{;} G.\lPO \suq G_p.\lRMW \cup G_p.\lCTRL$
\\ (a control dependency is placed from every acquire or SC read)

\item
$[G.\erlab{}{}{}{\isex}] \mathbin{;}G.\lPO \suq G_p.\lCTRL \cup G_p.\lRMW \cap G_p.\lDATA $
\\ (exclusive reads entail a control dependency to any future event, 
except for their immediate exclusive write successor if arose from an atomic increment)

\item
$G.\lDATA \mathbin{;} [\codom{G.\lRMW}] \mathbin{;} G.\lPO \suq G_p.\lCTRL$
\\ (data dependency to an exclusive write entails a control dependency to any future event)

\item
$G.\lRMWDEP \mathbin{;} G.\lPO \suq G_p.\lCTRL$
\\ (CAS dependency to an exclusive read entails a control dependency to any future event)
\end{itemize}
\end{definition}

The correspondence between \IMM and \POWER execution graphs which follows the trailing compilation scheme
  may be presented similarly with two main difference.
  First, obviously, SC accesses are compiled to release and acquire accesses followed by SC fences:
  $$\begin{array}[t]{@{}l@{}l@{}}
G_p.\lE = G.\lE & {} \cup
            \{n \cshift \st n \in {}
            \begin{array}[t]{@{}l@{}}
            \{(G.\lE^{\sqsupseteq \acq}{}{}{}
                 \setminus \dom{G.\lRMW}) \cup {} \\
             \codom{[G.\erlab{\sqsupseteq \acq}{}{}{}]\mathbin{;}G.\lRMW}\}
            \end{array} \\
 & {} \cup \{n \dshift \st
                   n \in {}
            \begin{array}[t]{@{}l@{}}
              (G.\lW^{\sqsupseteq \rel} \setminus \dom{G.\lRMW}) \cup {} \\
              \dom{G.\lRMW\mathbin{;}[G.\lW^{\sqsupseteq \rel}]}\}
            \end{array} \\
\\
G_p.\lLAB = &
  \set{e \mapsto \compile{G.\lLAB(e)} \st e\in G.\lE} \cup {} \\
  & \begin{array}[t]{@{}l@{}l@{}}
    \{ n \cshift \mapsto  \flab{\isync} & {} \st {}
      \begin{array}[t]{@{}l@{}}
      n \cshift \in G_p.\lE \land n \in \N \land {} \\
      n \in G.\lR^\acq \cup \codom{[G.\lR^\acq]\mathbin{;}G.\lRMW} \}
      \cup {}
      \end{array}
     \\
      \{ n \cshift \mapsto  \flab{\sync} & {} \st {}
      \begin{array}[t]{@{}l@{}}
      n \cshift \in G_p.\lE \land n \in \N \land {} \\
      n \in G.\lE^\sco \cup \codom{[G.\lR^\sco]\mathbin{;}G.\lRMW} \}
      \cup {}
      \end{array} \\
    \{ n \dshift \mapsto  \flab{\lwsync} & {} \st {}
          n \dshift \in G_p.\lE \land n \in \N^+ \} \\
   \end{array} \\
\end{array}$$
Second, $[G.\erlab{\sqsupseteq \acq}{}{}{}] \mathbin{;} G.\lPO$ has to be included in
$G_p.\lRMW \cup G_p.\lCTRL \cup G_p.\lPO \mathbin{;} [G_p.\flab{\lwsync}] \mathbin{;} G_p.\lPO^?$, not just
in $G_p.\lRMW \cup G_p.\lCTRL$, to allow for elimination of the aforementioned SC read compilation redundancy.


The next theorem ensures \IMMsc-consistency if the corresponding \POWER
execution graph is \POWER-consistent.

\begin{theorem}
  \label{thm:imm-to-power}
Let $G$ be an \IMM execution graph with whole identifiers ($G.\lE \suq \N$),
and let $G_p$ be a \POWER execution graph that corresponds to $G$.
Then, \POWER-consistency of $G_p$ implies \IMMsc-consistency of $G$. 
\end{theorem}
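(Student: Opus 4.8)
\textbf{Proof proposal for \cref{thm:imm-to-power}.}

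The plan is to mirror the structure of the outline given for \cref{thm:imm-to-arm} and \cref{thm:imm-to-tso}: reuse the existing \IMM-to-\POWER result of Podkopaev \etal~\cite[Theorem~4.5]{Podkopaev-al:POPL19} to obtain \IMM-consistency of $G$ for free, and then concentrate all the new work on establishing the single additional \IMMsc-consistency requirement, namely acyclicity of $G.\lPSCB \cup G.\lPSCF$. Since $G_p$ corresponds to $G$ as in \cref{def:cmp_exec_power}, the key structural fact we get from the compilation scheme is that every pair of distinct SC accesses that is related by $G.\lPO'\cup G.\lPO'\seq G.\lHB\seq G.\lPO'$ is separated in $G_p$ by a $\flab{\sync}$ fence on a $\lPO$-path; this is exactly the hypothesis of \cref{thm:sc-access-to-fence}, and it holds because both compilation schemes (leading and trailing $\texttt{sync}$) place a $\texttt{sync}$ adjacent to every SC access.

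First I would fix notation and recall the \POWER-consistency predicate (the $\lHBP$, $\lPPOP$, $\lii/\lic/\lci/\lcc$ machinery and the $\lPROP$ relation), and isolate the sub-relation of \POWER $\lHBP$-like edges that is generated by $\flab{\sync}$ fences; call it $G_p.\lFENCESYNC \defeq \lPO\seq[\flab{\sync}]\seq\lPO$, together with the already-available fact that $G_p$-consistency makes $G_p.\lPROP\cup G_p.\lHBP$ (and hence in particular $\lFENCESYNC$-involving cycles) acyclic. The core lemma to prove is then: the image in $G$ of $G.\lSCB$ restricted to SC-flanked pairs, i.e.\ $([G.\lE^\sco]\cup[G.\lF^\sco]\seq G.\lHB^?)\seq G.\lSCB\seq([G.\lE^\sco]\cup G.\lHB^?\seq[G.\lF^\sco])$, embeds into $(G_p.\lPROP\cup G_p.\lHBP)^+$. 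This is done by case analysis on the five disjuncts of $G.\lSCB$: $G.\lPO$ maps into $G_p.\lPO$ (modulo the shifts $\cshift,\dshift$), $G.\lPO\rst{\neq\lLOC}\seq G.\lHB\seq G.\lPO\rst{\neq\lLOC}$ is handled by the sync barrier that compilation inserts, $G.\lHB\rst{=\lLOC}$ maps via the $\lHBP/\lPROP$ ingredients that already encode $\lHB$ along the same location, and $G.\lCO$, $G.\lFR$ map into the corresponding \POWER $\lCO$, $\lFR$ which are part of $\lPROP$. For the $\lPSCF$ part, $[G.\lF^\sco]\seq(G.\lHB\cup G.\lHB\seq G.\lECO\seq G.\lHB)\seq[G.\lF^\sco]$ embeds into $(G_p.\lPROP\cup G_p.\lHBP)^+$ using the $\texttt{sync}$ fence that an SC fence compiles to, exactly as in the existing \IMM proof of fence handling. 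Chaining these embeddings and invoking \POWER-consistency's acyclicity of $G_p.\lPROP\cup G_p.\lHBP$ yields acyclicity of $G.\lPSCB\cup G.\lPSCF$.

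For the two scheme variants I would factor the argument through the three-step decomposition already announced in \cref{sec:power}: at the \IMMsc level split each SC access into an SC fence plus a release/acquire access (sound by \cref{thm:sc-access-to-fence}), then apply the plain \IMM-to-\POWER correspondence (which has no SC accesses left and is covered by \cite{Podkopaev-al:POPL19}), and finally absorb the redundant $\texttt{lwsync}$/$\texttt{isync}$ next to the $\texttt{sync}$ — a purely local execution-graph simplification that only weakens the \POWER graph and hence preserves the direction of the implication. Concretely this means proving that if $G_p$ is the "redundancy-eliminated" graph of \cref{def:cmp_exec_power} and $G_p'$ is the naive one with the extra $\texttt{lwsync}$, then $G_p$-consistency implies $G_p'$-consistency, which is immediate because $\texttt{sync}$ contributes strictly more $\lHBP/\lPROP$ edges than $\texttt{lwsync}$ on the same $\lPO$ positions.

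The main obstacle will be the bookkeeping around the serial-number shifts $\cshift$ and $\dshift$ and the insertion points of the new fence events in $G_p$: one must check that the embedding of $G.\lPO$, $G.\lHB$, $G.\lECO$ into the $G_p$ relations survives the relabeling $\floor{\cdot}$, $\lfloor\cdot\rfloor\cshift$, and that the newly added $\flab{\isync}$/$\flab{\lwsync}$/$\flab{\sync}$ events land precisely where the SC-pair separation argument needs them — in particular handling the RMW special cases where a fence sits between the exclusive load and exclusive store. This is the same kind of delicate case-analysis that dominates the hardware proofs in \cite{Podkopaev-al:POPL19}, and I expect it to be mechanizable by the standard relational-algebra automation of the \IMM Coq development~\cite{IMMrepo}, but it is where essentially all the effort lies.
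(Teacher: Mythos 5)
Your proposal is essentially correct, and your third paragraph lands on the route the paper actually takes for \POWER: the proof is \emph{entirely} the three-step decomposition via \cref{thm:sc-access-to-fence}, not the direct embedding of $G.\lPSCB \cup G.\lPSCF$ into $(G_p.\lPROP \cup G_p.\lHBP)^+$ that you set up as the ``core lemma'' in your second paragraph. That direct-embedding style is what the paper uses for \TSO and \ARMe, but for \POWER it is deliberately avoided: the paper transforms the \IMM graph until no SC accesses remain and then invokes the existing \IMM-to-\POWER theorem wholesale, so no new \POWER-specific relational reasoning ($\lHBP$, $\lPROP$, $\lii/\lic/\lci/\lcc$) is needed at all. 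Concretely, the paper builds $G'$ from $G$ by inserting SC fences before SC accesses, weakens SC reads/writes to acquire/release to get $G_{\texttt{NoSC}}$ (\IMMsc-consistency of $G$ then follows from \IMM-consistency of $G_{\texttt{NoSC}}$ by \cref{thm:sc-access-to-fence}), and — this is the step your proposal omits — further inserts release fences before release writes and weakens those writes to relaxed, obtaining $G_{\texttt{NoRel}}$ via~\cite[Theorem 4.1]{Podkopaev-al:POPL19}; only then does~\cite[Theorem 4.3]{Podkopaev-al:POPL19} apply, since it is invoked on a graph containing neither SC accesses nor release writes, and $G_p$ corresponds to $G_{\texttt{NoRel}}$ by construction. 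Your handling of the redundant-fence elimination is sound in spirit (the extra $\flab{\lwsync}$ adjacent to a $\flab{\sync}$ contributes no new ordering), though the paper absorbs this into the correspondence definition rather than proving a separate consistency implication between two \POWER graphs. Minor: the prior result for \POWER is Theorem 4.3 of~\cite{Podkopaev-al:POPL19}, not 4.5 (which is the \ARMe one). The trade-off is clear: your direct embedding would make the PSC-acyclicity argument self-contained but forces you to redo delicate \POWER propagation-order reasoning; the paper's transformation chain buys maximal reuse of the mechanized \IMM infrastructure at the cost of a longer sequence of graph surgeries.
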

\begin{proof}[Outline]
  We construct an \IMM execution graph $G'$ by inserting SC fences before SC accesses in $G$.
  We also construct $G_{\texttt{NoSC}}$ from $G'$ by replacing SC write and read accesses of $G'$ with release write and
  acquire read ones respectively.
  Obviously, \IMMsc-consistency of $G$ follows from \IMMsc-consistency of $G'$, which, in turn,
  follows from \IMM-consistency of $G_{\texttt{NoSC}}$ by~\cref{thm:sc-access-to-fence}.
  We construct an \IMM execution graph $G''$ from $G_{\texttt{NoSC}}$ by inserting release fences before release writes,
  and then an \IMM execution graph $G_{\texttt{NoRel}}$ from $G''$ by weakening the access modes of release write events to a relaxed mode.
  As on a previous proof step, \IMM-consistency of $G_{\texttt{NoSC}}$ follows from \IMM-consistency of $G''$, which in turn follows
  from \IMM-consistency of $G_{\texttt{NoRel}}$ by~\cite[Theorem 4.1]{Podkopaev-al:POPL19}.
  
  Thus to prove the theorem we need to show that $G_{\texttt{NoRel}}$ is \IMM-consistent.
  Note that $G_p$---the \POWER execution graph corresponding to $G$---also corresponds to $G_{\texttt{NoRel}}$
  by construction of $G_{\texttt{NoRel}}$. That is, \IMM-consistency of $G_{\texttt{NoRel}}$ follows from
  \POWER-consistency of $G_p$ by~\cite[Theorem 4.3]{Podkopaev-al:POPL19} since $G_{\texttt{NoRel}}$ does not contain
  SC read and write access events as well as release write access events.
  \qedhere
\end{proof}

}{}

\end{document}